\tikzset{>=latex}
\newcommand{\FFF}{\mathbf{F}}
\newcommand{\GGG}{\mathbf{G}}
\newcommand{\GG}[1]{\GGG \ ({#1})}
\newcommand{\FF}[1]{\FFF \ ({#1})}
\newcommand{\PG}[2]{\GGG^{#2}{#1}}
\newcommand{\PF}[2]{\FFF^{#2}{#1}}
\newcommand{\FGfrag}[2]{$\mathbf{F}_{#1} \mathbf{G}_{#2}$}
\newcommand{\atomic}{\mathcal{A}}
\newcommand{\literals}{\mathcal{L}}
\newenvironment{proofsketch}{\proof}{\endproof}
\newcommand{\F}[1]{\ensuremath{\mathbf{F}_{#1}}}
\newcommand{\G}[1]{\ensuremath{\mathbf{G}_{#1}}}
\newcommand{\U}[1]{\ensuremath{\mathbf{U}_{#1}}}
\newcommand{\X}[1]{\ensuremath{\mathbf{X}_{#1}}}
\newcommand{\initstate}{s_0}
\newcommand{\pr}{\mathbb P}
\newcommand{\post}{\mathit{post}}
\newcommand{\runs}{\mathit{Runs}}
\newcommand{\QED}{\hfill\ensuremath{\square}} % quod erat exemplandum
\newcommand{\thmhelperpre}[2]{\newcommand{\theoremlike}[1]{\par\medskip\penalty-250\refstepcounter{theorem}{\bfseries\noindent##1 \ref{#1}.}\itshape}\theoremlike{#2}}
\newcommand{\thmhelperpost}{\par\medskip%
	\renewcommand{\theoremlike}[1]{\par\medskip\penalty-250\refstepcounter{theorem}{\bfseries\noindent##1 \thesection .\thetheorem.}\itshape}%
}
\newenvironment{refcorollary}[1]{\thmhelperpre{#1}{Corollary}}{\thmhelperpost}
\newenvironment{reftheorem}[1]{\thmhelperpre{#1}{Theorem}}{\thmhelperpost}
\newenvironment{reflemma}[1]{\thmhelperpre{#1}{Lemma}}{\thmhelperpost}
\newcommand{\todowidth}{2.5cm}
\newcommand{\todo}[1]{
	\begin{tikzpicture}[remember picture, baseline=-0.75ex]
	\node [coordinate] (inText) {};
	\end{tikzpicture}
	% Make the margin par
	\marginpar{
		\begin{tikzpicture}[remember picture, font=\scriptsize]
		\draw node[draw=black, text width = \todowidth, inner sep=0.1mm] (inNote){#1};
		\end{tikzpicture}
	}
	\begin{tikzpicture}[remember picture, overlay]
	\draw[draw=red]
	([yshift=-0.2cm] inText)	-| (inNote.south);
	\end{tikzpicture}}
\renewcommand{\todo}[1]{}
\begin{document}
\title{The Satisfiability Problem for Unbounded Fragments of Probabilistic CTL
\thanks{This research was funded in part by the Czech Science Foundation grant No.~\mbox{P202/12/G061}, TUM IGSSE Grant 10.06 (PARSEC), and the German Research Foundation (DFG) project 383882557 ``Statistical Unbounded Verification''. We would like to thank Tobias Meggendorfer and the anonymous reviewers for their helpful comments on the draft.}}

\titlerunning{The Satisfiability Problem for Unbounded Fragments of PCTL}%optional, please use if title is longer than one line

\author{Jan K\v ret\'insk\'y \inst{1} \and Alexej Rotar \inst{2}}

\institute{Technical University of Munich, Germany \email{jan.kretinsky@tum.de} \and Technical University of Munich, Germany \email{alexej.rotar@tum.de}}

\authorrunning{J. K\v ret\'insk\'y and A. Rotar}

%\Copyright{Jan K\v ret\'insk\'y and Alexej Rotar}%mandatory, please use full first names. LIPIcs license is "CC-BY";  http://creativecommons.org/licenses/by/3.0/

%\subjclass{Theory of computation $\rightarrow$ Logic $\rightarrow$ Modal and temporal logics}% mandatory: Please choose ACM 2012 classifications from https://www.acm.org/publications/class-2012 or https://dl.acm.org/ccs/ccs_flat.cfm . E.g., cite as "General and reference $\rightarrow$ General literature" or \ccsdesc[100]{General and reference~General literature}. 

%\category{}%optional, e.g. invited paper

%\relatedversion{}%optional, e.g. full version hosted on arXiv, HAL, or other respository/website

\maketitle

\pagestyle{plain}

%\frontmatter

\begin{abstract}
	We investigate the satisfiability and finite satisfiability problem for probabilistic computation-tree logic (PCTL) where operators are not restricted by any step bounds. We establish decidability for several fragments containing quantitative operators and pinpoint the difficulties arising in more complex fragments where the decidability remains open.
%\keywords{temporal logic \and probabilistic verification \and probabilistic computation tree logic \and satisfiability}
\end{abstract}

%\mainmatter

\section{Introduction}
\label{Chapter1}

Temporal logics are a convenient and useful formalism to describe behaviour of dynamical systems.  Probabilistic CTL (PCTL) \cite{HS,HJ:logic-time-probability-FAC} is the probabilistic extension of the branching-time logic CTL \cite{EH82}, obtained by replacing the existential and universal path quantifiers with the probabilistic operators, which allow us to quantify the probability of runs satisfying a given path formula.  At first, the probabilities used were only 0 and 1 \cite{HS}, giving rise to the \emph{qualitative PCTL (qPCTL)}. This has been extended to any values from [0, 1] in \cite{HJ:logic-time-probability-FAC}, yielding the \emph{(quantitative) PCTL} (onwards denoted just \emph{PCTL}). More precisely, the syntax of these logics is built upon atomic propositions, Boolean connectives, temporal operators such as \textbf{X} (``next'') and \textbf{U} (``until''), and the probabilistic quantifier ${\bowtie q}$ where $\bowtie$ is a numerical comparison such as $\leq$ or $>$, and $q\in[0, 1]\cap \mathbb Q$ is a rational constant.  A simple example of a PCTL formula is $\mathit{ok} \U{=1}(\X{\geq0.9} \mathit{finish})$, which says that on almost all runs we reach a state where there is 90\% chance to \textit{finish} in the next step and up to this state \textit{ok} holds true. PCTL formulae are interpreted over Markov chains \cite{Norris:book} where each state is assigned a subset of atomic propositions that are valid in a given state.

In this paper, we study the \emph{satisfiability problem}, asking whether a given formula has a \emph{model}, i.e. whether there is a Markov chain satisfying it. If a model does exist, we also want to construct it. Apart from being a fundamental problem, it is a possible tool for checking consistency of specifications or for reactive synthesis. The problem has been shown EXPTIME-complete for qPCTL in the setting where we quantify over finite models (\emph{finite satisfiability}) \cite{HS,LICS} as well as over generally countable models (\emph{infinite satisfiability}) \cite{LICS}.  The problem for (the general quantitative) PCTL remains open for decades.  We address this question on fragments of PCTL.  In order to get a better understanding of this ultimate problem, we answer the problem for several fragments of PCTL that are
\begin{itemize}
	\item quantitative, i.e.\ involving also probabilistic quantification over arbitrary rational numbers (not just 0 and 1),
	\item step unbounded, i.e.\ not imposing any horizon for the temporal operators.
\end{itemize} 
Besides, we consider models with unbounded size, i.e.\ countable models or finite models, but with no a priori restriction on the size of the state space.  These are the three distinguishing features, compared to other works.  The closest are the following.  Firstly, solutions for the qPCTL have been given in \cite{HS,LICS} and for a more general logic PCTL$^*$ in \cite{LS,KL}.  Secondly, \cite{chakraborty2016satisfiability} shows decidability for \emph{bounded PCTL} where the scope of the operators is restricted by a step bound to a given time horizon.  Thirdly, the \emph{bounded satisfiability problem} is to determine, whether there exists a model of a given size for a given formula. This problem has been solved by encoding it into an SMT problem \cite{bertrand2012bounded}.  There is an important implication of this result.  Namely, if we are able to determine a maximum required model size for some formula, then it follows that the satisfiability of that formula can also be determined.  We take this approach in some of our proofs. Additionally, we use the result of \cite{LICS} that the branching degree (number of successors) for a model of a formula $\phi$ can be bounded by $|\phi|+2$, where $|\phi|$ is the length of $\phi$.

\textbf{Our contribution} is as follows:
\begin{itemize}
	\item We show decidability of the (finite and infinite) satisfiability problem for  several quantitative unbounded fragments of PCTL, focusing on future- and globally-operators (\F{},\G{}).
	\item We investigate the relationship between finite and infinite satisfiability on these fragments.
	\item We identify a fundamental issue preventing us from extending our techniques to the general case.
	We demonstrate this on a formula enforcing a more complicated form of its models.
	This allows us to identify the ``smallest elegant'' fragment where the problem remains open and the solution requires additional techniques.
\end{itemize}

Note that the considered fragments are not that interesting themselves. However, they illustrate the techniques that we developed and how far we can push decidability results when applying only those. Another fragment which might seem simple enough to be reasonable to consider is the pure \textbf{U}-fragment, but despite all efforts, we have not been able to show decidability for any interesting fragment thereof. For this reason, we will not consider general \textbf{U}-operators in this paper. Due to space constraints, the proofs are sketched and then worked out in detail in the Appendix.

\subsection{Further related work}

As for the \emph{non-probabilistic} predecessors of PCTL, the satisfiability problem is known to be EXPTIME-complete for CTL \cite{EH82} as well as the more general modal $\mu$-calculus \cite{BB:temp-logic-fixed-points,FL:PDL-regular-programs}.  Both logics have the small model property \cite{EH82,Kozen:mu-calculus-finite-model}, more precisely, every satisfiable formula $\phi$ has a finite-state model whose size is exponential in the size of $\phi$.  The complexity of the satisfiability problems has been investigated also for fragments of CTL \cite{KV:modular-model-checking} and the modal $\mu$-calculus \cite{HKM:univ-exis-mu-calculus-TCS}.

The satisfiability problem for qPCTL and qPCTL$^*$ was investigated already in the early 80’s \cite{LS,KL,HS}, together with the existence of sound and complete axiomatic systems.  The decidability for qPCTL over countable models also follows from these general results for qPCTL$^*$, but the complexity was not examined until \cite{LICS}, showing it is also EXPTIME-complete, both for finite and infinite satisfiability.

While the decidability of satisfiability is open, there are only few negative results. \cite{LICS} proves \emph{undecidability} of the problem whether for a given PCTL formula there exists a model with a branching degree that is bounded by a given integer, where the branching degree is the number of successors of a state. However, the authors have not been able to extend their proof and show the undecidability for the general problem.

The PCTL \emph{model checking problem} is the task to determine, whether a given system satisfies a given formula, i.e.\ whether it is a model of the formula.  This problem has been studied both for finite and infinite Markov chains and decision processes, see e.g. \cite{CY:probab-verification-JACM,HK:quantitative-mu-calculus-LICS,EY:RMC-SG-equations,EKM:prob-PDA-PCTL-LMCS,BKS:pPDA-temporal}.  The PCTL \emph{strategy synthesis} problem asks whether the non-determinism in a given Markov decision process can be resolved so that the resulting Markov chain satisfies the formula \cite{BGLBC:MDP-controller,KS:MDP-controller,BBFK:Games-PCTL-objectives,BFK:MDP-PECTL-objectives}.

\section{Preliminaries} % Main chapter title
\label{Chapter2}

In this section, we recall basic notions related to (discrete-time) Markov chains \cite{Norris:book} and the probabilistic CTL \cite{HJ:logic-time-probability-FAC}.  Let $\atomic$ be a finite set of atomic propositions.

\subsection{Markov chains}

\begin{definition}[Markov chain]
    A \emph{Markov chain} is a tuple $M = (S, P, \initstate, L)$ where $S$ is a countable set of \emph{states}, $P: S \times S \rightarrow [0,1]$ is the \emph{probability transition matrix} such that, for all $s \in S$, $\sum_{t \in S}{P(s,t)} = 1$, $\initstate\in S$ is the \emph{initial} state, and $L : S \rightarrow 2^{\atomic}$ is a labeling function.
\end{definition}
Whenever we write $M$, we implicitly mean a Markov chain $ (S, P, \initstate,L)$.  The semantics of a Markov chain $M$, is the probability space $(\runs_M,\mathcal F_M,\pr_M)$, where $\runs_M=S^\omega$ is the set of \emph{runs} of $M$, $\mathcal F_M\subseteq 2^{S^\omega}$ is the $\sigma$-algebra generated by the set of cylinders of the form $\mathit{Cyl}_M(\rho)=\{ \pi \in S^{\omega} \mid \rho \text{ is a prefix of } \pi \}$ and the probability measure is uniquely determined \cite{baier2008principles} by $\pr_M(\mathit{Cyl}_M(\rho_0\cdots\rho_n)) := \prod_{0 \leq i <n} P(\rho_i,\rho_{i+1})$ if $\rho_0=\initstate$ and $0$ otherwise.

We say that a state is \emph{reached} on a run if it appears in the sequence; a set of states is reached if some of its states are reached. The immediate successors of a state $s$ are denoted by $\mathit{post}_M(s) := {\{t \in S \mid P(s,t) > 0 \}}$ and the set of states reachable with positive probability is the reflexive and transitive closure $ post^*_M(s)$.  We will write $\pr(\cdot), \post(\cdot)$, and $\post^*(\cdot)$, if $M$ is clear from the context.

The \emph{unfolding} of a Markov chain $M$ is the Markov chain $T_M := (S^+, P',\initstate,L')$ with the form of an infinite tree given by $ P'(\rho s, \rho s s') = P(s,s')$ and $L'(\rho s)=L(s)$.  Each state of $T_M$ maps naturally to a state of $M$ (the last one in the sequence), inducing an equivalence relation $\rho s\sim\rho s'$ iff $s=s'$.  Consequently, each run of $T_M$ maps naturally to a run of $M$ and the unfolding preserves the measure of the respective events.

For a Markov chain $M$, a set $T \subseteq S$ is called \emph{strongly connected} if for all $s,t \in T$, $t \in \post^*(s)$; it is a \emph{strongly connected component (SCC)} if it is maximal (w.r.t. inclusion) with this property.  If, moreover, $\post^*(t)\subseteq T$ for all $t\in T$ then it is a \emph{bottom SCC (BSCC)}. A classical result, see e.g.\ \cite{baier2008principles}, states that the set of states visited infinitely often is almost surely, i.e.\ with probability 1, a BSCC:

\begin{lemma}
    \label{lem:bsccs}
    In every finite Markov chain, the set of BSCCs is reached almost surely.  Further, conditioning on runs reaching a BSCC $C$, every state of $C$ is reached infinitely often almost surely.
\end{lemma}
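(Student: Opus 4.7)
The plan is to prove the two statements separately, using only finiteness of $S$ and the strong Markov property. Throughout I would fix $N=|S|$ and $p_{\min}=\min\{P(s,t) : s,t\in S,\ P(s,t)>0\}$, both well-defined and positive by finiteness of $S$.

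For the first part, I would first observe that from every $s\in S$ there is a finite path of length at most $N$ reaching some BSCC: the condensation of $M$ into SCCs is a finite DAG whose sinks are exactly the BSCCs, so from any vertex one can walk to a sink in at most $N$ steps. Hence from any state, the probability of reaching the union of BSCCs within $N$ steps is at least $p_{\min}^N$. Partitioning the run into consecutive blocks of length $N$ and applying the strong Markov property, the probability of not having reached a BSCC during the first $kN$ steps is at most $(1-p_{\min}^N)^k$, which tends to $0$ as $k\to\infty$. Thus a BSCC is reached almost surely.

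For the second part, I would condition on the event $E_C$ that the run reaches the BSCC $C$, which has positive measure whenever $C$ is reachable from $\initstate$ (otherwise the statement is vacuous). Once in $C$, the run never leaves since $\post^*(t)\subseteq C$ for all $t\in C$. The restriction of $M$ to $C$ is then a finite irreducible Markov chain; by definition of an SCC, every $t\in C$ is reachable from every $s\in C$ via a path of length at most $|C|$, so from any state in $C$ the chance of hitting a chosen $t$ within $|C|$ steps is at least $p_{\min}^{|C|}$. A geometric-decay argument identical to the one above, applied to the post-entry suffix of the run, shows that, conditionally on $E_C$, the event that $t$ is visited only finitely often has probability zero. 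A union bound over the finitely many $t\in C$ yields that almost every conditioned run visits every state of $C$ infinitely often.

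The only subtle point I expect is the book-keeping for the conditioning in the second part: one should invoke the strong Markov property at the first hitting time of $C$ to reduce the conditional statement to an unconditional recurrence statement on the finite irreducible chain restricted to $C$. This is standard and presents no real obstacle; both parts are really the same geometric-decay estimate repackaged.
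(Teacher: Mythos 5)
Your proposal is correct. The paper itself gives no proof of this lemma---it is stated as a classical result and delegated to \cite{baier2008principles}---and your argument is precisely the standard one found there: the condensation of the finite chain into its SCC DAG has the BSCCs as sinks, a uniform lower bound $p_{\min}^{N}$ on reaching them in $N$ steps yields almost-sure reachability by geometric decay, and the same estimate applied after the hitting time of $C$ (via the strong Markov property) gives almost-sure infinite recurrence of each state of the absorbing, irreducible class $C$. No gaps; the only point worth spelling out in a full write-up is the one you already flag, namely that ``visited only finitely often'' is a countable union over $n$ of the null events ``never visited after time $n$.''
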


\subsection{Probabilistic Computational Tree Logic} % Main chapter title
\label{Chapter3}

%The interested reader shall refer to (\cite{baier2008principles}) for
%examples on PCTL, or in-depth discussions on LTL and CTL.  

The definition of probabilistic CTL (PCTL) \cite{HJ:logic-time-probability-FAC} is usually based on the next- and until-operators (\X{}, \U{}).  In this paper, we restrict our attention to the future- and globally operators (\F{}, \G{}), which can be derived from the until-operator.  Further, w.l.o.g.\ we impose the negation normal form and the lower-bound-comparison normal form; for the respective transformations see, e.g., \cite{baier2008principles}.

\begin{definition}[PCTL(\F{},\G{}) syntax and semantics]
    The \emph{formulae} are given by the following syntax:
    % of \emph{state formulae} and \emph{path    formulae}:
	%\begin{align*}
	\[
        \Phi ::= a \mid
                 \neg a \mid
                 \Phi \land \Phi \mid
                 \Phi \lor \Phi \mid
                 \F{\rhd q}\Phi \mid
                 \G{\rhd q}\Phi
                 %\PPP{\Psi}{\bowtie r}\\
%        \Psi ::= &         \F{\Phi} \mid        \G{\Phi}
\]
   % \end{align*}
    where $q \in [0,1]$, $\rhd \in \{ \geq, >\}$, and $a \in \atomic$ is an atomic proposition.  Let $M$ be a Markov chain and $s \in S$ its state.  We define the modeling relation $\models$ inductively as follows
    
    \begin{enumerate}[label=(M\arabic*),align=left]
    	\item \label{def:model.a} $M, s \models a$ iff $a \in L(s)$
    	\item \label{def:model.neg} $M, s \models \neg a$ iff $a \notin L(s)$
    	\item \label{def:model.and} $M, s \models \phi \land \psi$ iff
    	$M,s \models \phi$ and $M,s \models \psi$
    	\item \label{def:model.or} $M, s \models \phi \lor \psi$ iff
    	$M,s \models \phi$ or $M,s \models \psi$
    	\item \label{def:model.F} $M, s \models
    	\F{\rhd q}{\varphi}$ iff $\pr_{M(s)}(\{ \pi \mid
    	\exists i \in \mathbb{N}_0: M,\pi[i] \models \varphi \}) \rhd q$
    	\item \label{def:model.G} $M, s \models
    	\G{\rhd q}{\varphi}$ iff $\pr_{M(s)}(\{ \pi \mid
    	\forall i \in \mathbb{N}_0: M,\pi[i] \models \varphi \}) \rhd q$
    \end{enumerate}
    where $M(s)$ is $M$ with $s$ being the initial state, and $\pi[i]$ is the $i$th element of $\pi$. We say that $M$ is a \emph{model} of $\varphi$ if $M,\initstate \models \varphi$.
\end{definition}

We will denote the set of literals by $\literals := \atomic \cup \{ \neg a \mid a \in \atomic \}$. Instead of the constraint $\geq1$, we often write $=1$. Further, we define the set of all subformulae.  This definition slightly deviates from the usual definition of subformulae, e.g. the one in \cite{LICS}, in that $\neg a \in sub(\phi)$ does not necessarily imply $a \in sub(\phi)$. 

\begin{definition}[Subformulae]
The set $sub(\phi)$ is recursively defined as follows

\begin{itemize}
    \item $\phi \in sub(\phi)$
%    \item if $\neg \psi \in sub(\phi)$ and $\psi \not= a$,
%        then $\psi \in sub(\phi)$
    \item if $\psi \land \xi \in sub(\phi)$ or $\psi \lor \xi \in sub(\phi)$,
        then $\psi, \xi \in sub(\phi)$
    \item if $\F{\rhd q}{\psi} \in sub(\phi)$ or $\G{\rhd q}{\psi} \in sub(\phi)$
        then $\psi \in sub(\phi)$
\end{itemize}
\end{definition}

Next, we introduce the satisfiability problems, which are the main topic of the paper.

\begin{definition}[The satisfiability problems]
    A formula $\phi$ is called \emph{(finitely) satisfiable}, if there is a (finite) model for $\phi$.  Otherwise, it is  (finitely) unsatisfiable.  The (finite) satisfiability  problem is to determine whether a given formula is (finitely) satisfiable.
\end{definition}

Instead of simply writing ``satisfiable'' we sometimes stress the absence of ``finitely'' and write ``generally satisfiable'' for satisfiablity on countable, i.e. finite or countably infinite, models. For some proofs, it is more convenient to consider the unfolding of a Markov chain instead of the original one. As we mentioned already, the measure of events is preserved in the unfolding of a chain. Hence, we can state the following lemma.

\begin{lemma}\label{lem:tree}
	If $M$ is a model of $\phi$ then its unfolding $T_M$ is a model of $\phi$.
\end{lemma}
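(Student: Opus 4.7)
The plan is to prove by structural induction on $\phi$ the following strengthening, suitable for an induction hypothesis: for every subformula $\psi \in sub(\phi)$ and every state $\rho s$ of $T_M$,
\[
T_M, \rho s \models \psi \iff M, s \models \psi.
\]
Applying this to $\psi = \phi$ and $\rho s = \initstate$ (as a length-one sequence) yields the lemma.

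For the base cases, if $\psi = a$ or $\psi = \neg a$, then by the definition of the unfolding $L'(\rho s) = L(s)$, so both sides of the equivalence coincide. The Boolean cases $\psi = \psi_1 \land \psi_2$ and $\psi = \psi_1 \lor \psi_2$ follow immediately by applying the induction hypothesis to $\psi_1$ and $\psi_2$ at the state $\rho s$.

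The interesting cases are $\psi = \F{\rhd q}\varphi$ and $\psi = \G{\rhd q}\varphi$. Here I would use the fact, already asserted in the preliminaries, that the unfolding preserves the measure of events: there is a natural map from runs of $T_M$ starting at $\rho s$ to runs of $M$ starting at $s$ (project each prefix $\rho s s_1 s_2 \cdots$ to its last-element sequence $s s_1 s_2 \cdots$), and this map is measurable and measure-preserving on the cylinder algebra, since the transition probabilities in $T_M$ were defined by $P'(\rho' t, \rho' t t') = P(t, t')$. Under this map, the $i$-th element $\pi[i]$ of a run in $T_M$ (which is itself a sequence of the form $\rho s s_1 \cdots s_i$) corresponds to the state $s_i$ of $M$; by the induction hypothesis applied to $\varphi$, $T_M, \pi[i] \models \varphi$ iff $M, s_i \models \varphi$. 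Hence the events $\{\pi \mid \exists i : T_M, \pi[i] \models \varphi\}$ and $\{\pi \mid \forall i : T_M, \pi[i] \models \varphi\}$ are the preimages of the corresponding events in $M$, and so their probabilities agree. Comparing with $q$ via $\rhd$ then gives the equivalence.

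The only non-routine step is the measure-preservation argument in the temporal cases, but since the excerpt already states that the unfolding preserves the measure of events and this follows by checking cylinders and extending via Carathéodory, I expect no real obstacle. The proof is therefore essentially a careful unpacking of definitions combined with the structural induction outlined above.
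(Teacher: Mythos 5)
Your proof is correct and follows essentially the same route the paper takes: the paper does not spell out a proof but justifies the lemma by the preceding remark that the unfolding preserves the measure of the respective events, which is exactly the key step in your temporal cases. Your structural induction with the strengthened claim $T_M,\rho s\models\psi \iff M,s\models\psi$ is just the careful unpacking of that observation, and it goes through without issue.
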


We say that formulae $\phi,\psi$ are (finitely) equivalent if they have the same set of (finite) models, written $\phi\equiv\psi$ ($\phi\equiv_{\mathit{fin}}\psi$); that they are (finitely) equisatisfiable if they are both (finitely) satisfiable or both (finitely) unsatisfiable; and that $\phi\Rightarrow\psi$ if every model of $\phi$ is also a model of $\psi$.

\section{Results}
In this section we present our results. A summary is schematically depicted in Fig.~\ref{fig:summary}.  We briefly describe the considered fragments; the full formal definitions can be found in the respective sections.  Since already the satisfiability for propositional logic in negation normal form has nontrivial instances only when all the constructs $a,\neg a$ and conjunction are present, we only consider fragments with all three included; see the bottom of the Hasse diagram.  The fragments are named by the list of constructs they use, where we omit the three constructs above to avoid clutter.  Here $1$ stands for $\geq1$ and $q$ stands for $\rhd q$ for all $q\in[0,1]\cap\mathbb Q$.  Further, $\G x(\mathit{list})$ denotes the sub-fragment of $\mathit{list}$ where the topmost operator is $\G x$.  Finally, $\F{q/1}$ denotes the use of $\F q$ with the restriction that inside $\G{}$ only $q=1$ can be used.  

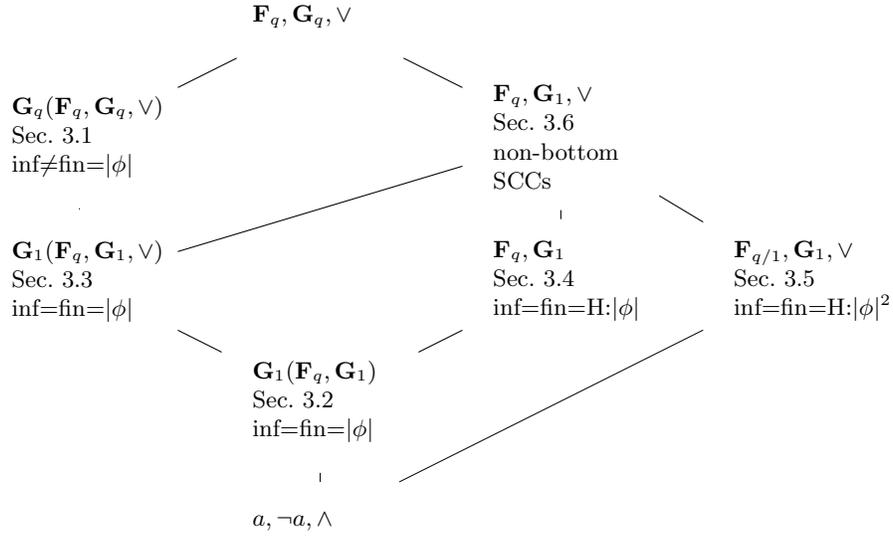
\begin{figure}
\begin{tikzpicture}[scale=0.8,text width=18mm,outer sep=3mm]
\node (gen) at (0,2) {$\F q,\G q,\vee$  };
\node (qual) at (4,0) {$\F q,\G 1,\vee$ Sec.~\ref{ss:qual} non-bottom SCCs};
\node (nodisj) at(4,-2.4) {$\F q,\G1$ Sec.~\ref{ss:nodisj} inf=fin=H:$|\phi|$};
\node (noqf) at(8,-2.4) {$\F{q/1},\G1,\vee$ Sec.~\ref{ss:noqf} inf=fin=H:$|\phi|^2$};
\node (g) at(-4,0) {$\G q(\F q,\G q,\vee)$  Sec.~\ref{ss:g} inf$\neq$fin=$|\phi|$};
\node (g1) at(-4,-2.4) {$\G1(\F q,\G 1,\vee)$ Sec.~\ref{ss:g1} inf=fin=$|\phi|$};
\node (base) at (0,-4.4) {$\G1(\F q,\G 1)$ Sec.~\ref{ss:base} inf=fin=$|\phi|$};
\node (bottom) at (0,-6.4) {$a,\neg a, \wedge$};
\draw 
(gen) edge (qual)
(gen) edge (g)
(qual) edge (nodisj)
(qual) edge (noqf)
(g) edge (g1)
(g1) edge (base)
(qual) edge (g1)
(nodisj) edge (base)
(base) edge[line width=0.01pt] (bottom)
(noqf) edge[line width=0.01pt] (bottom)
;
\end{tikzpicture}
    \caption{Hasse diagram summarizing the satisfiability results for the considered fragments of PCTL(\F{}, \G{}), all containing literals and conjunctions, and some form of quantitative comparisons. The fragments are described by the list of operators they allow (excluding the constructs of the minimal fragment). The subscript denotes the possible constraints on probabilistic operators. $\G{x}(list)$ denotes formulae in the fragment described by $list$ with \G{}-operators at the top-level. fin and inf abbreviate finite and general satisfiability, respectively. fin=inf denotes that the problems are equivalent. H:$x$ denotes that the height of a tree model can be bounded by $x$. By $=x$ we denote that the model size can be bounded by $x$. The $\F q, \G 1, \vee$-fragment might require non-bottom SCCs in finite models}
\label{fig:summary}
\end{figure}

The fragments are investigated in the respective sections.  We examine the problems of the general satisfiability (``inf'') and the finite satisfiability (``fin''); equality denotes the problems are equivalent.  We use two results to prove decidability of the problems.  Firstly, \cite{bertrand2012bounded} shows that given a formula $\phi$ and an integer $n$, one can determine whether or not there is a model for $\phi$ that has at most $n$ states.  Consequently, we obtain the decidability result whenever we establish an upper bound on the size of smallest models.  Here ``$|\phi|$'' denotes the satisfiability of a given $\phi$ on models of size $\leq |\phi|$.  Secondly, \cite{LICS} establishes that for any satisfiable PCTL formula there is a model with branching bounded by $|\phi|+2$.  Consequently, we obtain the decidability result whenever it is sufficient to consider trees of a certain height $H$ (with back edges) since the number of their nodes is then bounded by $(|\phi|+2)^H$. Here ``H:$n$'' denotes that the models can be limited to a height $H\leq n$. %The basis for other results is the behaviour of $\F{}$'s inside a $\G{}$.

While we obtain decidability in the lower part of the diagram, the upper part only treats finite satisfiability, and in particular for $\F q,\G1,\vee$, we only demonstrate that models with more complicated structure are necessary. Namely, the models may be of unbounded sizes for structurally same formulae---i.e. formulae which only differ in the constraints on the temporal operators---or require presence of non-bottom SCCs, see Section~\ref{ss:qual} and the discussion in Section~\ref{sec:disc}.

\subsection{Finite satisfiability for $\G q(\F q,\G q,\vee)$}\label{ss:g}

This section treats \G{}-formulae of the $\F q,\G q$-fragment, i.e.\ of PCTL$(\F{},\G{})$. In particular, it includes \G{>0}-formulae. In general, formulae in this fragment (even without quantified \F{} and \G{}-operators) can enforce rather complicated behaviour \cite{LICS}. Therefore, we will focus on finitely satisfiable formulae. We will see that they can be satisfied by rather simple models.

\begin{definition}
	$\G q(\F q,\G q,\vee)$-formulae are given by the grammar	
	\begin{align*}
	\Phi &::=\G{\rhd q}{\Psi}\\
	\Psi &::= a \mid
	\neg a \mid
	\Psi \land \Psi \mid
	\Psi \lor \Psi \mid
	\F{\rhd q}{\Psi} \mid
	\G{\rhd q}{\Psi}
	\end{align*}
\end{definition}

The main result of this section is that finitely satisfiable formulae in this fragment can be satisfied by models of size linear in $|\phi|$.

\begin{theorem}
	\label{thm:Gq(Fq,Gq,v)-size} Let $\phi$ be a finitely satisfiable $\G q(\F q,\G q,\vee)$-formula.  Then $\phi$ has a model of size at most $|\phi|$.
\end{theorem}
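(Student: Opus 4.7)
The plan is to extract a small BSCC from any finite model of $\phi = \G{\rhd q}\Psi$ and use it as the state space of a compressed model. The crucial observation, enabled by Lemma~\ref{lem:bsccs}, is that inside any BSCC every state is visited infinitely often almost surely; consequently, for every modal subformula with a nontrivial probabilistic constraint the truth value is uniform across the whole BSCC. Concretely, inside a BSCC $C$, any subformula $\F{\rhd p}\chi$ with nontrivial threshold holds iff some state of $C$ satisfies $\chi$, and any $\G{\rhd p}\chi$ with nontrivial threshold holds iff all states of $C$ satisfy $\chi$. I would first dispose of the trivial outer case $\phi = \G{\geq 0}\Psi$, which is satisfied by any single-state chain; otherwise the probability of $\Psi$ holding forever at the initial state of a finite model $M$ of $\phi$ must be positive, and by Lemma~\ref{lem:bsccs} this forces the existence of a reachable BSCC $C$ in $M$ in which every state satisfies $\Psi$.

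Next I would construct the compressed model $M'$ as follows. For each subformula of $\Psi$ of the form $\F{\rhd p}\chi$ with a nontrivial constraint whose uniform truth value in $C$ is \emph{true}, pick one witness $w_\chi \in C$ with $w_\chi \models_M \chi$. Let $S'$ be the set of all chosen witnesses, or any single state of $C$ if this set is empty. Define $M'$ on $S'$ by inheriting the labels from $M$ and adding transitions that make $S'$ strongly connected, say, a single uniform cycle; any element of $S'$ serves as the initial state. The crux is then an induction on $\eta \in \mathrm{sub}(\Psi)$ showing that for every $s \in S'$, $s \models_M \eta$ implies $s \models_{M'} \eta$. Literals, conjunctions, and disjunctions follow immediately from the inductive hypothesis. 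The modal cases use BSCC uniformity together with the witness choices: an $\F$-subformula true in $M$ has its witness sitting in $S'$ and, by IH, the witness still satisfies the argument in $M'$; a $\G$-subformula true in $M$ has its argument satisfied at every state of $C$, hence at every state of $S' \subseteq C$, and by IH again in $M'$. Because $\Psi$ is in NNF, this one-sided monotone transfer of truth is enough to conclude $s \models_{M'} \Psi$ for every $s \in S'$, so $\Psi$ holds forever with probability one from the initial state of $M'$ and $\phi$ is satisfied by $M'$.

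The size bound $|S'| \leq |\phi|$ is then immediate because $|S'|$ is at most the number of $\F$-subformulae of $\Psi$ (or $1$ if there are none), which is bounded by $|\Psi| \leq |\phi|$. The main obstacle I anticipate is the careful case analysis underlying the BSCC-uniformity claim, particularly the boundary values $p = 0$ with $\geq$ versus $>$, together with keeping the monotone induction watertight in the presence of disjunctions, where different states of $C$ may witness $\Psi$ for different reasons yet must all simultaneously satisfy $\Psi$ in the smaller BSCC $S'$.
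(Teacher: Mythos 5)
Your proposal is correct and follows essentially the same route as the paper: find a reachable BSCC of the finite model in which $\Psi$ holds everywhere, exploit the fact that inside a BSCC the truth values of probabilistically constrained subformulae are uniform (which is exactly what the paper's Lemma~\ref{lem:Gq(Fq,Gq,v)-normal} captures by passing to the ``almost surely'' version $\hat\phi$), select one witness state per $\F{}$-subformula, arrange them in a cycle, and transfer truth monotonically by induction on subformulae. The only cosmetic difference is that the paper routes the argument through an explicitly constructed equisatisfiable $\G1(\F1,\G1,\vee)$-formula and uses the sets $S/\xi$ as states, whereas you pick concrete witnesses directly; both give the bound $|\phi|$.
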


Intuitively, we obtain the result from the fact that some BSCC is reached almost surely and every state in a BSCC is reached almost surely, once we have entered one. In infinite models, BSCCs are not reached almost surely and therefore the proofs cannot be extended to general satisfiability. The following lemma and its proof demonstrate how we can make use of the BSCC properties in order to obtain an equisatisfiable formula in a simpler fragment.

\begin{lemma}
	\label{lem:Gq(Fq,Gq,v)-normal}
    Let $\phi$ be a $\G q(\F q,\G q,\vee)$-formula. Then, $\phi$ is finitely equisatisfiable to a $\G1(\F 1, \G 1)$-formula $\phi'$, such that $\phi' \Rightarrow \phi$.
\end{lemma}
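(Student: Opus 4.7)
For the easy direction, I define $\phi'$ by a syntactic translation $\text{tr}$ on $sub(\Psi)$ that strengthens $\Psi$ at every step: $\text{tr}$ is the identity on literals and conjunctions; $\text{tr}(\F{\rhd q'}\chi):=\F{=1}\text{tr}(\chi)$ and $\text{tr}(\G{\rhd q'}\chi):=\G{=1}\text{tr}(\chi)$; and each disjunction $\chi_1\vee\chi_2$ is replaced by $\text{tr}$ of one of its disjuncts, the choice to be read off from $M$ below. Then $\phi':=\G{=1}\text{tr}(\Psi)$ lies in the $\G 1(\F 1,\G 1)$-fragment, and a structural induction on $sub(\Psi)$ gives $\text{tr}(\chi)\Rightarrow\chi$ at every step, hence $\phi'\Rightarrow\phi$; this already yields the easy half, i.e.\ $\phi'$ finitely satisfiable implies $\phi$ finitely satisfiable.

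For the converse, assume $M$ is a finite model of $\phi$. The degenerate case $\rhd=\geq,\ q=0$ is trivial (take $\phi':=\G{=1}a$ for any $a\in\atomic$), so I assume the event ``always $\Psi$'' has positive probability in $M$. By Lemma~\ref{lem:bsccs} there is then a BSCC $C$ of $M$, reached from $\initstate$ along a path where $\Psi$ holds at every state, and every state of $C$ itself satisfies $\Psi$. The crucial observation is that almost surely every state of $C$ is visited infinitely often, so for $q'>0$ the formulae $\F{\rhd q'}\chi$ and $\G{\rhd q'}\chi$ coincide at every $c\in C$ with $\F{=1}\chi$ and $\G{=1}\chi$, respectively (``some'' and ``every'' state of $C$ satisfies $\chi$). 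This legitimates replacing all quantitative bounds by $=1$ inside $C$.

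The main obstacle is to commit to disjuncts so that $\phi'$ admits a finite model. My plan is: fix a state $c^*\in C$ and commit every disjunction in $\text{tr}$ to the disjunct satisfied at $c^*$. I then build $M'$ as a BSCC starting from those states of $C$ that satisfy all the committed disjuncts (a non-empty set containing $c^*$) and, for every $\F{=1}\text{tr}(\chi)$ occurring in $\text{tr}(\Psi)$, I adjoin a witness state for $\text{tr}(\chi)$ obtained by recursively applying the construction to $\chi$ and an appropriate witness in $C$; the resulting finite state set is equipped with any strongly connected transition structure. Verification is then direct: literals hold by the chosen labels, $\G{=1}$-subformulae by BSCC uniformity, and $\F{=1}$-subformulae by the adjoined witnesses. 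The subtle part, and the main technical challenge, is to argue that a $c^*$ exists for which no contradiction arises between the ambient literals committed at the top of $\text{tr}(\Psi)$ and the literal requirements of the adjoined $\F$-witnesses; this reduces to the fact that $\Psi$ is uniformly satisfied on $C$ in $M$, since any such conflict would already force a state of $C$ to fail $\Psi$, and therefore a suitable $c^*$ can always be found among the states of $C$.
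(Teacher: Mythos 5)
Your overall strategy (pick a BSCC that uniformly satisfies $\Psi$, observe that inside a BSCC the quantitative bounds $\rhd q'$ with $q'>0$ collapse to $=1$) is the same as the paper's. The paper, however, does \emph{not} attempt to eliminate disjunctions: its translation $\hat{\cdot}$ keeps $\vee$ intact and only rewrites $\F{\rhd q}$ and $\G{\rhd q}$ to $\F{=1}$ and $\G{=1}$, and the appendix accordingly restates the target fragment as $\G1(\F1,\G1,\vee)$ (the omission of $\vee$ in the main-text statement appears to be a typo). The proof is then a one-pass induction showing $M,t\models\xi$ implies $M,t\models\hat\xi$ for all $t$ in the chosen BSCC, so the original model restarted at $t$ already witnesses satisfiability of $\hat\phi$; no new chain has to be built at this stage (that happens only in the proof of the size theorem).

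The genuine gap is in your disjunction-elimination step. Committing every disjunction to ``the disjunct satisfied at $c^*$'' is not well defined for disjunctions nested under temporal operators, and, more importantly, it can produce an unsatisfiable $\phi'$ even though $\phi$ has a finite model. Take $\phi=\G{=1}\bigl((a\wedge\F{>0}\neg a)\vee(\neg a\wedge\F{>0}a)\bigr)$, satisfied by the two-state BSCC alternating between an $a$-state and a $\neg a$-state. Every state of that BSCC satisfies the disjunction, but via \emph{different} disjuncts, so no single commitment survives the outer $\G{=1}$: both $\G{=1}(a\wedge\F{=1}\neg a)$ and $\G{=1}(\neg a\wedge\F{=1}a)$ are unsatisfiable. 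This directly refutes your closing claim that ``any such conflict would already force a state of $C$ to fail $\Psi$'': uniform satisfaction of a disjunction on $C$ does not yield a uniformly satisfied disjunct. To repair the proof you should either keep disjunctions in $\phi'$ (landing in $\G1(\F1,\G1,\vee)$, as the paper does), or devise a genuinely different, non-local elimination of $\vee$ --- for the example above one would need something like $\G{=1}(\F{=1}a\wedge\F{=1}\neg a)$, which is not obtained by choosing disjuncts at all. The rest of your argument (the implication $\phi'\Rightarrow\phi$, the existence of a BSCC all of whose states satisfy $\Psi$, and the collapse of quantitative bounds inside it) is sound, modulo the usual caveat that $\F{\geq 0}$- and $\G{\geq 0}$-subformulae are vacuous and must be discharged separately, an issue the paper's own induction also glosses over.
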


\begin{proofsketch}
    Write $\phi$ as $\G{\rhd q}{\psi}$. Assume that we have a finite model $M$ for $\G{\rhd q}{\psi}$. Intuitively, we can select a BSCC that satisfies $\G{=1}{\psi}$. We know that there is a BSCC because we are dealing with a finite model. We also know that there is at least one BSCC satisfying our formula, for otherwise $M$ would not be a model for it. In a BSCC, every state is reached almost surely from every other state by Lemma~\ref{lem:bsccs}. Hence, we can select exactly one state for each \F{}-subformula which satisfies that formula's argument. Then we can create a new BSCC from these states, arranging them, e.g., in a circle. This BSCC models $\G{=1}{\hat{\psi}}$, where $\hat{\psi}$ replaces all probabilistic operators with their ``almost surely'' version.  Hence, we have created a model for a $\G1(\F 1, \G 1)$ formula from a model for $\G{\rhd q}{\psi}$. The opposite direction follows from the fact that $\G{=1}{\hat{\psi}} \Rightarrow \G{\rhd q}{\psi}$.
\QED\end{proofsketch}

Note that the transformation does not produce an equivalent formula. Hence, we cannot replace an occurrence of such a formula in a more complex formula. For instance, the formula $\G{\geq 1/2}\neg a \wedge \F{\geq 1/2}a$ is satisfiable, whereas $\G{=1}\neg a \wedge \F{\geq 1/2} a$ is not. The proof does not work for equality because we are selecting one BSCC while ignoring the rest. This example demonstrates why we cannot ignore certain BSCCs in general. Using the above result, it is easy to prove Theorem \ref{thm:Gq(Fq,Gq,v)-size}.

\begin{proofsketch}[Proof Sketch of Theorem \ref{thm:Gq(Fq,Gq,v)-size}]
    This follows immediately from the proof of Lemma \ref{lem:Gq(Fq,Gq,v)-normal}. The BSCC that we have created has at most as many states as there are \F{}-subformulae, which is bounded by $|\phi|$.
\QED\end{proofsketch}

\begin{example}
    Consider the formula 
    \begin{equation}
        \label{form:Gq(Fq,Gq)}
        \phi := \G{\geq 1/2}{(\F{\geq 1/3}{a} \land \F{\geq 1/3}{\neg a})}.
    \end{equation}

    The large Markov chain in Figure \ref{fig:ex-model-GqFq} models $\phi$. Unlabeled arcs indicate a uniform distribution over all successors. It is clear that the model is unnecessarily complicated. After reducing it according to Lemma~\ref{lem:Gq(Fq,Gq,v)-normal}, we obtain the smaller Markov chain on the right.  
    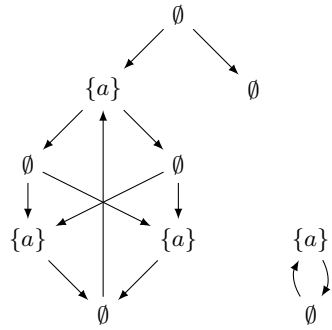
\begin{figure}
    	\centering
        \subfloat{
            \begin{tikzpicture}[bend angle=45]
                \node (s0) at (0,0) {$\emptyset$};
                \node (s1) at (-1,-1) {$\{a\}$};
                \node (s2) at (1,-1) {$\emptyset$};
                \node (s3) at (-2,-2) {$\emptyset$};
                \node (s4) at (0,-2) {$\emptyset$};
                \node (s5) at (-2,-3) {$\{a\}$};
                \node (s6) at (0,-3) {$\{a\}$};
                \node (s7) at (-1,-4) {$\emptyset$};
                \draw [->] (s0) to (s1);
                \draw [->] (s0) to (s2);
                \draw [->] (s1) to (s3);
                \draw [->] (s1) to (s4);
                \draw [->] (s3) to (s5);
                \draw [->] (s3) to (s6);
                \draw [->] (s4) to (s5);
                \draw [->] (s4) to (s6);
                \draw [->] (s5) to (s7);
                \draw [->] (s6) to (s7);
                \draw [->] (s7) to (s1);
            \end{tikzpicture}
        }
        \subfloat{
            \begin{tikzpicture}[auto,bend angle=30]
                \node (s0) at (0,0) {$\emptyset$};
                \node (s1) at (0,1) {$\{a\}$};
                \draw [->,bend left] (s0) to (s1);
                \draw [->,bend left] (s1) to (s0);
            \end{tikzpicture}
        }
        \caption{A large and a small model for Formula \eqref{form:Gq(Fq,Gq)}}
        \label{fig:ex-model-GqFq}
    \end{figure}
\end{example}

The example below shows that satisfiability is not equivalent to finite satisfiability for this fragment, and that the proposed transformation does not preserve equisatisfiability over general models.  The decidability of the general satisfiability thus remains open here.

\begin{example}
    Note that we made use of the BSCC properties for the proofs of this subsection, such as that some BSCC is reached almost surely. Since this is only the case for finite Markov chains, our transformation only holds for finite satisfiability.  If we consider the general satisfiability problem, then the equivalent of Lemma~\ref{lem:Gq(Fq,Gq,v)-normal} is not true. For instance, the formula
    \begin{equation}
        \label{form:infinite}
        \phi := \G{>0}{(\neg a \land \F{>0}{a} )}
    \end{equation}
    is satisfiable, but requires infinite models, as pointed out in \cite{LICS}. One such model is given in Figure \ref{fig:infinite-model}. Observe that the single horizontal run has measure greater than $0$. Now consider
    \[
        \hat{\phi} := \G{=1}{(\neg a\land\F{=1}{a} )}
    \]
    Obviously, this formula unsatisfiable. Hence, in this case $\phi$ is not
    equisatisfiable to $\hat{\phi}$.

    \begin{figure}
        \centering
        \begin{tikzpicture}[auto]
            \node (first top) at (0,2) {$\emptyset$};
            \node (second top) at (2,2) {$\emptyset$};
            \node (third top) at (4,2) {$\emptyset$};
            \node (fourth top) at (6,2) {$\emptyset$};
            \node (first bottom) at (0,0) {$\{a\}$};
            \node (second bottom) at (2,0) {$\{a\}$};
            \node (third bottom) at (4,0) {$\{a\}$};
            \node (fourth bottom) at (6,0) {$\{a\}$};
            \node (first phantom) at (7,2) {};
            \node (second phantom) at (8,2) {};
            \draw [->] (first top) to node {$1/2$} (second top);
            \draw [->] (second top) to node {$3/4$} (third top);
            \draw [->] (third top) to node {$7/8$} (fourth top);
            \draw [dotted] (fourth top) to (first phantom);
            \draw [->] (first top) to node {$1/2$} (first bottom);
            \draw [->] (second top) to node {$1/4$} (second bottom);
            \draw [->] (third top) to node {$1/8$} (third bottom);
            \draw [->] (fourth top) to node {$1/16$} (fourth bottom);
            \draw [->,loop right] (first bottom) to node {$1$} (first bottom);
            \draw [->,loop right] (second bottom) to node {$1$} (second bottom);
            \draw [->,loop right] (third bottom) to node {$1$} (third bottom);
            \draw [->,loop right] (fourth bottom) to node {$1$} (fourth bottom);
        \end{tikzpicture}
        \caption{An infinite model for Formula \eqref{form:infinite}}
        \label{fig:infinite-model}
    \end{figure}
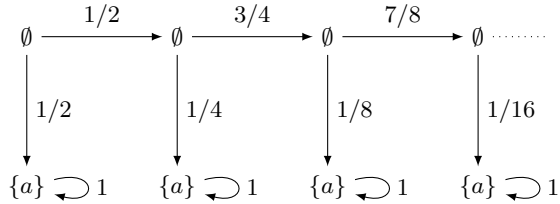
\end{example}

\subsection{Satisfiability for $\G1(\F q,\G 1)$}\label{ss:base}
\label{sec:G1(Fq,G1)}
This section treats \G{}-formulae of a fragment where $\G{\rhd q}$ only appears with $q=1$ and there is no disjunction.  The results are later utilized in a richer fragment in Section~\ref{sec:Fq,G1}. In fact, the main result of this section is an immediate consequence of the main theorem of Section~\ref{sec:G1(Fq,G1,v)}. Still, the results are interesting themselves as they show some properties of models for formulae in this fragment which do not apply in the generalized case.

\begin{definition}
	$\G1(\F q, \G 1)$-formulae are given by the grammar
    \begin{align*}
        \Phi &::= \G{=1}{\Psi}\\
        \Psi &::= a \mid
                 \neg a \mid
                 \Psi \land \Psi \mid
                 \F{\rhd q}{\Psi} \mid
                 \G{=1}{\Psi}
    \end{align*}
\end{definition}

We prove that satisfiable formulae of this fragment are satisfiable by models of linear size and thus also finitely satisfiable.

\begin{theorem}
	\label{thm:G1(Fq,G1)-size}
    Let $\phi$ be a satisfiable $\G1(\F q, \G 1)$-formula.  Then $\phi$ has a model of size at most $|\phi|$.
\end{theorem}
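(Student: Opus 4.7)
My plan is to construct a small model of $\phi$ of size at most $|\phi|$ from any (possibly infinite) model by collecting one witness state per \F{}-subformula of $\phi$ and gluing them into a single cycle. The key structural fact is that $\phi = \G{=1}\psi$ forces every state reachable with positive probability (in any model) to satisfy the invariant $\psi$, so all states in a satisfying model share the same uniform obligations.

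First I would normalize using the equivalence $\G{=1}(\alpha \land \G{=1}\beta) \equiv \G{=1}(\alpha \land \beta)$ to flatten the \G{=1}-operators appearing at the top-level conjunction of $\psi$, leaving $\psi$ as a conjunction of literals and \F{}-subformulas (with possibly further \G{=1}-operators still nested inside the \F{}-arguments). Then I enumerate the \F{}-subformulas of $\phi$ as $F_1, \ldots, F_k$ with respective arguments $\chi_1, \ldots, \chi_k$, noting that $k \leq |\phi|$. Taking a tree model $T$ of $\phi$ via Lemma~\ref{lem:tree}, I select witnesses $t_1, \ldots, t_k$ inductively along a single descending branch of $T$: $t_1$ is any reachable state satisfying $\chi_1$ (which exists because $F_1$ is enforced somewhere in $T$ and its probability bound forces a positive-probability $\chi_1$-witness), and given $t_i$, I use the fact that $t_i$ satisfies $\psi$ and hence $F_{i+1}$ to pick $t_{i+1}$ as a descendant of $t_i$ in $T$ satisfying $\chi_{i+1}$. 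I then build $M'$ as the directed cycle $t_1 \to t_2 \to \cdots \to t_k \to t_1$ with uniform transitions, keeping each $t_i$'s labeling from $T$.

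Verification is then structural. In $M'$ every state reaches every other with probability $1$, so every obligation $F_i$ is discharged by the mere presence of $t_i$: for $q_i < 1$ probability $1$ exceeds $q_i$, and for $q_i = 1$ the only admissible comparison is $\geq$, which $1$ satisfies. Each literal obligation at a state is met because $t_i$ retained its $T$-labels, which were already consistent with $\psi$ and with $\chi_i$. The main obstacle I anticipate is the nested \G{=1}-obligations: a $\G{=1}\xi$-subformula of some $\chi_i$ forces $\xi$ at every state reachable from $t_i$ in $M'$, i.e., at every $t_j$. This is where the descending-branch choice pays off: in $T$, each such $t_j$ is a descendant of $t_i$, hence already satisfies $\xi$ by the $\G{=1}\xi$-obligation at $t_i$ in $T$, and this property is inherited by $M'$ since the labels were copied. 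As the paper remarks, the theorem is also an immediate corollary of the main result of Section~\ref{sec:G1(Fq,G1,v)}, which additionally handles disjunction.
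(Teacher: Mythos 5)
There is a genuine gap in the handling of the nested $\G{=1}$-obligations, and it is precisely the point your descending-branch device fails to address. In $T$ the witnesses are chosen so that each $t_{i+1}$ is a descendant of $t_i$; hence a $\G{=1}\xi$ established at $t_i$ propagates only to the $t_j$ with $j \geq i$. But in the cycle $M'$ the state $t_i$ also reaches every \emph{earlier} witness $t_j$ with $j < i$, and those are \emph{ancestors} of $t_i$ in $T$; nothing forces them to satisfy $\xi$. Concretely, take $\phi = \G{=1}(\F{>0}{a} \land \F{>0}{\G{=1}{b}})$ and the model $s_0 \to s_1$ with $L(s_0)=\{a\}$, $L(s_1)=\{a,b\}$ and a self-loop on $s_1$; this is a model of $\phi$. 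Your procedure may legitimately select $t_1=s_0$ (a witness for $\chi_1 = a$) and $t_2=s_1$ (a descendant witness for $\chi_2 = \G{=1}{b}$). In the resulting two-state cycle, $s_1$ reaches $s_0$, which violates $b$, so no state of $M'$ satisfies $\G{=1}{b}$ and $\F{>0}{\G{=1}{b}}$ fails everywhere: the construction outputs a non-model.

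The missing idea is the paper's Lemma~\ref{lem:G1(Fq,G1)-allG}: \emph{before} selecting any \F{}-witnesses, descend to a single state $s$ that satisfies \emph{all} $\G{=1}$-subformulae of $\phi$ simultaneously. Such a state exists because, by Lemma~\ref{lem:G1(Fq,G1)-subformulae}, each $\G{=1}$-subformula is satisfied at some reachable state, and once a $\G{=1}$-formula holds it persists along all successors, so the set of satisfied $\G{=1}$-subformulae grows monotonically along a suitably chosen path. Only then are the witnesses for the \F{}-arguments chosen inside $\mathit{post}^*(s)$; since every such witness lies below $s$, it satisfies the argument of every nested $\G{=1}$, and the cycle then validates all $\G{=1}$-subformulae at every state. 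Your top-level flattening $\G{=1}(\alpha\land\G{=1}{\beta})\equiv\G{=1}(\alpha\land\beta)$ does not reach the $\G{=1}$s nested inside \F{}-arguments, which are exactly the problematic ones. The remainder of your construction (one witness per \F{}-subformula, a cycle of size at most $|\phi|$, almost-sure reachability discharging each $\F{\rhd q}$) matches the paper's argument.
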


The idea here is that we can find a state which behaves similarly to a BSCC (even in infinite models) in that it satisfies all \G{}-subformulae. We can then use this state's successors to construct a small model. The outline of the proof is roughly as follows: First we show that from every state and for every subformula we can find a successor that satisfies this subformula. Using this, we can show that there is a state that satisfies all \G{}-subformulae.

\begin{lemma}
	\label{lem:G1(Fq,G1)-subformulae}
    Let $\phi$ be a satisfiable $\G1(\F q, \G 1)$-formula and $M$ its model. Then, for every $\psi \in \mathit{sub}(\phi)$, and $s \in S$, there is a state $t \in \mathit{post}^*(s)$, such that $M,t \models \psi$.
\end{lemma}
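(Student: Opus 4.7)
\begin{proofsketch}
The plan is to run an induction on the depth of $\psi$ as a subformula of $\phi$. Before starting, I would make two harmless reductions: restrict $M$ to the states reachable from $\initstate$ (the lemma is only meaningful there, and the restricted chain still models $\phi$), and assume that no trivial $\F{\geq 0}$-subformula occurs in $\phi$, since such a subformula is tautological and can be eliminated without affecting any other clause of $\phi$.

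For the base case $\psi = \phi = \G{=1}\chi$, I would invoke the standard Markov-chain fact that a measure-one invariance property descends to any state reached with positive probability: since $M,\initstate \models \G{=1}\chi$ and $s$ is reachable from $\initstate$, we get $M,s \models \G{=1}\chi = \phi$, so $t := s$ witnesses the claim.

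For the inductive step, let $\psi'$ be the unique parent of $\psi$ in the parse tree of $\phi$. The induction hypothesis applied at depth one less yields $t' \in \post^*(s)$ with $M,t' \models \psi'$, and I case-split on the shape of $\psi'$. If $\psi' = \psi_1 \land \psi_2$ with $\psi \in \{\psi_1,\psi_2\}$, or if $\psi' = \G{=1}\psi$, then $M,t' \models \psi$ already (the latter by evaluating clause (M6) at index $0$), so $t := t'$ works. If $\psi' = \F{\rhd q}\psi$, then the set of runs from $t'$ that visit a $\psi$-state has strictly positive measure, so at least one such run exists and contains some state $t \in \post^*(t') \subseteq \post^*(s)$ with $M,t \models \psi$.

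The main subtlety, and what makes the fragment's restriction to $\G{=1}$ indispensable, is the $\G{=1}\psi \Rightarrow \psi$ move in the inductive case: for $\G{\geq q}\psi$ with $q<1$ this implication fails in general (a statement about measure one over paths does not force $\psi$ to hold at the present state), and the induction would collapse. The analogous mild technicality on the $\F$-side---needing $q>0$ or strict $>$ in $\F{\rhd q}\psi$ to actually extract a witness---is what the preprocessing removal of trivial $\F{\geq 0}$ is there to handle.
\end{proofsketch}
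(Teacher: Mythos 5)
Your proof is correct and follows essentially the same route as the paper's: an induction along the chain of immediate subformulae from $\phi$ down to $\psi$, with the same three-way case split on the parent operator (conjunction, $\G{=1}$, $\F{\rhd q}$). In fact you are slightly more careful than the paper, which silently passes over both the unreachable-state issue and the degenerate $\F{\geq 0}$ case that you preprocess away.
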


\begin{proofsketch}
    This follows from the fact that we do not allow disjunctions in this fragment. We apply induction over the depth of a subformula $\psi$. If the formula is $\phi$ itself, then there is nothing to show. Otherwise, the induction hypothesis yields that the higher-level subformulae are satisfied at some state $s$. From this, we can easily see that in all possible cases the claim follows: If the higher-level formula is a conjunction, then $\psi$ is one of its conjuncts. Since both conjuncts must be satisfied by $s$, in particular $\psi$ must be satisfied at $s$. A similar argument applies to \G{}-formulae. If it is of the form $\F{\rhd q}\xi$, then we know that there must be a reachable state where $\xi$ holds.
\QED\end{proofsketch}

This concludes the first part of the proof. We continue with the second part and prove that we can find a state which satisfies all \G{}-subformulae.

\begin{lemma}
	\label{lem:G1(Fq,G1)-allG} Let $\phi$ be a satisfiable $\G1(\F q, \G 1)$-formula, $M$ its model, and let $G := \{ \psi \in \mathit{sub}(\phi) \mid \psi = \G{=1}{\xi} \text{ for some } \xi \}$. Then there is a state $s \in S$ such that $M,s \models \psi$ for all $\psi \in G$.
\end{lemma}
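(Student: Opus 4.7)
The plan is to iteratively build up a state satisfying all the $\G{=1}$-subformulae in $G$ one at a time, exploiting the fact that $\G{=1}\xi$ propagates to all positively-reachable states. First I would establish the following key observation: whenever $M,s \models \G{=1}\xi$, we also have $M,t \models \G{=1}\xi$ for every $t \in \post^*(s)$. This is a standard fact about $\G{=1}$-formulae: from the decomposition $\pr_s(\G\xi) = [s\models\xi]\cdot\sum_{t\in\post(s)} P(s,t)\cdot\pr_t(\G\xi)$, a value of $1$ at $s$ forces $\pr_t(\G\xi)=1$ for every immediate successor $t$ with $P(s,t)>0$, and iterating along positive-probability paths extends the property to all of $\post^*(s)$.

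Next, I would enumerate the finite set $G = \{\G{=1}\xi_1,\ldots,\G{=1}\xi_k\} \subseteq \mathit{sub}(\phi)$ and inductively construct states $s_0,s_1,\ldots,s_k$ with $s_0=\initstate$ such that $M,s_i \models \G{=1}\xi_j$ for every $j \leq i$. Given $s_{i-1}$, Lemma~\ref{lem:G1(Fq,G1)-subformulae} applied to the subformula $\G{=1}\xi_i$ and the state $s_{i-1}$ yields some $s_i \in \post^*(s_{i-1})$ with $M,s_i \models \G{=1}\xi_i$. Because $s_i \in \post^*(s_{i-1})$, the propagation observation ensures that $s_i$ inherits the previously-established formulae $\G{=1}\xi_1,\ldots,\G{=1}\xi_{i-1}$ from $s_{i-1}$, so the induction goes through. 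After $k$ steps, $s_k$ witnesses every $\psi \in G$, as required.

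The main delicate point I expect is making sure that the choice of a witness for one $\G{=1}$-subformula does not invalidate a witness previously obtained for another, and this is precisely what the propagation observation rules out. Crucially, this relies on the constraint being $=1$: a $\G{\rhd q}$ formula with $q<1$ would not propagate in this way, which is why the analogous construction fails for general satisfiability in Section~\ref{ss:g}. Note that the absence of disjunction is not used in this outer argument---it is only exploited inside Lemma~\ref{lem:G1(Fq,G1)-subformulae}---which suggests that the same inductive scheme should continue to work in a richer fragment with disjunction, provided a suitable analogue of that lemma can be established.
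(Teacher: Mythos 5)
Your proposal is correct and follows essentially the same route as the paper: the paper's proof likewise combines Lemma~\ref{lem:G1(Fq,G1)-subformulae} with the monotone propagation of $\G{=1}$-formulae along $\post^*$, phrased as an induction on the number of yet-unsatisfied formulae in $G$ rather than on an explicit enumeration index. The two inductions are interchangeable, so there is nothing to add.
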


\begin{proofsketch}
	It is clear that after encountering a \G{}-formula at some state, all successors will also satisfy it. Therefore, the set of satisfied \G{}-formulae is monotonically growing and bounded. Hence, we can apply induction over the number of yet unsatisfied \G{}-formulae.  In every step, we are looking for the next state to satisfy an additional \G{}-formula.  This is always possible (as long as there are still unsatisfied ones), due to Lemma \ref{lem:G1(Fq,G1)-subformulae}.
\QED\end{proofsketch}

Now, we can prove Theorem~\ref{thm:G1(Fq,G1)-size}.

\begin{proofsketch}[Proof Sketch of Theorem \ref{thm:G1(Fq,G1)-size}]
	By Lemma~\ref{lem:G1(Fq,G1)-allG}, we can find a state that satisfies all \G{}-subformulae. In some sense, this state's subtree resembles a BSCC. We can include exactly one state for each \F{}-subformula and create a BSCC out of those states, e.g., arrange them in a circle. We apply induction over $\psi \in sub(\phi)$. The satisfaction of literals and conjunctions is straightforward. Since every state is reached almost surely, every \F{}-formula will be satisfied that way. The satisfaction of the \G{}-formulae follows from the fact that all states used to satisfy all \G{}-formulae in the original model, and from the induction hypothesis.
\QED\end{proofsketch}

For the case of finite satisfiability, we also present an alternative proof, which sheds more light on this fragment and its super-fragments.  For details, see Appendix~\ref{app:proofs}.  Let $\equiv_{\mathit{fin}}$ denote equivalence of PCTL formulae over finite models.

\begin{theorem}
    \label{thm:G1(Fq,G1)-normal}
    Let $\phi$ be a $\G1(\F q, \G 1)$-formula. Then, the following equivalence holds:

    \begin{equation*}
        \G{=1}{\phi} \equiv_{\mathit{fin}}
        \G{=1}{(\bigwedge_{l \in A}{l} \land
        \F{=1}{\G{=1}{\bigwedge_{l \in B}{l}}} \land
        \bigwedge_{i \in I}\F{=1}{{\bigwedge_{l \in C_i}{l}}})}
    \end{equation*}

    for appropriate $I \subset \mathbb{N}$, and $A, B, C_i \subset \literals$.
\end{theorem}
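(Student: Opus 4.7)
The plan is to proceed by structural induction on $\psi$, writing $\phi = \G{=1}\psi$ and using $\G{=1}\G{=1}\psi \equiv \G{=1}\psi$ to reduce to $\G{=1}\psi$. For each $\psi$ I will recursively construct data $(A, B, (C_i)_{i \in I})$ and show finite equivalence to the stated normal form. The underlying observation is that, in a finite Markov chain, Lemma~\ref{lem:bsccs} gives clean meanings to the three conjuncts under the outer $\G{=1}$: $\bigwedge_{l \in A} l$ says every reachable state carries all literals in $A$; $\F{=1}\G{=1}\bigwedge_{l \in B} l$ says every reachable BSCC has all literals in $B$ at each of its states; and each $\F{=1}\bigwedge_{l \in C_i} l$ says every reachable BSCC contains a witness state for $C_i$. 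I will translate $\G{=1}\psi$ into a formula of exactly this shape.

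The base and congruence cases will be routine. A literal $l$ contributes $A = \{l\}$ with $B$ and $I$ empty. A conjunction $\psi_1 \land \psi_2$ splits as $\G{=1}\psi_1 \land \G{=1}\psi_2$, and the two normal forms unite componentwise; in particular, $\F{=1}\G{=1}\bigwedge_{l \in B_1} l \land \F{=1}\G{=1}\bigwedge_{l \in B_2} l \equiv_{\mathit{fin}} \F{=1}\G{=1}\bigwedge_{l \in B_1 \cup B_2} l$, because both sides enforce the same literal content on every reachable BSCC. A nested $\G{=1}\chi$ is absorbed by idempotence and the induction hypothesis is applied to $\chi$, and a trivially constrained $\F$ (i.e.\ $\rhd q$ equal to $\geq 0$) contributes nothing.

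The crucial case is $\psi = \F{\rhd q}\xi$ with a nontrivial constraint. I will first establish the finite-model equivalence $\G{=1}\F{\rhd q}\xi \equiv_{\mathit{fin}} \G{=1}\F{=1}\xi$: if some reachable BSCC contains no $\xi$-state, then inside it the probability of reaching $\xi$ is zero and $\F{\rhd q}\xi$ fails; conversely, if every reachable BSCC contains a $\xi$-state, then by Lemma~\ref{lem:bsccs} it is hit almost surely from any reachable state, yielding $\F{=1}\xi$. It then remains to flatten ``every reachable BSCC contains a $\xi$-state'' into the normal form, which I do by a secondary induction on $\xi$: a surface literal of $\xi$ contributes to a fresh $C_i$ collecting the witness state's required literals; a conjunction $\xi_1 \land \xi_2$ fuses the two witnesses into one, merging the corresponding literal parts into a single $C_i$; a $\G{=1}\chi$-subformula inside $\xi$ forces, through the witness and mutual reachability inside the BSCC, all BSCC states to satisfy $\chi$, which feeds $\chi$'s literals into $B$ and its nested $\F$-children into further $C_j$'s by recursion; and a further $\F{\rhd q'}\xi'$ inside $\xi$ reduces, by the same BSCC argument reapplied within the BSCC, to another witness for $\xi'$ in the same BSCC, triggering recursion.

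The main obstacle will be the bookkeeping in this recursive flattening: I must track, for each literal of $\psi$, whether it ends up forced globally (in $A$), at every BSCC state (in $B$), or only at one witness per BSCC (in some $C_i$), and I must be careful to invoke the finite-model reductions—especially the collapse of $\F{\rhd q}$ to $\F{=1}$—only under an outer $\G{=1}$, where the BSCC arguments actually apply. The converse direction of the equivalence, that any finite model of the normal form already satisfies $\G{=1}\psi$, then follows by reading the same correspondences in reverse via Lemma~\ref{lem:bsccs}.
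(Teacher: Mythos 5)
Your proposal is correct and follows essentially the same route as the paper: an outer structural induction whose crucial case $\G{=1}\F{\rhd q}\xi$ is collapsed to $\G{=1}\F{=1}\xi$ by the finite-model BSCC zero--one law and then flattened by an inner induction on $\xi$, separating literal witnesses (the $C_i$), promoted $\G{=1}$-conjuncts (feeding $B$), and further $\F{}$-conjuncts. The paper merely packages your BSCC arguments as a reusable toolbox of syntactic equivalences (idempotence of $\G{=1}$, distributivity over conjunction, absorption laws such as $\F{=1}\G{=1}\F{=1}\psi \equiv \G{=1}\F{=1}\psi$ and $\G{=1}\F{=1}(\psi \land \F{\rhd r}\xi) \equiv_{\mathit{fin}} \G{=1}(\F{=1}\psi \land \F{=1}\xi)$) that it then chains algebraically.
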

\begin{proofsketch}
	The proof is based on the following auxiliary statements
	\begin{align}
        \G{=1}\G{=1}\phi &\equiv \G{=1}\phi\\
        \G{=1}\F {\rhd q}\phi &\equiv_{\mathit{fin}}\G{=1}\F{=1}\phi\label{eq:fin}\\
        \F{=1}\F{\rhd q}\phi &\equiv \F{\rhd q}\phi
	\end{align} 
	and follows by induction.
	
	The second statement is the most interesting one. Intuitively, it is a zero-one law, stating that infinitely repeating satisfaction with a positive probability ensures almost sure satisfaction.  Notably, this only holds if the probabilities are bounded from below, hence for finite models, not necessarily for infinite models.
\QED\end{proofsketch}

It is an easy corollary of this theorem that a satisfiable formula has a model of a circle form with $A$ and $B$ holding in each state and each element in each $C_i$ holding in some state.  In general the models can be of a lasso shape where the initial (transient) part only has to satisfy $A$, allowing for easy manipulation in extensions of this fragment.

\begin{remark}
	Note that the equivalence does not hold over infinite models. Indeed, consider as simple a formula as $\G{=1}\F{>0}a$, which is satisfied on the Markov chain of Fig.~\ref{fig:infinite-model} \cite{LICS}, while this does not satisfy the transformed $\G{=1}\F{=1}a$.  Crucially, equivalence (\ref{eq:fin}) does not hold  already for this tiny fragment. Interestingly, when we build a model for the transformed formula, which is equisatisfiable but not equivalent, it turns out to be a model of the original formula. If, moreover, we consider $\neg a, \land, \G{>0}$ then finite and general satisfiability start to differ.  
\end{remark}

Before we move on to the next fragment, we will prove another consequence of Lemma~\ref{lem:G1(Fq,G1)-subformulae}. It is a statement about the BSCCs of models for formulae in this fragment and will be used later for the proof of Theorem~\ref{thm:Fq,G1-size}.

\begin{corollary}
	\label{cor:G1(Fq,G1)-bsccs}
    Let $\G{=1}\phi$ be a satisfiable $\G1(\F q, \G 1)$-formula and $M$ its model. Then, for every BSCC $T \subseteq post^*(s_0)$ of $M$, the following holds
	
	\begin{enumerate}
		\item
		\label{cor:G1(Fq,G1)-bsccs.F}
            For all $\psi \in sub(\G{=1}\phi)$, there is a state $t \in T$, such that $M,t \models \psi$.
		\item
		\label{cor:G1(Fq,G1)-bsccs.G}
            For all $\G{=1}{\psi} \in sub(\G{=1}\phi)$, and for all states $t \in T$, $M,t \models \G{=1}{\psi}$.
	\end{enumerate}
\end{corollary}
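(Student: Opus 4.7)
The plan is to prove Part~1 as an almost immediate application of Lemma~\ref{lem:G1(Fq,G1)-subformulae}, and then derive Part~2 from Part~1 combined with a ``persistence'' property of $\G{=1}$-subformulae along positive-probability transitions.

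For Part~1, fix any state $s \in T$ (such a state exists since BSCCs are nonempty) and any $\psi \in \mathit{sub}(\G{=1}\phi)$. Lemma~\ref{lem:G1(Fq,G1)-subformulae}, applied with the top-level formula being $\G{=1}\phi$, provides a state $t \in \post^*(s)$ with $M, t \models \psi$. Because $T$ is a \emph{bottom} SCC we have $\post^*(s) \subseteq T$, so $t \in T$ as required.

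For Part~2, the key auxiliary observation is that satisfaction of a $\G{=1}$-formula is inherited by every positively-reachable state: for any formula $\xi$ and any states $u, v$ with $v \in \post^*(u)$, if $M, u \models \G{=1}\xi$ then $M, v \models \G{=1}\xi$. To see this, fix a finite path $u = r_0, r_1, \ldots, r_n = v$ with $\prod_{i<n} P(r_i, r_{i+1}) > 0$; the corresponding cylinder in $\pr_{M(u)}$ then has positive measure. Almost all runs from $u$ satisfy $\xi$ at every position by assumption, so conditioning on this cylinder still yields measure one, and that conditional distribution coincides with the distribution of runs in $M(v)$ shifted by $n$. Hence almost all runs from $v$ satisfy $\xi$ at every position, i.e., $M, v \models \G{=1}\xi$.

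Given this observation, Part~2 follows directly: let $\G{=1}\psi \in \mathit{sub}(\G{=1}\phi)$ and $t \in T$. By Part~1 there is some $t' \in T$ with $M, t' \models \G{=1}\psi$, and by strong connectivity of $T$ we have $t \in \post^*(t')$, so the persistence property yields $M, t \models \G{=1}\psi$. The main subtlety to keep in mind is that $T$ need not be finite — the corollary is stated for arbitrary (possibly countably infinite) Markov chains — so one cannot simply invoke Lemma~\ref{lem:bsccs}. The cylinder-based argument for persistence sidesteps this issue by relying only on the positive probability of a single finite prefix rather than on almost-sure reachability within $T$.
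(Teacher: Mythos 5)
Your proof is correct and follows essentially the same route as the paper's: Part~1 is obtained by applying Lemma~\ref{lem:G1(Fq,G1)-subformulae} from a state $s\in T$ and using $\post^*(s)=T$ for a BSCC, and Part~2 by propagating the $\G{=1}$-formula found via Part~1 through strong connectivity. The only difference is that you spell out, via the cylinder/conditioning argument, the persistence of $\G{=1}$-satisfaction along positively reachable states, a step the paper dismisses as ``clear''; this is a faithful elaboration rather than a different approach.
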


\begin{proof}
	Point \ref{cor:G1(Fq,G1)-bsccs.F} follows from the fact that every reachable BSCC must satisfy $\G{=1}\phi$, and from Lemma \ref{lem:G1(Fq,G1)-subformulae}. Point \ref{cor:G1(Fq,G1)-bsccs.G} follows immediately from point \ref{cor:G1(Fq,G1)-bsccs.F}.
\QED\end{proof}

Note that we did not assume finite satisfiability here, so the model might not contain a single BSCC. In that case, the claim is trivially true. However, Theorem~\ref{thm:G1(Fq,G1)-size} allows us to focus on finitely satisfiable formulae in this fragment.

\subsection{Satisfiability for $\G1(\F q,\G1,\vee)$}
\label{sec:G1(Fq,G1,v)}\label{ss:g1}

This section treats \G{}-formulae of the fragment where $\G{\rhd q}$ only appears with $q=1$.  We thus lift a restriction of the previous fragment and allow for disjunctions. We generalize the obtained results to this larger fragment. We mentioned earlier that some of the results of the previous fragment do not apply here. Concretely, Lemma~\ref{lem:G1(Fq,G1)-subformulae} does not hold here; that is, there might be subformulae which are not satisfied almost surely. Therefore, there is not necessarily a state that satisfies all \G{}-subformulae. For example, consider $\G{=1}(\F{>0}\G{=1}a \vee \F{>0}\G{=1}\neg a)$. There cannot be a single BSCC to satisfy both disjuncts. Although this is not a problem for the results of this section, it will turn out to be a fundamental problem when dealing with arbitrary formulae of the $\F q,\G1,\vee$ fragment.

\begin{definition}
	$\G1(\F q, \G 1, \vee)$-formulae are given by the grammar
	\begin{align*}
	\Phi &::= \G{=1}{\Psi}\\
	\Psi &::= a \mid
	\neg a \mid
	\Psi \land \Psi \mid
	\Psi \lor \Psi \mid
	\F{\rhd q}{\Psi} \mid
	\G{=1}{\Psi}
	\end{align*}
\end{definition}

We prove that satisfiable formulae of this fragment are satisfiable by models of linear size and thus also finitely satisfiable.

%\todo{can the outer G be quantitative? would it affect equivalent vs. equisatisfiable?}

\begin{theorem}
	\label{thm:G1(Fq,G1,v)-size}
	Let $\phi$ be a satisfiable $\G1(\F q, \G 1, \vee)$-formula.  Then $\phi$ has a model of size at most $|\phi|$.
\end{theorem}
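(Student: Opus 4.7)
The plan is to generalize the argument of Theorem~\ref{thm:G1(Fq,G1)-size} by first resolving disjunctions state-wise. Given a model $M$ of $\phi = \G{=1}\psi$, at each reachable state $s$ at which some $\chi \in sub(\psi)$ holds, I would pick a witnessing disjunct for every $\vee$-subformula of $\chi$ that is true at $s$, thereby associating with $s$ a disjunction-free ``instantiation'' $\hat{\chi}(s)$, which is a formula of the fragment of Section~\ref{ss:base} and is satisfied by $s$. This reduces the disjunctive reasoning locally to the setting of Lemma~\ref{lem:G1(Fq,G1)-subformulae}: for every subformula still required at $s$ after disjunct selection, there is some state in $\post^*(s)$ satisfying it.

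Next I would collect a finite set of witness states analogously to the proof of Theorem~\ref{thm:G1(Fq,G1)-size}: for every $\F{\rhd q}\xi$ appearing in an instantiation $\hat{\chi}(s)$ of an already-collected state $s$, add a reachable witness state $t$ with $M, t \models \xi$, and iterate. The collected states are arranged in a graph of (possibly multiple) BSCCs; inside a BSCC every state is reached almost surely, so each $\F{\rhd q}$ is satisfied with probability $1 \geq q$, and transient states entering the correct BSCC satisfy their required $\G{=1}$-subformulas because all their almost-sure successors do. Verification that the construction is a model of $\phi$ proceeds by induction over $sub(\psi)$: literals and conjunctions are immediate, disjunctions are handled by the chosen disjunct, $\F{\rhd q}$ by the almost-sure reachability of witnesses, and $\G{=1}$ by the propagation of the chosen disjunct profile along all reachable states.

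The size bound of $|\phi|$ should then follow by charging each collected witness state to a distinct $\F$-subformula of $\psi$: once a witness for $\F{\rhd q}\xi$ is selected, it can be reused wherever that same subformula is required, so no $\F$-subformula contributes more than one state to the overall construction. The main obstacle I expect is exactly this accounting in the presence of disjunctions: when $\G{=1}\chi$ is required at a state and $\chi$ contains disjunctions, different reachable states may need to witness different disjuncts of $\chi$, which as the example $\G{=1}(\F{>0}\G{=1}a \vee \F{>0}\G{=1}\neg a)$ shows, can force a structure with multiple BSCCs rather than a single cycle. Ruling out the need to duplicate $\F$-witnesses across disjunct-profiles, and thereby keeping the total count at $|\phi|$, is where the argument has to be most careful; the key observation to lean on is that the disjunct choice at a state determines which $\F$-subformulas of $\psi$ must be witnessed there, and distinct required $\F$-subformulas across the whole construction are still bounded by the syntactic occurrences of $\F$ in $\psi$.
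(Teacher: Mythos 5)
Your plan runs into a genuine gap exactly at the point you flag yourself: the size accounting in the presence of several BSCCs. If the collected witnesses really form multiple BSCCs (plus transient states leading into them), then a subformula $\F{\rhd q}\xi$ with $q>0$ that is required at a state inside a BSCC must have a witness for $\xi$ \emph{inside that same BSCC}, since nothing outside a BSCC is reachable from it. Hence $\F{}$-witnesses cannot be shared across BSCCs, and the total count can grow like (number of BSCCs)$\times$(number of $\F{}$-subformulae) rather than staying below $|\phi|$; your closing observation that the distinct required $\F{}$-subformulae are bounded by the syntactic occurrences in $\psi$ does not repair this, because each occurrence may need one witness per BSCC. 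You also misread the motivating example: $\G{=1}(\F{>0}\G{=1}a \vee \F{>0}\G{=1}\neg a)$ does \emph{not} force multiple BSCCs --- a single self-looping state labelled $\{a\}$ is a model, since only one disjunct has to hold. In the paper that example only illustrates that Lemma~\ref{lem:G1(Fq,G1)-subformulae} fails here, i.e.\ that not every subformula is realized in every forward closure.

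The paper's proof avoids per-state disjunct selection entirely and shows that a \emph{single} BSCC always suffices. The missing ingredient is monotonicity of $\G{=1}$-satisfaction along $\post^*$: the set of satisfied $\G{=1}$-subformulae only grows along transitions, so one can pick a state $s$ minimizing the number of unsatisfied $\G{=1}$-subformulae; by minimality, every $\G{=1}$-subformula satisfied anywhere in $\post^*(s)$ is already satisfied at $s$. One then collapses $\post^*(s)$ into one circle whose states are the nonempty classes $S/\xi := \{t\in\post^*(s)\mid M,t\models\xi\}$ for $\xi\in sub(\phi)$, and subformulae never satisfied in $\post^*(s)$ are simply dropped --- this is where the disjunctions are absorbed, with no disjunct bookkeeping at all. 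The result has at most $|sub(\phi)|\le|\phi|$ states, a single BSCC, no transient part, and no duplicated witnesses. If you want to salvage your construction, this maximal-$\G{}$-profile state is precisely what collapses all your ``disjunct profiles'' into one.
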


\begin{proofsketch}
	The proof for this theorem works essentially the same as it did for Theorem~\ref{thm:G1(Fq,G1)-size}. Recall that we looked for a state to satisfy all \G{}-formulae. Though we will not necessarily find a state that does so in this fragment, we can look for a state that satisfies maximal subsets of satisfied \G{}-formulae. Then, we can continue in a similar way as we did in the simpler setting.
\QED\end{proofsketch}

\subsection{Satisfiability for $\F q,\G1$}\label{ss:nodisj}
\label{sec:Fq,G1}

This section treats general formulae of the fragment with no disjunction and where $\G{\rhd q}$ only appears with $q=1$.

\begin{definition}
	$\F q, \G 1$-formulae are given by the grammar
	\begin{align*}
	\Phi ::= a \mid
	\neg a \mid
	\Phi \land \Phi \mid
	\F{\rhd q}{\Phi} \mid
	\G{=1}{\Phi}
	\end{align*}
\end{definition}

In Section~\ref{sec:G1(Fq,G1)} we discussed a special case of this fragment, where the top-level operator is $\G{=1}$. Two results are particularly interesting for this section: Firstly, the construction of models for such formulae as explained in the proof of Theorem~\ref{thm:G1(Fq,G1)-size}, and secondly, the properties of BSCCs in models for such formulae as stated in Corollary~\ref{cor:G1(Fq,G1)-bsccs}. We will use those in order to simplify models in this generalized setting. We say a Markov chain has \emph{height $h$} if it is a tree with back edges of height $h$.

\begin{theorem}
    \label{thm:Fq,G1-size}
    A satisfiable $\F q, \G 1$-formula $\phi$ has a model of height $|\phi|$.
\end{theorem}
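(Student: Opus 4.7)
The plan is to prove Theorem~\ref{thm:Fq,G1-size} by strong induction on $|\phi|$, using Theorem~\ref{thm:G1(Fq,G1,v)-size} as a black box for the $\F$-free top-level case. The base case, where $\phi$ is a literal, is handled by a single self-looping state labeled appropriately. For the inductive step, modulo associativity and commutativity of $\wedge$, I decompose
\begin{equation*}
\phi \;=\; \phi_L \,\wedge\, \G{=1}\Psi \,\wedge\, \bigwedge_{k=1}^{m}\F{\rhd q_k}\chi_k,
\end{equation*}
where $\phi_L$ collects the top-level literals, $\Psi=\bigwedge_j\psi_j$ merges the arguments of all top-level $\G{=1}$-conjuncts, and the $\F{\rhd q_k}\chi_k$ enumerate all top-level $\F$-conjuncts. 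If $m=0$ then $\phi$ is a $\G 1(\F q,\G 1)$-formula (after absorbing $\phi_L$), and Theorem~\ref{thm:G1(Fq,G1,v)-size} directly yields a model of size, and hence tree-height, at most $|\phi|$.

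For $m\geq 1$, fix a model $M$ of $\phi$. Because $\G{=1}\Psi$ holds at $s_0$, it propagates: every state reachable from $s_0$ in $M$ satisfies both $\Psi$ and $\G{=1}\Psi$. In particular, for each $k$ the state of $M$ witnessing $\F{\rhd q_k}\chi_k$ satisfies $\chi_k\wedge\G{=1}\Psi$. This smaller formula has size $1+|\chi_k|+|\G{=1}\Psi|<|\phi|$ (strict because $\phi$ still carries the $\F{\rhd q_k}$-operator together with at least one outer $\wedge$), so the induction hypothesis yields a tree model $T_k$ of height at most $1+|\chi_k|+|\G{=1}\Psi|$.

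I would then assemble the final tree $T$ by introducing a fresh root $r$ labeled to satisfy $\phi_L$ together with the literal part of $\Psi$ (consistent since $s_0\models\phi_L\wedge\Psi$ in $M$) and attaching the $T_k$s via outgoing transitions from $r$ chosen so that (i) each $\F{\rhd q_k}\chi_k$ is satisfied at $r$, and (ii) every successor of $r$ satisfies $\G{=1}\Psi$. Condition (ii) is immediate since each $T_k$-root satisfies $\G{=1}\Psi$; from this and an averaging argument over $r$'s successors one also derives that $r$ itself satisfies $\Psi$ and therefore $\G{=1}\Psi$. A routine size-accounting against the decomposition of $|\phi|$ confirms the height bound $1+\max_k(1+|\chi_k|+|\G{=1}\Psi|)\leq|\phi|$.

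The main obstacle is the probabilistic allocation at $r$ when the thresholds satisfy $\sum_k q_k>1$: one cannot naively send probability $q_k$ to each $T_k$-root. The fix is to reproduce $s_0$'s successor distribution in $M$—whose branching is bounded by $|\phi|+2$ via \cite{LICS}—so that a single successor of $r$ can contribute simultaneously to several $\F$-conjuncts. One then applies the induction hypothesis at each such successor to its ``residual'' conjunction of $\F$-obligations together with $\G{=1}\Psi$. The delicate step is verifying that this residual formula is strictly smaller than $\phi$ at every successor, which requires careful tracking of which $\chi_k$s are already satisfied at the successor itself or demand strictly less probability from beneath.
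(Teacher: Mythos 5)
Your overall architecture (peel off the top-level $\F{}$-conjuncts, recurse, and terminate with the $\G1(\F q,\G1)$ result) matches the spirit of the paper's proof, but the induction you propose does not close, and the gap is exactly the one you flag in your last paragraph. When several top-level obligations $\F{\rhd q_1}\chi_1,\dots,\F{\rhd q_m}\chi_m$ with $\sum_k q_k>1$ must share the successors of the root, a successor $t$ that witnesses none of the $\chi_k$ inherits the residual obligation $\bigwedge_k\F{\rhd q'_k}\chi_k\wedge\G{=1}\Psi$. This residual has exactly the same structure and essentially the same length as the formula you started from---only the thresholds $q'_k$ change---so it is not ``strictly smaller than $\phi$'' in any measure compatible with your strong induction on $|\phi|$, and nothing prevents the recursion from descending forever without discharging anything. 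This is not a bookkeeping nuisance: in an arbitrary model the first witnesses of the various $\chi_k$ can lie at unboundedly different depths, so some successor really does carry the full, undiminished conjunction of obligations.

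The paper resolves this by changing both the induction measure and the surgery. Instead of recursing on subformulae, its REDUCE procedure carries along each path the set $F$ of top-level $\F{}$-formulae whose arguments have already been witnessed, and inducts on the number of formulae still outside $F$. The crucial construction is the set $sel(s)$ of \emph{first} states below $s$ (in the unfolded tree model) that witness the argument of some outstanding $\F{}$-formula; $s$ is rewired directly to all of $sel(s)$ with the accumulated probabilities $P^*(s,t)/p_s$. Lemma~\ref{lem:cons-prob} shows that reachability probabilities only increase under this collapse, so all thresholds survive, and by construction every edge of the new tree witnesses at least one new $\F{}$-argument, which is what bounds the height by $|\phi|$; the leaves are then closed off with Theorem~\ref{thm:G1(Fq,G1)-size} as you intended. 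To salvage your argument you would have to replace ``recurse on the residual formula'' by exactly this jump-to-the-first-witness step, i.e., induct on the set of already-witnessed $\F{}$-arguments along the current path rather than on $|\phi|$.
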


\begin{proofsketch}
    Our aim is to transform a given, possibly infinite model into a tree-like shape. To do so, we first construct a tree by considering all non-nested \F{}-formulae. Each path in this tree will satisfy each of these \F{}-formulae at most once. At the end of each path, we will then insert BSCCs satisfying the \G{}-formulae, in the spirit of Theorem~\ref{thm:G1(Fq,G1)-size}.

    The collapsing procedure from a state $s$ is as follows: We first determine which of the \F{}-formulae that are satisfied at $s$ are relevant. Those are the formulae which are not nested in other temporal formulae and have not yet been satisfied on the current path. Once we have determined this set, say $I$, we need to find the successors which are required to satisfy the formulae in $I$. For this, we construct the set $sel(s)$. Informally, $sel(s)$ contains all states $t$ s.t. (i) $t$ satisfies at least one formula $\psi \in I$, and (ii) there is no state on the path between $s$ and $t$ satisfying $\psi$. Formally, $sel(s) := \{ t \in post^*(s) \mid \exists \F{\rhd r} \psi \in I.\ M,t \models \psi \land \forall t' \in post^*(s) \cap pre^*(t). M,t \not\models \psi \}$. Then, we connect $s$ to every state in $sel(s)$ directly; i.e. for $t \in sel(s)$, we set $P(s,t) := P^*(s,t)$, and for all states $t \not\in sel(s)$, we set $P(s,t) := 0$.\footnote{In fact, we need to scale $P(s,t)$ in order to obtain a Markov chain, in general, as the probability to reach $sel(s)$ might be less than $1$. For details, refer to the formal proof in the Appendix.} A simple induction on the length of a path yields that every state that is reachable from $s$ in the constructed MC is reached with at least the probability as in the original one. From this, one can easily see that every non-nested \F{}-formula is satisfied. The new set $post(s)$ might be infinite. However, we know that we can prune most of the successors and limit the branching degree to $|\phi|+2$ \cite{LICS}. Then, we repeat the procedure from each of the successors. Since the number of non-nested \F{}-formulae decreases with every step, we will reach states which do not have to satisfy non-nested \F{}-formulae at all on every branch. The number of steps we need to reach such states, is bounded by the number of non-nested \F{}-formulae in $\phi$. At those states, we can use Theorem~\ref{thm:G1(Fq,G1)-size} to obtain models for the respective \G{}-formulae. Those are of size linear in the size of the \G{}-formulae. The overall height is then bounded by $|\phi|$. The fact that the resulting MC is a model can be easily proved by induction over $|\phi|$.
\QED\end{proofsketch}

The models that we construct have a quite regular shape: They start as a tree and in every step ensure satisfaction of one of the \F{}-formulae. As soon as they have satisfied all outer \F{}-formulae, on every branch a model of circle shape for the respective \G{}-formula follows. Since the branching degree is at most $|\phi|+2$ and the number of steps before we repeat a state is bounded by $|\phi|$, the overall size is bounded to $(|\phi|+2)^{|\phi|}$.

\begin{example}
    \label{ex:selection-reduction}
    Let $\phi^p := \F{\geq p}{\G{=1}{a}}$, and $\psi^p := \F{\geq p}{\G{=1}{\neg a}}$. The large Markov chain of Figure \ref{fig:ex-red-Fq,G1} is a model for $\phi^{1/2} \land \psi^{1/2}$. The grayed states illustrate the set $sel(s_0)$. The other boxes show the sets $sel(.)$ of the respective grayed state. Everything in between is omitted. The smaller Markov chain is the reduced version of the original model.

    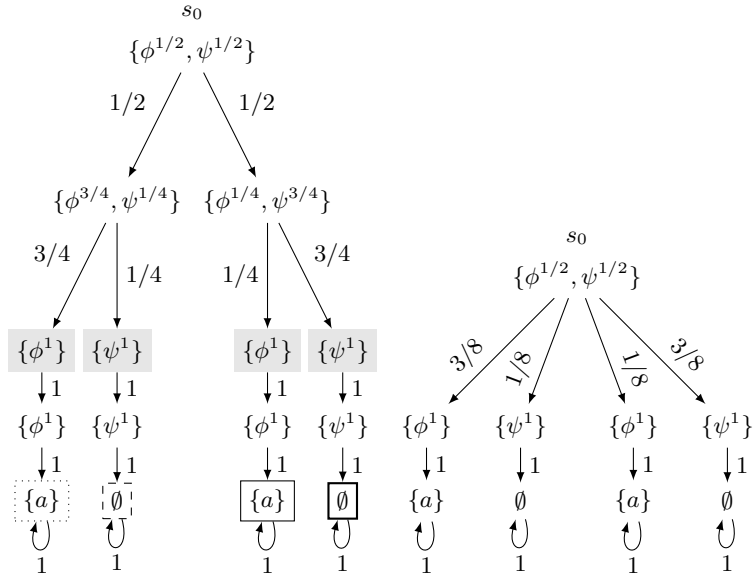
\begin{figure}
        \subfloat{
            \begin{tikzpicture}[auto]
                % states
                \node (s0) [label=above:$s_0$] at (0,0)
                    {$\{\phi^{1/2}, \psi^{1/2}\}$};
                \node (t1) at (-1,-2) {$\{\phi^{3/4}, \psi^{1/4}\}$};
                \node (t2) at (1,-2) {$\{\phi^{1/4}, \psi^{3/4}\}$};
                \node (s1) [fill=black!10,rectangle] at
                    (-2,-4) {$\{\phi^1\}$};
                \node (s2) [fill=black!10, rectangle] at (-1,-4)
                    {$\{\psi^1\}$};
                \node (s3) [fill=black!10,rectangle] at (1,-4)
                    {$\{\phi^1\}$};
                \node (s4) [fill=black!10,rectangle] at (2,-4)
                    {$\{\psi^1\}$};
                \node (t3) at (-2,-5) {$\{\phi^1\}$};
                \node (t4) at (-1,-5) {$\{\psi^1\}$};
                \node (t5) at (1,-5) {$\{\phi^1\}$};
                \node (t6) at (2,-5) {$\{\psi^1\}$};
                \node (t7) at (-2,-6) [draw=black,dotted,rectangle] {$\{a\}$};
                \node (t8) at (-1,-6) [draw=black,dashed,rectangle] {$\emptyset$};
                \node (t9) at (1,-6) [draw=black,solid,rectangle] {$\{a\}$};
                \node (t10) at (2,-6) [draw=black,thick,rectangle] {$\emptyset$};
                % arcs
                % level 0
                \draw [->] (s0) to node [swap] {$1/2$} (t1);
                \draw [->] (s0) to node {$1/2$} (t2);
                % level 1
                \draw [->] (t1) to node [swap] {$3/4$} (s1);
                \draw [->] (t1) to node {$1/4$} (s2);
                \draw [->] (t2) to node [swap] {$1/4$} (s3);
                \draw [->] (t2) to node {$3/4$} (s4);
                % level 2
                \draw [->] (s1) to node {$1$} (t3);
                \draw [->] (s2) to node {$1$} (t4);
                \draw [->] (s3) to node {$1$} (t5);
                \draw [->] (s4) to node {$1$} (t6);
                % level 3
                \draw [->] (t3) to node {$1$} (t7);
                \draw [->] (t4) to node {$1$} (t8);
                \draw [->] (t5) to node {$1$} (t9);
                \draw [->] (t6) to node {$1$} (t10);
                % level 4
                \draw [->,loop below] (t7) to node {$1$} (t7);
                \draw [->,loop below] (t8) to node {$1$} (t8);
                \draw [->,loop below] (t9) to node {$1$} (t9);
                \draw [->,loop below] (t10) to node {$1$} (t10);
            \end{tikzpicture}
        }
        \subfloat{
            \begin{tikzpicture}[auto]
                % states
                \node (s0) [label=above:$s_0$] at (0,0)
                    {$\{\phi^{1/2}, \psi^{1/2}\}$};
                \node (s1) at (-2,-2) {$\{\phi^1\}$};
                \node (s2) at (-0.75,-2) {$\{\psi^1\}$};
                \node (s3) at (0.75,-2) {$\{\phi^1\}$};
                \node (s4) at (2,-2) {$\{\psi^1\}$};
                \node (t1) at (-2,-3) {$\{a\}$};
                \node (t2) at (-0.75,-3) {$\emptyset$};
                \node (t3) at (0.75,-3) {$\{a\}$};
                \node (t4) at (2,-3) {$\emptyset$};
                % arcs
                % level 0
                \draw [->] (s0) to node [swap,sloped] {$3/8$} (s1);
                \draw [->] (s0) to node [swap,sloped] {$1/8$} (s2);
                \draw [->] (s0) to node [sloped] {$1/8$} (s3);
                \draw [->] (s0) to node [sloped] {$3/8$} (s4);
                % level 1
                \draw [->] (s1) to node {$1$} (t1);
                \draw [->] (s2) to node {$1$} (t2);
                \draw [->] (s3) to node {$1$} (t3);
                \draw [->] (s4) to node {$1$} (t4);
                % level 2
                \draw [->,loop below] (t1) to node {$1$} (t1);
                \draw [->,loop below] (t2) to node {$1$} (t2);
                \draw [->,loop below] (t3) to node {$1$} (t3);
                \draw [->,loop below] (t4) to node {$1$} (t4);
            \end{tikzpicture}
        }
        \caption{Example of a reduction for a $\F q, \G 1$-formula.  }
        \label{fig:ex-red-Fq,G1}
    \end{figure}
\end{example}

\subsection{Satisfiability for $\F{q/1},\G1,\vee$}\label{ss:noqf}
\label{sec:Fq1,G1,v}
In the previous section we have been able to construct simple models for formulae of the $\F q, \G1$-fragment by exploiting the nature of \G{}-formulae thereof as presented in Section~\ref{sec:G1(Fq,G1)}. This works because every formula nested within a \G{} is satisfied in every BSCC. Hence, we can simply postpone the satisfaction of those until we reach a BSCCs. In the $\F{q},\G1,\vee$-fragment, this is not the case anymore, as discussed in Section~\ref{sec:G1(Fq,G1,v)}. This can cause some complications, which are discussed in more detail in Section~\ref{sec:Fq,G1,v}. In order to be able to apply similar techniques as in the previous section, we can simplify the fragment and enforce the property that \F{}-formulae occur only with $q=1$ within \G{}s.

\begin{definition}
	$\F {q/1}, \G 1,\vee$-formulae are given by the grammar
\begin{align*}
    &\Phi ::= a \mid
             \neg a \mid
             \Phi \land \Phi \mid
             \Phi \lor \Phi \mid
             \F{\rhd q}{\Phi} \mid
             \G{=1}{\Psi} \\
    &\Psi ::= a \mid
             \neg a \mid
             \Psi \land \Psi \mid
             \Psi \lor \Psi \mid
             \F{=1}{\Psi} \mid
             \G{=1}{\Psi}
\end{align*}
\end{definition}

Again, we show that the necessary minimal height of models can be bounded.

\begin{theorem}
    \label{thm:Fq1,G1,v-size}
    A satisfiable $\F {q/1}, \G 1,\vee$-formula $\phi$ has a model of height $|\phi|^2$.
\end{theorem}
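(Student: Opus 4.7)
The plan is to adapt the selection-based construction of Theorem~\ref{thm:Fq,G1-size} to accommodate disjunctions at the outer level, exploiting crucially the $\F{q/1}$-restriction: inside every $\G{=1}$-subformula only $\F{=1}$ is allowed, so any inner $\G{=1}$-obligation lies in the $\G1(\F q,\G 1,\vee)$ fragment (with inner constraints in fact restricted to $q=1$), and Theorem~\ref{thm:G1(Fq,G1,v)-size} then supplies a cycle-shaped model of size at most $|\phi|$ for each such obligation. Starting from an arbitrary model $M$ of $\phi$, which we may assume to be tree-shaped by Lemma~\ref{lem:tree}, I would construct a tree-with-back-edges model of height at most $|\phi|^2$.

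First, I would resolve disjunctions by fixing at every state $s$ of the construction a \emph{type} $T(s)\subseteq sub(\phi)$: a minimal set of subformulae that $s$ already satisfies in $M$ and whose conjunction entails the obligations inherited from the root-to-$s$ path. At every disjunction this deterministically picks one disjunct. Once the types are fixed, the outer part of $\phi$ behaves essentially disjunction-free from the viewpoint of each state, so the selection procedure of Theorem~\ref{thm:Fq,G1-size} applies: I connect $s$ directly to the set $sel(s)$ of states witnessing each non-nested $\F{\rhd q}\psi\in T(s)$ not yet discharged on the current path, with $P(s,t)$ set to the original reaching probability (renormalised if needed). Each step strictly discharges at least one outer $\F$-formula, bounding the outer skeleton depth by the number of outer $\F$-subformulae, hence by $|\phi|$.

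Whenever the target of a discharged $\F{\rhd q}$ carries a freshly activated $\G{=1}\psi$-obligation, I would attach at that step a cycle of size at most $|\psi|\le|\phi|$ built as in the proof of Theorem~\ref{thm:G1(Fq,G1,v)-size}, and, if necessary, continue the outer construction from an appropriate node of the cycle. Up to $|\phi|$ distinct $\G{=1}$-subformulae can be activated along any single branch, and each contributes at most $|\phi|$ states to the branch's height, so multiplying these contributions yields the announced bound $|\phi|^2$.

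The main obstacle will be ensuring consistency of the type assignment under the repeated rerouting and cycle-attachment operations: at each chosen successor $t$ the type $T(t)$ must remain compatible with the $\G{=1}$-obligations inherited from the predecessors, and the newly built cycle must not force obligations that the outer construction cannot honour. This is precisely where the $\F{q/1}$-restriction is essential, since it prevents any quantitative $\F{\rhd q}$-obligation from appearing inside a $\G$-cycle, where we only have the tools of Section~\ref{sec:G1(Fq,G1,v)} available. Correctness of the resulting chain is then proved by induction on $|\phi|$, following the scheme of Theorem~\ref{thm:Fq,G1-size} with an additional case for disjunctions handled via the fixed types.
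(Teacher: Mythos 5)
Your first step---reusing the selection procedure of Theorem~\ref{thm:Fq,G1-size} for the outer $\F{\rhd q}$-formulae, with disjunctions resolved by whatever the original model actually satisfies---matches the paper. The gap is in how you discharge the $\G{}$-nested $\F{=1}$-obligations. You propose to attach, \emph{once at the activation point} of each $\G{=1}\psi$, a cycle built as in Theorem~\ref{thm:G1(Fq,G1,v)-size} and then continue the outer construction from a node of that cycle. This fails for two reasons. First, a $\G{=1}\psi$-obligation persists at \emph{every} state from its activation onward, so each nested $\F{=1}\xi$ inside $\psi$ must be re-witnessed in the future of every subsequent state of the branch; a single cycle placed at activation time does nothing for states further down (the witness must lie in their future, not their past). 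This is exactly what Example~\ref{ex:Fq1} illustrates: the reduced chain is not a model precisely because intermediate states no longer see the required witnesses ahead of them. Second, exiting a cycle leaks probability: in a cycle $s_1\to\cdots\to s_n\to s_1$ with an exit, a state $s_i$ reaches the witnesses located ``behind'' it only with probability $<1$, so the $\F{=1}$-obligations at intermediate cycle nodes break unless the exit region re-provides all witnesses almost surely---which cascades and is not addressed in your plan. Your height accounting (one cycle per activated $\G{}$-subformula) inherits this problem.

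The paper's construction instead keeps the transient part acyclic: for \emph{every edge} $(s,t)$ of the reduced skeleton it looks at the original path segment $\rho^M_{st}$, picks the \emph{last} state on it witnessing each $\G{}$-nested $\F{=1}\xi$ whose obligation is active at $s$ but not deferred to $t$ (such a witness must exist on the segment precisely because the constraint is $=1$), and splices these at most $|\phi|$ states as a simple chain between $s$ and $t$ with $P(s,s'):=P(s,t)$. This preserves the reachability probabilities needed for the outer $\F{\rhd q}$-formulae (Lemma~\ref{lem:Fq,G1,v-cons-prob}), re-witnesses the persistent obligations at every level, introduces no non-bottom SCCs, and yields height (skeleton depth $|\phi|$) $\times$ (at most $|\phi|$ insertions per edge) $=|\phi|^2$. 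The genuine BSCCs appear only at the leaves. To repair your argument you would need to replace the cycle-attachment step by such a per-edge witness insertion (or something equivalent that re-establishes every active $\F{=1}$-obligation in the future of every transient state).
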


\begin{proofsketch}
    As a first step, we apply the same procedure as in the proof of Theorem~\ref{thm:Fq,G1-size}. The outer \F{}-formulae are then satisfied for the same reason as in the setting without disjunctions. However, the \G{}-nested \F{}-formulae might not be satisfied anymore because the BSCCs do not necessarily satisfy each of them. Since the \G{}-nested \F{}-formulae appear only with $q=1$, we know that once a state of the original model satisfies such a formula, almost every path satisfies the respective path formula. Let $s$ be a state of the reduced chain, and $t \in post(s)$. In the original model, there might be states in between $s$ and $t$. If some \G{}-nested \F{}-formula (say $\F{=1}\psi$) which is satisfied at $s$ is also satisfied at $t$, we do not need to take care of it. If this is not the case, then we know for sure that some of the states between $s$ and $t$ must satisfy $\psi$. We can determine such states for each \G{}-nested \F{}-formula. We include exactly one of those for each such formula. Then, preserving the order, we chain them in such a way that each one has a unique successor. The last one's unique successor is $t$. Let $s'$ be the first one. Then, we set $P(s,s') := P(s,t)$. We repeat this procedure for each state of the reduced chain.

    This way, we preserve the reachability probabilities and therefore the satisfaction of the outer \F{}-formulae. The newly added states guarantee the satisfaction of nested \F{}-formulae. An induction over $\phi$ shows that the constructed MC is again a model. Since we add at most $|\phi|$ new states between the states of the reduced MC which is of height at most $|\phi|$, we obtain the claimed bound on the height.
\QED\end{proofsketch}

    Figure \ref{fig:red-Fq,G1,v} illustrates the transformation of the models as described in the proof sketch. $sel_1$ and $sel_2$ are the selections. In the $\F q ,\G 1$-fragment, we directly connected those sets. Here, we insert simple chains between the selections.  The construction guarantees that we have at most one state per \F{}-formula to satisfy. This is obtained by postponing the satisfaction until the last possible moment before $sel$.

\begin{figure}
    \centering
    \begin{tikzpicture}
        [set/.style={rectangle, draw=black, fill=white},
         state/.style={circle, inner sep=0, minimum size=1mm,
         draw=black, fill=black}]
        \node[state, label=above:$s_0$] (s0) at (0,0) {};
        \node[state] (s1) at (-0.5,-0.5) {};
        \node[state] (s2) at (-0.5,-0.75) {};
        \node[state] (s3) at (-1,-2.25) {};
        \node[state] (s4) at (-1,-2.5) {};
        \node[state] (t1) at (0.5,-0.5) {};
        \node[state] (t2) at (0.5,-0.75) {};
        \node[state] (t3) at (1,-2.25) {};
        \node[state] (t4) at (1,-2.5) {};
        \node[set,minimum width=10mm] (sel1) at (0,-1.5) {$sel_1$};
        \node[set,minimum width=20mm] (sel2) at (0,-3.25) {$sel_2$};
        \node[set,minimum width=20mm,minimum height=10mm] (BSCCs)
            at (0,-4) {BSCCs};
        \draw[->] (s0) -- (s1);
        \draw[->] (s1) -- (s2);
        \draw[dotted] (s2) -- (sel1.north west);
        \draw[dotted] (-0.25,-0.75) -- (0.25,-0.75);
        \draw[->] (s0) -- (t1);
        \draw[->] (t1) -- (t2);
        \draw[dotted] (t2) -- (sel1.north east);
        \draw[->] (sel1.south west) -- (s3);
        \draw[->] (s3) -- (s4);
        \draw[dotted] (s4) -- (sel2.north west);
        \draw[dotted] (-0.5,-2.5) -- (0.5,-2.5);
        \draw[->] (sel1.south east) -- (t3);
        \draw[->] (t3) -- (t4);
        \draw[dotted] (t4) -- (sel2.north east);
    \end{tikzpicture}
    \caption{Reduction of models for $\F {q/1} ,\G 1, \vee$}
    \label{fig:red-Fq,G1,v}
\end{figure}
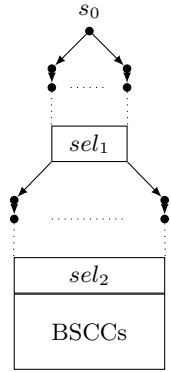

\begin{remark}
    Note that the resulting models have a tree shape with BSCCs. There are no non-bottom SCCs. Even if the original model did have such, they are removed by this construction: The reduction algorithm takes care that every non-nested \F{}-formula occurs at most once on every path. The inserted chains contain at most one state per nested \F{}-formula and do not introduce cycles.
\end{remark}

\begin{example}\label{ex:Fq1}
    Consider the formula

    $$\phi := \F{\geq 1/2}{(\G{=1}{a})} \land \G{=1}{(\F{=1}{\neg a} \lor \F{=1}{b})}.$$

    Figure \ref{fig:ex-Fq,G1,v-reduction} (a) shows a model for $\phi$.  The boxes illustrate the selection of $s_0$.  Figure \ref{fig:ex-Fq,G1,v-reduction} (b) shows the corresponding reduced chain. However, it is not a model for $\phi$. The reason is that neither states satisfying $\neg a$ nor such that satisfy $b$ are reached almost surely from $s_0$. By including additional states, the chain in Figure \ref{fig:ex-Fq,G1,v-reduction} (c) corrects this, and thereby we obtain a model of $\phi$.

    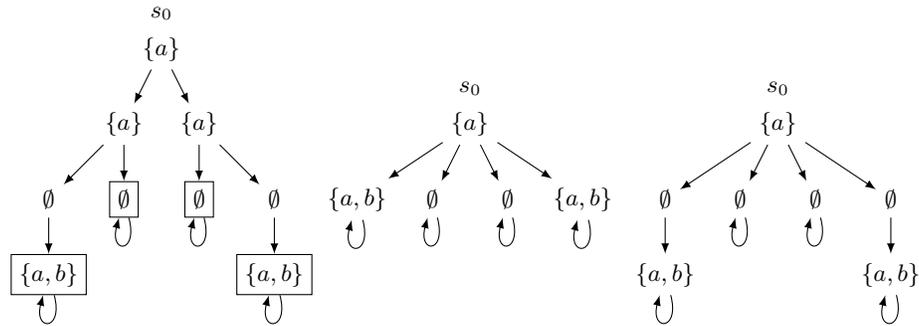
\begin{figure}
        \begin{tikzpicture}[auto]
            % states
            \node (s0) [label=above:$s_0$] at (0,0) {$\{a\}$};
            \node (s1) at (-0.5,-1) {$\{a\}$};
            \node (s2) at (0.5,-1) {$\{a\}$};
            \node (s3) at (-1.5,-2) {$\emptyset$};
            \node (s4) [rectangle,draw=black] at (-0.5,-2) {$\emptyset$};
            \node (s5) [rectangle,draw=black] at (0.5,-2) {$\emptyset$};
            \node (s6) at (1.5,-2) {$\emptyset$};
            \node (s7) [rectangle,draw=black] at (-1.5,-3) {$\{a,b\}$};
            \node (s8) [rectangle,draw=black] at (1.5,-3) {$\{a,b\}$};
            % selections
            %\node [color=color1] at (2,0) {$sel_M(s_0)$};
            % arcs
            % level 0
            \draw [->] (s0) to (s1);
            \draw [->] (s0) to (s2);
            % level 1
            \draw [->] (s1) to (s3);
            \draw [->] (s1) to (s4);
            \draw [->] (s2) to (s5);
            \draw [->] (s2) to (s6);
            % level 2
            \draw [->] (s3) to (s7);
            \draw [->] (s6) to (s8);
            % level 3
            \draw [->,loop below] (s7) to (s7);
            \draw [->,loop below] (s8) to (s8);
            \draw [->,loop below] (s4) to (s4);
            \draw [->,loop below] (s5) to (s5);
        \end{tikzpicture}
        \begin{tikzpicture}[auto]
            % states
            \node (s0) [label=above:$s_0$] at (0,0) {$\{a\}$};
            \node (s1) at (-1.5,-1) {$\{a,b\}$};
            \node (s2) at (-0.5,-1) {$\emptyset$};
            \node (s3) at (0.5,-1) {$\emptyset$};
            \node (s4) at (1.5,-1) {$\{a,b\}$};
            \node[white]  (x) at (1.5,-2) {$\{a,b\}$};
            % arcs
            % level 0
            \draw [->] (s0) to (s1);
            \draw [->] (s0) to (s2);
            \draw [->] (s0) to (s3);
            \draw [->] (s0) to (s4);
            % level 1
            \draw [->,loop below] (s1) to (s1);
            \draw [->,loop below] (s2) to (s2);
            \draw [->,loop below] (s3) to (s3);
            \draw [->,loop below] (s4) to (s4);
            \draw [->,loop below,white] (x) to (x);
        \end{tikzpicture}
        \begin{tikzpicture}[auto]
            % states
            \node (s0) [label=above:$s_0$] at (0,0) {$\{a\}$};
            \node (s1) at (-1.5,-1) {$\emptyset$};
            \node (s2) at (-0.5,-1) {$\emptyset$};
            \node (s3) at (0.5,-1) {$\emptyset$};
            \node (s4) at (1.5,-1) {$\emptyset$};
            \node (s5) at (-1.5,-2) {$\{a,b\}$};
            \node (s6) at (1.5,-2) {$\{a,b\}$};
            % arcs
            % level 0
            \draw [->] (s0) to (s1);
            \draw [->] (s0) to (s2);
            \draw [->] (s0) to (s3);
            \draw [->] (s0) to (s4);
            % level 1
            \draw [->] (s1) to (s5);
            \draw [->] (s4) to (s6);
            % level 2
            \draw [->,loop below] (s3) to (s3);
            \draw [->,loop below] (s2) to (s2);
            \draw [->,loop below] (s5) to (s5);
            \draw [->,loop below] (s6) to (s6);
        \end{tikzpicture}
        \caption{Example of a reduction in the $\F {q/1},\G 1, \vee$ fragment: (a) original model, (b) reduced model, (c) corrected model}
        \label{fig:ex-Fq,G1,v-reduction}
    \end{figure}
\end{example}

\subsection{Finite Models for $\F q, \G 1, \vee$}\label{ss:qual}
\label{sec:Fq,G1,v}
In this section, we discuss PCTL$(\F{},\G{})$ where $\G{}$ appears only with the contraint $q=1$.  Previously for the $\F q ,\G 1$-fragment, i.e. without disjunctions, we started from a model which was unfolded for a number of steps; we simplified such a model by dropping states (including all non-bottom SCCs) and then we inserted simple chains that guarantee the satisfaction of the nested \F{}-formulae. The resulting model thus (i) does not contain any non-bottom SCCs and (ii) the size only depends on the structure of the formula, not on the constraints of the \F{}-formulae. However, in the general $\F q,\G 1,\vee$-fragment, we cannot insert such simple chains to satisfy nested \F{}s.  Instead, we may have to branch at several places. Intuitively, the reason for such complications is the presence of a \emph{repeated, controlled choice}. This enables us to find a formula which requires more complicated models, namely models which either have non-bottom SCCs or are of size which also depends on the constraints and not only on the structure of the formula, or can even be infinite.

\begin{example}\label{ex:rep-choice}
    Consider $\phi := \G{=1} (\F{=1}(a \wedge \F{>0}\neg a) \vee a) \wedge \F{=1}\G{=1}a \wedge \neg a$. We can try to construct a model for $\phi$ as follows: Firstly, we have to start at a state which satisfies $\neg a$. This enforces the satisfaction of the first disjunct in the \G{}-formula. Therefore, almost all paths must lead to a state satisfying $a \land \F{>0}\neg a$. This state must eventually reach a state that satisfies $\neg a$ again with positive probability. Hence, we find ourselfs in the same situation as was the case in the initial state. So, we need to either create an SCC of alternating states that satisfy $a$ and $\neg a$, or create an infinite model. If we create an SCC, the side constraint $\F{=1}\G{=1}a$ enforces us to eventually leave this SCC. Hence, it is a non-bottom SCC. The MC in Figure~\ref{fig:mod-Fq,G1,v} is a possible model. From $s_0$ it models $\phi$, for any $p \in (0,1)$.

    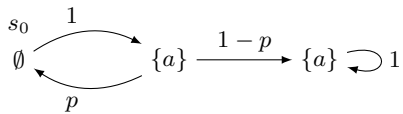
\begin{figure}
        \centering
        \begin{tikzpicture}[auto]
            \node (s0) [label=above:$s_0$] at (0,0) {$\emptyset$};
            \node (s1) at (2,0) {$\{a\}$};
            \node (s2) at (4,0) {$\{a\}$};
            \draw[->,bend left] (s0) to node {$1$} (s1);
            \draw[->,bend left] (s1) to node {$p$} (s0);
            \draw[->] (s1) to node {$1-p$} (s2);
            \draw[->,loop right] (s2) to node {$1$} (s2);
        \end{tikzpicture}
        \caption{Example model for $\phi$}
        \label{fig:mod-Fq,G1,v}
    \end{figure}
\end{example}

Note that the formula given in the example is qualitative. For this fragment, it is known how to solve the satisfiability problem already \cite{LICS}. However, we can easily adapt the formula to be quantitative. In this case, we might still be able to obtain a model for the quantitative version by keeping its shape and only adapting the probabilties of the model for the qualitative version. The question arises, whether this is possible in general or not. This question remains open and might be interesting for future work.

\section{Discussion, Conclusion, and Future Work}
\label{Chapter6}\label{sec:disc}

We have identified the pattern of the \emph{controlled repeated choice}, i.e. formulae of the form

$$\G{=1}(\phi_1 \vee \cdots \vee \phi_n)$$

where at least one of the $\phi_i$ contains an \F{}-formula that has a constraint other than $q=1$. Additionally, we have ``controlling'' side constraints, as in Example~\ref{ex:rep-choice}. We have seen that the presence of this pattern enforces more complicated structure of models even in the qualitative setting. This pattern is expressible in the $\F q,\G 1,\vee$-fragment.  Whenever we

\begin{enumerate}[label=(\alph*)]
	\item drop the side constraints, keeping only the $\G{=1}$-part, i.e. consider the $\G1(\F q,\G 1,\vee)$-fragment, or
	\item drop the disjunction for the choice and consider the $\F q,\G1$-fragment, or
	\item drop the quantity of the choice and consider the $\F{q/1},\G1,\vee$-fragment,
\end{enumerate}
the structure is simpler and we obtain decidability. For these fragments, we have even shown that the general satisfiability problem is equivalent to the finite satisfiability problem.
 
Further, adding quantities to $\G{}$-constraints obviously also makes the satisfiability problem more complicated.  Already for the qualitative $\G{>0}(\F{>0})$-fragment, satisfiability and finite satisfiability differ.  Nevertheless, we established the decidability of finite satisfiability even for the $\G q(\F q,\G q,\vee)$-fragment. 
  
\todo{bf-example, complexity}

\bigskip

Consequently, instead of attacking the whole quantitative PCTL or even just PCTL($\F{},\G{}$), we suggest two easier tasks, which should lead to a fundamental increase of understanding the general problem, namely:
\begin{itemize}
	\item finite (and also general) satisfiability of the $\F q,\G 1,\vee$-fragment, i.e. PCTL($\F{},\G{}$) where $\G{}$ is limited to the $=1$ constraint, and
	\item infinite satisfiability of the $\G q(\F q,\G q,\vee)$-fragment, i.e. $\G{}$-formulae of PCTL($\F{},\G{}$). 
\end{itemize}
While the former omits issues stemming from $\G{>0}$ \cite{LICS} and only deals with the repeated choice, the latter generalizes the qualitative results for $\G{>0},\G{=1}$ \cite{HS,LICS} in the presence of general quantitative $\F{}$'s.

Further, potentially more straight-forward, directions include the generalization of the results obtained in this paper to the until- and release-operators instead of future- and globally-operators, respectively, or the introduction of the next-operator.

%\iftrue
%\iffalse % for shorter compilation

%\newpage

\bibliographystyle{plainurl}% the recommnded bibstyle
\bibliography{bibliography}

\newpage
\appendix

\section{Full Proofs}
\label{app:proofs}
Whenever we write $M$ we implicitly mean a Markov chain $ (S, P, \initstate,L)$ and similarly $M'$ stands for $ (S', P', \initstate',L')$, and so on. Moreover, we will frequently refer to the formulae satisfied at $s$ by $sat(s) := \{ \psi \in sub(\phi) \mid M,s \models \psi \}$.

\subsection{Finite satisfiability for $\G q(\F q,\G q,\vee)$}

\begin{reflemma}{lem:Gq(Fq,Gq,v)-normal}
    Let $\phi$ be a $\G q(\F q,\G q,\vee)$-formula. Then, $\phi$ is finitely equisatisfiable to a $\G1(\F 1, \G 1, \vee)$-formula $\phi'$, such that $\phi' \Rightarrow \phi$.
\end{reflemma}

\begin{proof}
    Write $\phi := \G{\rhd r}{\psi}$. Let $M$ be a finite model for $\G{\rhd r}{\psi}$. Then, there must be at least one BSCC $T$, and a state $t \in T$, such that $M,t \models \G{\rhd' r'}{\psi}$. For a formula $\xi$, we define $\hat{\xi}$ recursively as follows

    $$
    \hat{\xi} := \begin{cases}
        a &\text{ if } \xi = a\\
        \hat{\zeta} \land \hat{\vartheta} &\text{ if } \xi = \zeta \land
        \vartheta\\
        \hat{\zeta} \lor \hat{\vartheta} &\text{ if } \xi = \zeta \lor
        \vartheta\\
        \F{=1}{\hat{\zeta}} &\text{ if } \xi = \F{\rhd r}{\zeta}\\
        \G{=1}{\hat{\zeta}} &\text{ if } \xi = \G{\rhd r}{\zeta}.
    \end{cases}
    $$

    Let $\xi \in sub(\psi)$, $t \in T$, and $M,t \models \xi$. We will show that $M,t \models \hat{\xi}$.

    \begin{enumerate}[label=\Roman*]
        \item $\xi = a$. Then, $\hat{\xi} = a = \xi$, and thus there is nothing
            to show.

        \item $\xi = \zeta \land \vartheta$.  Then, $M,t \models \zeta$ and $M,t \models \vartheta$. By the induction hypothesis, $M,t \models \hat{\zeta}$ and $M,t \models \hat{\vartheta}$, and thus $M,t \models \hat{\zeta} \land \hat{\vartheta}$.

        \item $\xi = \zeta \lor \vartheta$.  Then, $M,t \models \zeta$ or $M,t \models \vartheta$. By the induction hypothesis, $M,t \models \hat{\zeta}$ or $M,t \models \hat{\vartheta}$. Thus, $M,t \models \hat{\zeta} \lor \hat{\vartheta}$.

        \item $\xi = \F{\rhd r}{\zeta}$. Then, there is a state $t' \in T$, such that $M,t' \models \zeta$. By the induction hypothesis, $M,t' \models \hat{\zeta}$. Since $T$ is a BSCC, $t'$ is reached almost surely. Therefore, $M,t \models \F{=1}{\hat{\zeta}}$.

        \item $\xi = \G{\rhd r}{\zeta}$. Assume there was a state $t' \in T$, such that $M,t' \not\models \zeta$. Since $T$ is a BSCC, $t'$ is reached almost surely, and therefore $M,t \not\models \G{\rhd r}{\zeta}$, which is a contradiction. Hence, $M,t' \models \zeta$, for all $t' \in T$. By the induction hypothesis, $M,t' \models \hat{\zeta}$, for all $t' \in T$. Finally, this implies that $M,t \models \G{=1}{\hat{\zeta}}$.
    \end{enumerate}

    We have now shown that if $\phi$ is satisfiable, then so is $\hat{\phi}$.  Moreover, it is obvious that $\hat{\phi} \Rightarrow \phi$. The other direction follows immediately from this fact. Hence, $\phi$ is equisatisfiable to $\hat{\phi}$.
\QED\end{proof}

\begin{reftheorem}{thm:Gq(Fq,Gq,v)-size}
    Let $\phi := \G{\rhd r}{\psi}$ be a finitely satisfiable $\G q(\F q,\G q,\vee)$ formula. Then, there is a model of size linear in $|\phi|$.
\end{reftheorem}

\begin{proof}
    By Lemma~\ref{lem:Gq(Fq,Gq,v)-normal} we can consider $\hat{\phi}$ instead of $\phi$. Let $M$ be a model for $\hat{\phi}$. Let $S/\xi := \{ s \in S \mid M,s \models \xi \}$ denote all states that satisfy $\xi$. Then, we define $M'$, such that $S' := \{ S/\xi \mid \xi \in sub(\hat{\phi}) \text{ and } S/\xi \not= \emptyset \}$, $L'(T) := \bigcap_{t \in T}(L(t))$, and $P'$ an arbitrary distribution that generates a BSCC from $S'$---e.g. a circle. Note that $|S'| \leq |sub(\hat{\phi})|$. In particular it is always finite, even if $S$ is infinite.
    
    Let $\xi \in sub(\hat{\phi})$ and $T \in S'$, such that for all $t \in T$, $M,t \models \xi$. Then we will prove that $M',T \models \xi$ by induction over $\xi$.

    \begin{enumerate}[label=\Roman*]
        \item $\xi = a$ or $\xi = \neg a$. Assume $\xi = a$. Then, for all $t \in T$, $a \in L(t)$. Hence, $a \in L'(T)$ and $M',T \models a$. The case $\xi = \neg a$ is analagous.

        \item $\xi = \zeta \land \vartheta$. Then, for all $t \in T$, $M,t \models \zeta$ and $M,t \models \vartheta$. By the induction hypothesis, $M',T \models \zeta$ and $M',T \models \vartheta$ and therefore $M',T \models \zeta \land \vartheta$.
            
        \item $\xi = \zeta \lor \vartheta$. Then, for all $t \in T$, $M,t \models \zeta$ or $M,t \models \vartheta$. By the induction hypothesis, $M',T \models \zeta$ or $M',T \models \vartheta$. Hence, $M',T \models \zeta \lor \vartheta$.

        \item $\xi = \F{=1}{\zeta}$. Since all $t \in T$ model $\F{=1}\zeta$ (and $T$ is non-empty), there must be a state $s \in S$, such that $M,s \models \zeta$. Then, $S/\zeta \in S'$ and, by definition, for all $s \in S/\zeta$, $M,s \models \zeta$. Therefore, by the induction hypothesis, $M',S/\zeta \models \zeta$. Since $M'$ is a single BSCC, $S/\zeta$ is reached almost surely. Thus, $M,T \models \F{=1}\zeta$.
            
        \item $\xi = \G{=1}{\zeta}$. It must hold for all $s \in S$, that $M,s \models \zeta$. In particular this means that for all $T' \in S'$ and all $t \in T'$, $M,t \models \zeta$. By the induction hypothesis, $M',T' \models \zeta$. Finally, this implies that $M',T' \models \G{=1}\zeta$ and then also $M',T \models \G{=1}\zeta$.\QED
    \end{enumerate}
\end{proof}

\subsection{Satisfiability for $\G1(\F q,\G 1)$}%\label{ss:base}

For two formulae $\phi, \psi$, let $\psi \prec \phi$ iff $\psi \in sub(\phi)$ and $\psi \not= \phi$.  Moreover, for a set of formulae $\Phi$, we denote

\begin{align*}
    \mathfrak{P}_n := &\{ \psi_1 \dots \psi_n \mid \text{for all i } \in \{1,\dots ,n-1 \}.\\
        &(\psi_{i+1} \prec \psi_i \text{ and there is no } \xi.(\psi_{i+1} \prec \xi \prec \psi_{i}))\}\\
    \mathfrak{P} := &\bigcup_{n \in \mathbb{N}}{\mathfrak{P}_n}.
\end{align*}

\begin{reflemma}{lem:G1(Fq,G1)-subformulae}
    Let $\phi$ be a $\F q,\G 1$ formula, and $M$ a model for $\G{=1}{\phi}$.  Then, for every $\psi \in sub(\phi)$, and $s \in S$, there is a state $t \in post^*(s)$, such that $M,t \models \psi$.
\end{reflemma}

\begin{proof}
    Let $\psi_1 \dots \psi_n \in \mathfrak{P}$, such that $\psi_1=\phi$ and $\psi_n=\psi$. We apply induction over $n$.

    \begin{enumerate}[label=\Roman*]
        \item $n=1$. Then, $\phi=\psi$ and thus for all $t \in post^*(s_0)$, $M,t \models \G{=1}{\psi}$. Therefore $M,t \models \psi$.

        \item $n=n'+1$. By the induction hypothesis, there is a state $t \in post^*(s)$, such that $M,t \models \psi_{n'}$. We have to show that there is a state $t' \in post^*(s)$, with $M,t' \models \psi_n$.  Consider the following cases:

            \begin{enumerate}
                \item $\psi_{n'}=\psi_n \land \xi$. Then, $M,t \models \psi_n$.

                \item $\psi_{n'}=\G{=1}{\psi_n}$. Then, $M,t \models \psi_n$.

                \item $\psi_{n'}=\F{\rhd r}{\psi_n}$. Then, there must be a state $t' \in post^*(t)$, such that $M,t' \models \psi_n$.\QED
            \end{enumerate}
    \end{enumerate}
\end{proof}

\begin{reflemma}{lem:G1(Fq,G1)-allG}
    Let $\phi$ be a $\F q,\G 1$ formula, and $M$ a model for $\G{=1}{\phi}$.  Moreover, let $G := \{ \psi \in sub(\phi) \mid \psi = \PG{\xi}{=1} \text{ for some } \xi \}$. Then, there is a state, $s \in S$, such that for all $\psi \in G$, $M,s \models \psi$.
\end{reflemma}

\begin{proof}
    Let $s \in S$. A straightforward induction over $n:= |G \setminus sat(s)|$ yields the claim.

    \begin{enumerate}[label=\Roman*]
        \item $n=0$. Then, we are done.

        \item $n=n'+1$. Let $\psi \in G \setminus sat(s)$. Due to Lemma~\ref{lem:G1(Fq,G1)-subformulae}, there is a state $t \in post^*(s)$, with $M,t \models \psi$. Since all formulae in $G$ are \G{}-formulae, $G \setminus sat(t) \subset G \setminus sat(s)$, hence $|G \setminus sat(t)| < |G \setminus sat(s)|$. Now, the claim follows from the induction hypothesis.\QED
    \end{enumerate}
\end{proof}

\begin{refcorollary}{cor:G1(Fq,G1)-bsccs}
    Let $\phi$ be a $\F q,\G 1$ formula, and $M$ a finite model for $\G{=1}{\phi}$.  Then, for every BSCC $T \subseteq post^*(s_0)$, the following holds

	\begin{enumerate}
		\item \label{cor:G1(Fq,G1)-bsccs.F} For all $\psi \in sub(\phi)$, there is a state $t \in T$, such that $M,t \models \psi$.
		\item \label{cor:G1(Fq,G1)-bsccs.G} For all $\G{=1}{\psi} \in sub(\phi)$, and for all states $t \in T$, $M,t \models \G{=1}{\psi}$.
	\end{enumerate}
\end{refcorollary}

\begin{proof}
    Let $\psi \in sub(\phi)$, $T \subseteq S$ be a BSCC, and $t \in T$. Lemma~\ref{lem:G1(Fq,G1)-subformulae} states that there is a $t' \in post^*(t) = T$, such that $M,t' \models \psi$. If $\psi = \G{=1}{\xi}$, then it is clear that for all $t' \in T$, $M,t' \models \psi$.
\QED\end{proof}

\begin{reftheorem}{thm:G1(Fq,G1)-size}
    Let $\phi := \G{=1}\psi$, and $M$ a model for $\phi$. Then, there is a finite model for $\phi$.
\end{reftheorem}

\begin{proof}
    Lemma~\ref{lem:G1(Fq,G1)-allG} yields a state $s \in S$ which satisfies all \G{}-subformulae of $\phi$. Let $S/\xi := \{ s \in post^*(s) \mid M,s \models \xi \}$. Then, define $M'$, such that $S' := \{ S/\xi \mid \xi \in sub(\phi) \}$, $L'(T) := \bigcap_{t \in T}(L(t))$, and $P'$ generating an arbitrary BSCC of $S'$---e.g. a circle. Observe that $|S'| = |sub(\phi)|$, hence in particular finite, even though the individual $S/\xi$ might be infinite.

    Now, let $\xi \in sub(\phi)$, and $T \in S'$ non-empty, such that for all $t \in T$, $M,t \models \xi$. Then, we will show that $M',T \models \xi$. This implies that $M',S/\G{=1}\psi \models \G{=1}\psi$, which is what we actually want to prove. We apply induction over $\xi$. Note that the non-emptyness requirement of $T$ is important for the proof but holds for all $T \in S'$ in this fragment due to Lemma~\ref{lem:G1(Fq,G1)-subformulae}.

    \begin{enumerate}[label=\Roman*]
        \item $\xi=a$ or $\xi=\neg a$. Assume $\xi = a$. Then, for all states $t \in T$, $a \in L(t)$, and therefore $a \in L'(T)$. This yields $M',T \models a$. The case $\xi = \neg a$ is analagous.

        \item $\xi=\zeta \land \vartheta$. For all $t \in T$, $M,t \models \zeta$ and $M,t \models \vartheta$. By the induction hypothesis follows that $M',T \models \zeta$ and $M',T \models \vartheta$. Therefore, $M',T \models \zeta \land \vartheta$.

        \item $\xi=\F{\rhd r}{\zeta}$. According to the definition of $S'$, there is a $S/\zeta \in S'$. Due to Lemma~\ref{lem:G1(Fq,G1)-subformulae}, there is a state $t \in post^*(s)$, such that $M,t \models \zeta$. By the induction hypothesis, $M',S/\zeta \models \zeta$. Since $M'$ is a BSCC by definition, $S/\zeta$ is reached from $T$ almost surely. Therefore, $M,T \models \F{=1}\zeta$.

        \item $\xi=\G{=1}{\zeta}$. By definition, for all $T' \in S'$, $T' \subseteq post^*(s)$. By construction, $M,s \models \G{=1}\zeta$ and therefore, for every $t \in T'$, $M,t \models \zeta$. The induction hypothesis yields that $M',T' \models \zeta$. Therefore, $M',T' \models \G{=1}\zeta$ and in particular $M',T \models \G{=1}\zeta$.\QED
    \end{enumerate}
\end{proof}

\subsection{Satisfiability for $\F q,\G1$}%\label{ss:nodisj}
We extend the notion of subformulae to contain the temporal operators with all possible probabilities: We define the set $sub^*(\phi)$, such that $sub(\phi) \subseteq sub^*(\phi)$ and $\F{\rhd q}\psi \in sub^*(\phi)$ (or $\G{\rhd q}\psi \in sub^*(\phi)$) implies $\F{\rhd' q'}\psi \in sub^*(\phi)$ (or $\G{\rhd' q'}\psi \in sub^*(\phi)$) for all $\rhd' \in \{>,\geq\}, q' \in [0,1] \cap \mathbb{Q}$. Then, for two formulae $\phi, \psi$, let $\psi \prec^* \phi$ iff $\psi \in sub^*(\phi)$ and $\psi \not= \phi$.  Moreover, for a set of formulae $\Phi$, we denote

\begin{align*}
top(\Phi) := \{
    \phi \in \Phi \mid \phi = \F{\rhd q}{\xi} \text{ and for all } \psi \in \Phi. (
            &\phi \not\prec^* \psi \text{ or } \\
            &\psi = \phi \land \xi \text{ or } \\
            &\psi = \phi \lor \xi
        )
\}
\end{align*}

Intuitively, the set $top$ denotes the set of temporal top formulae with \F{} as top-most operator. Let CONSTRUCT\_G\_MODELS($s$) be the \G{}-model construction procedure as presented in the proof of Theorem~\ref{thm:G1(Fq,G1)-size}. Moreover, let PRUNE\_BRANCHES($s$) be a procedure that reduces the number of outgoing transitions from $s$ according to \cite{LICS}. Consider Algorithm~\ref{alg:Fq,G1-normal}.

\begin{algorithm}
    \caption{Normalization for the $\F q, \G 1$ fragment}
    \label{alg:Fq,G1-normal}
    \begin{algorithmic}
        \State REDUCE($\initstate, \emptyset$)

        \Function{REDUCE}{$s, F$}
            \State $I := top(sat(s)) \setminus F$

            \If{$I = \emptyset$}
                \If{$sat(s) = \emptyset$}
                    \State $P'(s,s) := 1$
                \Else
                    \State CONSTRUCT\_G\_MODELS($s$)
                \EndIf
            \EndIf

            \State
            \begin{align*}
                sel(s) := \{t \in post^*(s) \mid \exists \F{\rhd q} \psi \in I.\ (M,t \models \psi \land \forall t' \in pre^*(t) \cap post^*(s).\ M,t' \not\models \psi) \}
            \end{align*}

            \State $p_s := \sum_{t \in sel(s)}{P(s,t)}$

            \State $P'(s,t) :=
                \begin{cases}
                    P^*(s,t)/p_s &\text{ if } t \in sel(s) \\
                    0 &\text{ otherwise}
                \end{cases}$

            \State PRUNE\_BRANCHES($s$)

            \For{$t \in post_{M'}(s)$}
                \State $F' := F \cup \{\F{\rhd' q'}{\psi} \mid \F{\rhd q}{\psi} \in top(sat(s)) \land M,t \models \psi\}$
                \State REDUCE($t, F'$)
            \EndFor
        \EndFunction
    \end{algorithmic}
\end{algorithm}
\todo{Define $sat(s)$.}

Algorithm \ref{alg:Fq,G1-normal} is a reduction function for models. Let $M' := \mathrm{REDUCE}(s_0)$. For an MC, $\sum_{t \in post(s)}{P(s,t)} = 1$ must hold for all $s \in S$. Assuming that PRUNE\_BRANCHES($s$) preserves this property, we can easily verify that $M'$ is an MC, as well.

\[
    \sum_{t \in post_{M'}(s)}{P'(s,t)} = \sum_{t \in sel(s)}{P^*(s,t)/p_s} = 1/p_s \cdot p_s = 1
\]

\begin{lemma}
    \label{lem:cons-prob}
    Let $M$ be a tree MC, $M' := \mathrm{REDUCE}(\initstate)$, $s \in S'$, and $t \in post^*_{M'}(s)$. Then, $P'^*(s,t) \geq P^*(s,t)$.
\end{lemma}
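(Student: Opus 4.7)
The plan is to induct on the length $k$ of some path from $s$ to $t$ in $M'$, leveraging the tree structure of $M$. For the base case $k = 0$, we have $t = s$ and $P'^*(s, s) = 1 = P^*(s, s)$ trivially.

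For the inductive step, pick any $M'$-path $s = t_0, t_1, \ldots, t_k = t$. The key structural observation is that $t_{i+1} \in sel(t_i) \subseteq post^*_M(t_i)$, so each $t_{i+1}$ is a (proper) descendant of $t_i$ in the tree $M$. By transitivity of the descendant relation in a tree, the states $t_0, t_1, \ldots, t_k$ are nested descendants lying, in this order, on the unique $M$-path from $s$ to $t$; consequently
\[
P^*(s, t) \;=\; \prod_{i=0}^{k-1} P^*(t_i, t_{i+1}).
\]
On the other hand, $P'^*(s, t)$ is at least the probability of the single chosen path in $M'$, so
\[
P'^*(s, t) \;\geq\; \prod_{i=0}^{k-1} P'(t_i, t_{i+1}).
\]
By construction $P'(t_i, t_{i+1}) = P^*(t_i, t_{i+1}) / p_{t_i}$, and since $p_{t_i} = \sum_{t' \in sel(t_i)} P^*(t_i, t') \leq 1$, we obtain $P'(t_i, t_{i+1}) \geq P^*(t_i, t_{i+1})$ for each $i$. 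Multiplying these one-step bounds over $i$ yields the lemma.

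The main obstacle I expect is handling PRUNE\_BRANCHES cleanly: this step deletes some successors and redistributes the mass on the rest, so the formal argument needs the fact (imported as a black box from the cited LICS construction) that pruning only increases probabilities on surviving edges. Since the lemma ranges over $t \in post^*_{M'}(s)$, we may always pick a path $t_0, \ldots, t_k$ whose internal edges all survive the pruning, so the one-step inequality $P'(t_i, t_{i+1}) \geq P^*(t_i, t_{i+1})$ is, if anything, strengthened and the induction proceeds unchanged. A secondary, minor, obstacle is that my argument silently uses that such a path in $M'$ exists; this is exactly the assumption $t \in post^*_{M'}(s)$.
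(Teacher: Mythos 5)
Your proof is correct and follows essentially the same route as the paper's: induction along the unique path from $s$ to $t$ in $M'$, combining the one-step bound $P'(u,v)=P^*(u,v)/p_u\geq P^*(u,v)$ (since $p_u\leq 1$) with the factorization of $P^*(s,t)$ along the tree path in $M$. Your explicit remark about PRUNE\_BRANCHES only increasing mass on surviving edges addresses a point the paper's proof leaves implicit, but the core argument is identical.
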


\begin{proof}
    Let $\rho$ be the path leading from $s$ to $t$ in $M'$.  Note that we can assume that $\rho$ is unique, since $M$ is a tree and the reduction preserve this property. We will show the claim by induction over $\rho$.

    \begin{enumerate}[label=\Roman*]
        \item $\rho = st$.
            Then, $P'^*(s,t) = P'(s,t) \overset{def.}{=} P^*(s,t)/p_s \geq P^*(s,t)$. The last inequality follows from the fact that $p_s \in [0,1]$.

        \item $\rho = s\rho't$, where $len(\rho') > 0$.  Let $s' := \rho'[0]$.

            \begin{align*}
                P'^*(s,t) &= P'^*(s,s') \cdot P'^*(s',t) \\
                    &= P'(s,s') \cdot P'^*(s',t) \\
                    &\overset{def.}{=} P^*(s,s')/p_s \cdot P'^*(s',t) \\
                    &\geq P^*(s,s') \cdot P'^*(s',t) \\
                    &\overset{I.H.}{\geq} P^*(s,s') \cdot P^*(s',t) \\
                    &= P^*(s,t) \QED
            \end{align*}
    \end{enumerate}
\end{proof}

\begin{lemma}
    \label{lem:Fq,G1-normal-size}
    Let $M$ be a tree MC. Then, the height of $M' := \mathrm{REDUCE}(\initstate,\emptyset)$ is bounded by $|\phi|$.
\end{lemma}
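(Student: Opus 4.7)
The plan is to establish the bound by arguing that every root-to-leaf branch in the recursion tree of REDUCE decomposes into an outer tree portion whose length is at most the number of non-nested F-subformulae of $\phi$, followed by a terminal circle built by CONSTRUCT\_G\_MODELS whose height is at most the size of the G-subformulae still to be enforced. The crux is that the two contributions charge disjoint syntactic portions of $\phi$, so their sum stays bounded by $|\phi|$.

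First I would handle the tree portion. Consider two consecutive non-leaf calls REDUCE($s,F$) and REDUCE($t,F'$) with $t \in post_{M'}(s)$. Since $I = top(sat(s)) \setminus F$ is non-empty, the inclusion $t \in post_{M'}(s) \subseteq sel(s)$ forces the existence of a top-level subformula $\F{\rhd q}\psi \in I$ such that $M,t \models \psi$. The update rule then adds $\F{\rhd q}\psi$ (and in fact all its probabilistic variants) to $F'$, while $\F{\rhd q}\psi \notin F$ by the choice of $I$. Hence, along any branch, the monotone chain $F \subseteq F' \subseteq F'' \subseteq \cdots$ strictly absorbs at least one new non-nested F-subformula of $\phi$ per step. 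Since $\phi$ has only finitely many such subformulae, the branch must reach the saturated regime $I = \emptyset$ after at most $|\mathrm{NF}(\phi)|$ steps, where $\mathrm{NF}(\phi)$ denotes the set of non-nested F-subformulae.

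Next I would analyze the leaf. When $I = \emptyset$ at a state $s$, the only remaining obligations in $sat(s)$ are G-subformulae (and their contents). CONSTRUCT\_G\_MODELS($s$) then realizes the construction of Theorem~\ref{thm:G1(Fq,G1)-size}, producing a circle-shaped BSCC whose size---and thus its height as a tree with back edges---is bounded by the combined size of the G-subformulae active at $s$. In the special subcase $sat(s) = \emptyset$, the algorithm simply inserts a self-loop, contributing zero to the height.

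The main obstacle I expect is the disjoint-portion accounting in the final combination. One has to argue that the non-nested F-subformulae consumed along a branch and the G-subformulae activated at that branch's leaf correspond to syntactically disjoint positions in $\phi$: each witnessed $\F{\rhd q}\psi$ contributes at least its outer F-symbol, which is not counted inside $\psi$, while the G-subformulae enforced at the leaf lie strictly within the scope of some such $\psi$, with their own symbols contributing separately to $|\phi|$. A careful inductive bookkeeping on $\phi$ then yields that the sum of both contributions along any branch is at most $|\phi|$, giving the claimed bound on the height of $M'$.
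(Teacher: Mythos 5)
Your two-phase strategy (bound the tree portion by counting consumed \F{}-subformulae, then bound the terminal circle via Theorem~\ref{thm:G1(Fq,G1)-size}) has the same skeleton as the paper's proof, but the step you yourself flag as ``the main obstacle''---the disjoint-portion accounting that combines the two contributions into a single $|\phi|$ bound---is precisely where the content of the lemma lies, and you leave it as a promissory note. It is also less clean than you suggest: $top(sat(s))$ is defined relative to the \emph{current} set $sat(s)$, not relative to $\phi$, so an \F{}-subformula nested inside another \F{}-subformula of $\phi$ can become top-level at a deeper state once its enclosing formula has left $sat$ or been absorbed into $F$ (the update adds to $F'$ only the probabilistic variants of the \emph{outer} formula, not of its argument). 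Hence the branch length is not bounded by the number of non-nested \F{}-subformulae as stated, and the syntactic positions charged by the branch and by the leaf \G{}-models are not obviously disjoint; making this precise requires exactly the careful induction you defer.

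The paper avoids the additive accounting altogether with a single potential function: it shows by induction on $n_{s,F} := |sub(top(sat(s))\setminus F)|$ that the height of the subtree produced by $\mathrm{REDUCE}(s,F)$ is bounded by $k_{s,F} := |sub(max(sat(s))\setminus F)|$, where $max$ takes the $\prec^*$-maximal elements. This one quantity does both jobs: in the base case ($I=\emptyset$) it dominates the size of the circle built by CONSTRUCT\_G\_MODELS (via Theorem~\ref{thm:G1(Fq,G1)-size}, using that $\G{=1}$ distributes over conjunction so a single \G{}-formula may be assumed), and in the inductive case it strictly decreases along every edge of $M'$ because the witnessed $\F{\rhd q}{\psi} \in I$ enters $F'$. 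Since $k_{\initstate,\emptyset} = |sub(\phi)| \leq |\phi|$, the bound follows without ever arguing that the two phases charge disjoint parts of $\phi$. To salvage your additive version you would have to prove that every formula consumed on a branch and every \G{}-formula realized at its leaf are pairwise distinct elements of $sub(\phi)$; the potential-function formulation is the cleaner route to the same conclusion.
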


\begin{proof}
    For a set of formulae $\Phi$, let $max(\Phi) := \{ \psi \in \Phi \mid \nexists \xi \in \Phi.\ \psi \prec^* \xi \}$. This is the set of maximal elements w.r.t $\prec^*$. Let $s \in S'$. We apply induction over $n_{s,F} := |sub(top(sat(s)) \setminus F)|$ and show that the height is bounded by $k_{s,F}:= |sub(max(sat(s))\setminus F)|$. Note that if $\phi \in sat(s)$, then $max(sat(s)) = \{\phi\}$. Moreover, we assume w.l.o.g. that $t \in post(s)$ implies $sub(sat(t)) \subseteq sub(sat(s))$.

    \begin{enumerate}[label=\Roman*]
        \item $n=0$. Then, $top(sat(s)) \setminus F = \emptyset$. This might be due to several reasons. Firstly, it might be the case that $sat(s)=\emptyset$. If this is the case, then a self loop is produced, which we consider to be a tree of size $0 \leq k_{s,F}$. If $sat(s)\not=\emptyset$, then we know at least that there are no formulae left in $top(sat(s))$ which are not in $F$. In that case, we construct the models for the \G{}-formulae. By Theorem~\ref{thm:G1(Fq,G1)-size} we know that those are bounded by the size of the \G{}-formula (we can assume that there is only one because $\G{=1}$ distributes over conjunctions---see Appendix~\ref{app:eqs} for the proof). Therefore, the result is again of height bounded by $k_{s,F}$.

        \item $n>0$. In that case, we directly connect $s$ with $sel(s)$, where $sel(s)$ enforces the satisfaction of the argument of at least one \F{}-formula in $I$. Let $t \in post_{M'}(s)$ and $M,t \models \psi$, where $\F{\rhd q}\psi \in I$. Then, $\F{\rhd' q'}\psi$ is added to $F'$. Hence, $F \subset F'$ and $n_{t,F'} < n_{s,F}$. By the induction hypothesis follows that the height from $t$ is bounded by $k_{t,F'} < k_{s,F}$. Therefore, the height from $s$ is bounded by $k_{s,F}$.
    \end{enumerate}
\QED
\end{proof}

\begin{lemma}
    \label{lem:Fq,G1-normal}
    Let $\phi$ be a $\F q,\G 1$ formula, and $M$ a model for $\phi$. Then, $M' := \mathrm{REDUCE}(\initstate,\emptyset)$ is also a model for $\phi$.
\end{lemma}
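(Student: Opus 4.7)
The plan is to prove by structural induction on $\psi \in sub(\phi)$ the strengthened claim: for every state $s$ of $M'$ inherited from $M$, if $M, s \models \psi$ then $M', s \models \psi$. The lemma then follows by taking $\psi = \phi$ and $s = \initstate$. By Lemma~\ref{lem:tree} we may assume $M$ is its unfolding, hence a tree; this both gives a unique $M$-path between any two related states (used implicitly by the algorithm and by Lemma~\ref{lem:cons-prob}) and matches the tree-with-back-edges shape produced by REDUCE.

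For literals $a$ and $\neg a$ the labeling is preserved by the transformation, so the claim is immediate; conjunctions follow from the induction hypothesis applied to each conjunct. For $\psi = \G{=1}\xi$, every $M$-reachable state from $s$ satisfies $\xi$. The $M'$-reachable states from $s$ are of two kinds: original states $t \in post^*_M(s)$, which satisfy $\xi$ in $M$ and hence in $M'$ by the induction hypothesis; and fresh states produced by CONSTRUCT\_G\_MODELS, whose BSCCs are engineered through Theorem~\ref{thm:G1(Fq,G1)-size} to satisfy every \G{}-subformula of the state that triggered the call, in particular $\xi$. As all $M'$-paths from $s$ remain within such states, $M', s \models \G{=1}\xi$.

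The key case is $\psi = \F{\rhd q}\xi$. Fix $s$ with $M, s \models \F{\rhd q}\xi$ and consider the call REDUCE$(s, F)$ responsible for $s$. If $\F{\rhd q}\xi \in top(sat(s)) \setminus F$, then by construction $sel(s)$ contains a first-hit set of $\xi$-witnesses along the $M$-paths from $s$; Lemma~\ref{lem:cons-prob} then yields that the $M'$-probability to reach $\{ t : M, t \models \xi \}$ from $s$ is at least the corresponding $M$-probability and hence $\geq q$, while the induction hypothesis on $\xi$ lifts satisfaction from $M$ to $M'$. If $\F{\rhd q}\xi$ is strictly nested inside a larger subformula of $sat(s)$, its satisfaction is delegated to a deeper call---either to some $t \in sel(s)$ whose subtree realises the inner \F{} in the same way, or to the BSCC built by CONSTRUCT\_G\_MODELS, which covers every \F{}-formula nested inside a required \G{}. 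If $\F{\rhd q}\xi \in F$, then $\xi$ was already satisfied at some ancestor $s' \in pre^*(s)$ in $M$; by the induction hypothesis $M', s' \models \xi$, so $\F{\rhd q}\xi$ holds at $s'$ in $M'$, which is precisely the node where this requirement was introduced along the recursion.

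The main obstacle is the bookkeeping around $F$ and the distinction between ``semantically required at $s$'' and ``merely present in $sat(s)$''. The proof has to articulate an invariant over the recursion tree of REDUCE which specifies exactly which subformulae must be established in $M'$ at each visited state, and then verify that the choice of $sel(s)$ and the propagation rule for $F$ preserve this invariant at each call. Once that invariant is in place, the quantitative part collapses to a routine application of Lemma~\ref{lem:cons-prob}, and one must additionally check that PRUNE\_BRANCHES, applied following \cite{LICS}, does not degrade the reachability probabilities below the thresholds needed in the \F{}-case.
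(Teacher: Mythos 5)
Your proposal follows essentially the same route as the paper: structural induction on subformulae, Lemma~\ref{lem:cons-prob} for the quantitative \F{}-case, and Theorem~\ref{thm:G1(Fq,G1)-size} / Corollary~\ref{cor:G1(Fq,G1)-bsccs} for the \G{}-nested parts. One correction, though: the strengthened claim you open with --- ``for every inherited state $s$, if $M,s\models\psi$ then $M',s\models\psi$'' for \emph{all} $\psi\in sub(\phi)$ --- is false as stated. REDUCE deliberately discards the witnesses of \F{}-formulae that hold at $s$ but are not in $I_s = top(sat(s))\setminus F$ (the set $sel(s)$ only keeps first hits for formulae in $I$), so such a formula can hold at $s$ in $M$ and fail at $s$ in $M'$. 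Your case analysis implicitly respects this, and you correctly flag that the real work is to ``articulate an invariant over the recursion tree,'' but you leave that invariant open. The paper's proof is precisely that invariant: the inductive statement is qualified per case, asserting preservation only for literals, conjunctions, \G{}-formulae, \G{}-nested formulae, and those \F{}-formulae that lie in $I_s$; the conclusion $M',\initstate\models\phi$ then follows because at $\initstate$ the set $F$ is empty, so every top-level \F{}-conjunct of $\phi$ is in $I_{\initstate}$. With that qualification substituted for your blanket claim, your argument matches the paper's; your explicit treatment of the fresh states created by CONSTRUCT\_G\_MODELS in the $\G{=1}$-case and of the reduction to tree models via Lemma~\ref{lem:tree} is in fact slightly more careful than the paper's own write-up.
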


\begin{proof}
    Let $\psi \in sub(\phi)$, and $s \in S'$. Moreover, let $F$ denote the set of non-\G{}-nested \F{}-formulae whose argument has been satisfied on the path from $s_0$ to $s$. We apply induction over $\psi$ to show that if $M,s \models \psi$, then
    \begin{enumerate}
        \item $\psi = a$ implies $M',s \models \psi$,
        \item $\psi = \xi \land \zeta$ implies that the claim holds for $\xi$ and $\zeta$,
        \item $\psi = \F{\rhd q}\xi$ and $\psi \in I_s$ implies $M',s \models \psi$,
        \item $\psi$ is \G{}-nested implies $M',s \models \psi$,
        \item $\psi = \G{=1}\xi$ implies $M',s \models \psi$,
    \end{enumerate}
    
    where $I_s:= top(sat(s)) \setminus F$. From the above conditions follows that $M',s_0 \models \phi$.

    \begin{enumerate}[label=\Roman*]
        \item $\psi = a$. Since $L'(s) = L(s)$, $M',s \models a$ iff $M,s \models a$.

        \item $\psi = \xi \land \zeta$. We assume that $M,s \models \psi$ and therefore $M,s \models \xi$ and $M,s \models \zeta$. By the induction hypothesis, the claim holds for $\xi$ and for $\zeta$.

        \item $\psi = \F{\rhd r}\xi$. If $\psi \in I_s$, then the construction of $M'$ enforces that the first states to satisfy $\xi$ are included on every branch. By Lemma~\ref{lem:cons-prob}, those are reached with at least the probability they are reached with in $M$. Hence, $M,s \models \psi$ implies $M',s \models \psi$. If $\psi$ is \G{}-nested, then the claim follows from Corollary~\ref{cor:G1(Fq,G1)-bsccs} and Lemma~\ref{lem:bsccs}.

        \item $\psi = \G{=1}\xi$. Then, for all $t \in post^*_M(s)$, $M,t \models \xi$. By the induction hypothesis, $M',t \models \xi$. Since $post^*_{M'}(s) \subseteq post^*_M(s)$, $M',s \models \psi$.\QED
    \end{enumerate}
\end{proof}

\begin{reftheorem}{thm:Fq,G1-size}
    A satisfiable, $\F q,\G 1$-formula $\phi$ has a model of size $f(|\phi|)$, for some computable function $f$.
\end{reftheorem}

\begin{proof}
    This follows immediately from Lemmas~\ref{lem:Fq,G1-normal-size} and \ref{lem:Fq,G1-normal}.
\QED\end{proof}

\subsection{Satisfiability for $\G1(\F q,\G1,\vee)$}

\begin{reftheorem}{thm:G1(Fq,G1,v)-size}
    Let $\phi := \G{=1}\psi$, and $M$ a model for $\phi$. Then, there is a model for $\phi$ of linear size in $|\phi|$.
\end{reftheorem}

\begin{proof}
    Let $G := \{ \xi \in sub(\phi) \mid \xi = \G{=1}{\xi} \text{ for some } \xi \}$, and $s \in S$, such that $|G \setminus sat(s)| = min_{t \in S}(|G \setminus sat(t)|)$. As in the proof of Theorem~\ref{thm:G1(Fq,G1)-size}, let $S/\xi := \{ t \in post^*(s) \mid M,t \models \xi \}$, $S' := \{ S/\xi \mid \xi \in sub(\phi) \land S/\xi \not= \emptyset \}$, $L'(T) := \bigcap_{t \in T}(L(t))$, and $P'$ generating a BSCC from $S'$. Note that in the proof of Theorem~\ref{thm:G1(Fq,G1)-size} we did not include the non-emptyness requirement on $S/\xi$. This was due to the fact that every such set was non-empty due to Lemma~\ref{lem:G1(Fq,G1)-subformulae}. Here, this is not the case because of the disjunctions. However, if for some subformula, there is no state in $M$ that models it, then we can safely ignore it.

    Now, we again prove that for $\xi \in sub(\phi)$, and $T \in S'$, where for all $t \in T$, $M,t \models \xi$, it holds that $M',T \models \xi$. The proof works by induction over $\xi$. All steps are essentially the same as in the proof of Theorem~\ref{thm:G1(Fq,G1)-size}. The only additional case is $\xi = \zeta \lor \vartheta$. However, the proof in this case is analagous to the case $\xi = \zeta \lor \vartheta$.
\QED
\end{proof}

\subsection{Satisfiability for $\F{q/1},\G1,\vee$}%\label{ss:noqf}
\begin{align*}
    &\Phi ::= a \mid
             \neg a \mid
             \Phi \land \Phi \mid
             \Phi \lor \Phi \mid
             \PF{\Phi}{\rhd r} \mid
             \PG{\Psi}{=1} \\
    &\Psi ::= a \mid
             \neg a \mid
             \Psi \land \Psi \mid
             \Psi \lor \Psi \mid
             \PF{\Psi}{=1} \mid
             \PG{\Psi}{=1}
\end{align*}

\paragraph*{Construction of reduced models}
Let $\phi$ be a $\F {q/1},\G 1,\vee$ formula, and $M$ be a model for $\phi$. Further, let $M' := \mathrm{REDUCE}(\initstate, \emptyset)$. We have already shown that $M'$ is limited in size. Moreover, in the conjunctive setting we have also shown that $M'$ is still a model for $\phi$. In this setting, this might not be the case anymore. The reason is that we cannot guarantee that all formulae that are nested in \G{}s are necessarily satisfied in all BSCCs. Therefore, we have to take care of those, as well.

For this, we will use $M$, in order to expand $M'$, such that $M',\initstate \models \phi$. In a sense, we will learn from the states that we omitted in $M'$. Let $s \in S$, and $t \in post_{M'}(s)$. We denote $\rho^M_{st}$ for the unique path leading from $s$ to $t$ in $M$. Slightly abusing notation, we denote

$$
    \rho^M_{st}[\psi] := \begin{cases}
        \rho^M_{st}[i] &\text{ if } M,\rho^M_{st}[i] \models \psi \text{ and } \nexists j>i. M,\rho^M_{st}[j] \models \psi\\
        \mathtt{undefined} &\text{ otherwise}
    \end{cases}
$$

for the last state on $\rho^M_{st}$ that satisfies $\psi$. Using this, we can determine which states we need to add to $M'$ in order to obtain a model, namely:

$$
    T(\rho^M_{st}) := \{\rho^M_{st}[\psi] \mid \exists \G{=1}{\xi} \in sub(\phi).\ \F{=1}{\psi} \in sub(\xi)\}
$$

Now, we can define our model $\tilde{M}$, with

$$
\tilde{S} := S' \cup \bigcup_{\substack{s \in S' \\ t \in post_{M'}(s)}}{T(\rho^M_{st})}
$$

$$
\tilde{L} := L|_{\tilde{S}}
$$

$$
    \tilde{P}(s,t) := \begin{cases}
        P'(s,t) &\text{if } s,t \in S' \text{ and } T(\rho^M_{st})=\emptyset\\
        P'(s,t') &\text{if } t \in T(\rho^M_{st'}) \text{ and }
        T(\rho^M_{st'}) \cap pre^*_M(t) = \emptyset \\
        1 &\text{if } s,t \in T(\rho^M_{s't'}), s \in pre^*_M(t)
        \text{ and }\\
          &T(\rho^M_{s't'}) \cap post^*_M(s) \cap pre^*_M(t) = \emptyset \\
        1 &\text{if } s \in T(\rho^M_{s't}) \text{ and }
            T(\rho^M_{s't}) \cap post^*_M(s) = \emptyset \\
        0 &\text{otherwise }
    \end{cases}
$$

\paragraph*{Proof of correctness}
$\tilde{M}$ extends $M'$ by simple chains of states which satisfy the arguments of \G{}-nested \F{}-formulae. The construction of $\tilde{P}$ preserves the probabilities in a certain sense, as the following lemma states.

\begin{lemma}
    \label{lem:Fq,G1,v-cons-prob}
    Let $s,t \in S'$. Then, $\tilde{P}^*(s,t) = P'^*(s,t)$.
\end{lemma}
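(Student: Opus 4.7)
The plan is to decompose the reachability probability along the unique path in $M'$ from $s$ to $t$ and to verify, edge by edge, that the chain of fresh states inserted by the construction transports the whole of that edge's probability from one $S'$-endpoint to the other without any loss or branching. Concretely, I first prove the single-edge claim: for every $u,v \in S'$ with $v \in post_{M'}(u)$, $\tilde P^*(u,v) = P'(u,v)$. Given this, an induction on the length $n$ of the unique $M'$-path $s = u_0, u_1, \ldots, u_n = t$ yields
\[
    \tilde P^*(s, t) \;=\; \prod_{i=0}^{n-1} \tilde P^*(u_i, u_{i+1}) \;=\; \prod_{i=0}^{n-1} P'(u_i, u_{i+1}) \;=\; P'^*(s, t),
\]
using that the chains inserted for distinct $M'$-edges occupy disjoint sets of fresh states and therefore independently contribute to the product of probabilities. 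Uniqueness of the $M'$-path, in turn, comes from the tree-with-BSCC shape of $M'$; within a BSCC, no chains are inserted (as CONSTRUCT\_G\_MODELS already satisfies the \G{}-nested subformulae there) and the claim reduces to $P'^* = P'^*$ trivially.

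For the single-edge claim I distinguish the two cases of the construction of $\tilde P$. If $T(\rho^M_{uv}) = \emptyset$, the first clause gives $\tilde P(u,v) = P'(u,v)$ directly and no intermediate states are inserted, so $\tilde P^*(u,v) = P'(u,v)$. Otherwise, let $s_1, \ldots, s_k$ enumerate $T(\rho^M_{uv})$ in the order in which they appear along $\rho^M_{uv}$. The second clause of the definition places $\tilde P(u, s_1) = P'(u, v)$; the third clause places $\tilde P(s_i, s_{i+1}) = 1$ for every $1 \le i < k$; and the fourth clause places $\tilde P(s_k, v) = 1$. Each $s_i$ has no other outgoing $\tilde P$-transitions, so the unique $\tilde M$-route from $u$ to $v$ runs through the chain $s_1, \ldots, s_k$ and yields $\tilde P^*(u, v) = P'(u, v) \cdot 1 \cdots 1 = P'(u, v)$, as required.

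The main bookkeeping obstacle lies in the hidden combinatorial facts behind the above argument: each inserted state really has a unique $\tilde P$-successor, and the chains inserted for different $M'$-edges are disjoint. Both of these follow from the tree shape of the unfolded $M$ (Lemma~\ref{lem:tree}), combined with the observation that $T(\rho^M_{uv})$ picks at most one representative per \G{}-nested $\F{=1}\psi$-subformula along a single $M$-segment. Since distinct $M'$-edges $(u,v)$ and $(u',v')$ correspond to $M$-paths $\rho^M_{uv}$ and $\rho^M_{u'v'}$ with disjoint interiors in the tree, the sets $T(\rho^M_{uv})$ and $T(\rho^M_{u'v'})$ share no vertex, and the four clauses of the definition of $\tilde P$ unambiguously assign a single outgoing transition to every state of each chain. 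Once this is verified, the multiplicative decomposition of path probabilities in the induction step is immediate.
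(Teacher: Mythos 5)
Your proof is correct and follows essentially the same route as the paper's: both decompose the reachability probability along the unique $M'$-path and show, for each single edge $(u,v)$, that the inserted chain receives the full mass $P'(u,v)$ at its head and forwards it with probability $1$ through to $v$, the paper packaging the single-edge argument as the base case of an induction on the path rather than as a separate claim. The only point where you go slightly beyond the paper is the explicit disjointness claim for chains of distinct $M'$-edges; note that two edges leaving the same $u$ yield $M$-paths sharing a prefix, so strictly one should take fresh copies of the states in $T(\rho^M_{uv})$ per edge (an issue the paper's construction also leaves implicit), but this does not affect the probability computation.
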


\begin{proof}
    We apply induction over $\rho^{M'}_{st}$.

    \begin{enumerate}[label=\Roman*]
        \item $\rho^{M'}_{st} = st$. We have to show that $\tilde{P}^*(s,t) = P'(s,t)$. If there are no states in between $s$ and $t$ in $\tilde{M}$, then by definition $\tilde{P}(s,t) = P'(s,t)$. Otherwise, let $s' \in post_{\tilde{M}}(s) \cap pre^*_{\tilde{M}}(t)$ be the first state to be inserted between $s$ and $t$. Then, $\tilde{P}(s,s') = P'(s,t)$. For all states $t' \in post^*_{\tilde{M}}(s') \cap pre^*_{\tilde{M}}(t)$, the construction guarantees that $\tilde{P}^*(s',t') = 1$ and then also that $\tilde{P}^*(s',t) = 1$. Hence, $\tilde{P}^*(s,t) = P'(s,t)$.

        \item $\rho^{M'}_{st} = s \rho t$, with $|\rho| > 0$. Let $s' := \rho[0]$. By the induction hypothesis, $\tilde{P}^*(s',t) = P'^*(s',t)$. Moreover, applying the same argument as in the base case we obtain that $\tilde{P}^*(s,s') = P'(s,s')$. Therefore, $\tilde{P}^*(s,t) = \tilde{P}^*(s,s') \cdot \tilde{P}^*(s',t) = P'(s,s') \cdot P'(s',t) = P'^*(s,t)$. \QED
    \end{enumerate}
\end{proof}

\begin{reftheorem}{thm:Fq1,G1,v-size}
    For a satisfiable $\F {q/1},\G 1,\vee$-formula $\phi$, there is a model of size $f(|\phi|)$, where $f$ is a computable function.
\end{reftheorem}

\begin{proof}
    We claim that $\tilde{M},\initstate \models \phi$. Let $\psi \in sub(\phi)$, and $s \in \tilde{S}$, such that $M,s \models \psi$. We show that
    \begin{itemize}
        \item if $\psi$ is \G{}-nested, then $\tilde{M},s \models \psi$, and
        \item otherwise if additionally $s \in S'$, then $\tilde{M},s \models \psi$.
    \end{itemize}
    We apply induction over $\psi$.

    \begin{enumerate}[label=\Roman*]
        \item $\psi = a$. Since $\tilde{L}(s) = L(s)$, $\tilde{M},s \models \psi$.

        \item $\psi = \xi \land \zeta$. Then, $M,s \models \xi$ and $M,s \models \zeta$. By the induction hypothesis, $\tilde{M},s \models \xi$ and $\tilde{M},s \models \zeta$. Hence, $\tilde{M},s \models \xi \land \zeta$. This argument works irrespective of whether $\psi$ is \G{}-nested or not.

        \item $\psi = \xi \lor \zeta$. Analagous to the previous case.

        \item $\psi = \F{\rhd q}\xi$. Here, we have to distinguish whether $\psi$ is \G{}-nested or not. If it is not, then $\tilde{M},s \models \F{\rhd q}\xi$ follows from Lemma~\ref{lem:Fq,G1,v-cons-prob} and a similar argument as in the proof of Theorem~\ref{thm:Fq,G1-size}. Otherwise, the constraints on \G{}-nested \F{}s enforce that $\psi = \F{=1}\xi$. Hence, we know that we can find $\xi$ on almost all paths from $s$. The construction of $\tilde{M}$ guarantees that we include such states on every path. Hence, $\tilde{M},s \models \F{=1}\xi$.

        \item $\psi = \G{=1}\xi$. First, observe that the construction of $\tilde{M}$ preserves the order of states; i.e. for $s,t \in \tilde{S}$ with $t \in post^*_[\tilde{M}](s)$ it is also the case that $t \in post^*_M(s)$. We know that for all $t \in post^*_M(s)$, $M,t \models \xi$. Then, by the induction hypothesis it also follows that $\tilde{M},t \models \xi$. Therefore, $\tilde{M},s \models \G{=1}\xi$.
    \end{enumerate}
    
    Now, we still have to argue that the size of $\tilde{M}$ is bounded. We know that the size of $M'$ is bounded. In $\tilde{M}$ we inserted each \G{}-nested \F{}-formula at most once in between every two states of $M'$. Hence, the height is again bounded.
\QED\end{proof}

\section{Proof of Theorem~\ref{thm:G1(Fq,G1)-normal}}
\label{app:eqs}
In this section, we will use a slightly different notion of models. It is equivalent to the one that we have defined earlier. Let $\mathcal{F}$ be the set of PCTL formulae. Then, we define a labelled MC as usual, except that $L : S \rightarrow 2^{\mathcal{F}}$. Then, we define the PCTL semantics as follows:

\begin{definition}[PCTL Semantics]
    Let $M$ be a labelled MC and $s \in S$. We define the modeling relation $\models$ as follows

    \begin{enumerate}[label=(MS\arabic*),align=left]
        \item \label{def:model.a} $M, s \models a$ iff $a \in L(s)$.
        \item \label{def:model.neg} $M, s \models \neg \phi$ iff $M,s
            \not \models \phi$.
        \item \label{def:model.and} $M, s \models \phi \land \psi$ iff
            $M,s \models \phi$ and $M,s \models \psi$.
        \item \label{def:model.or} $M,s \models \phi \lor \psi$ iff $M,s \models
            \phi$ or $M,s \models \psi$.
        \item \label{def:model.F} $M, s \models \F{\rhd r}{\phi}$ iff
            $\pr_{M(s)}(\{ \pi \mid \exists i \in \mathbb{N}_0: M,\pi[i] \models
            \phi \}) \rhd r$
        \item \label{def:model.G} $M, s \models \G{\rhd r}{\phi}$ iff
            $\pr_{M(s)}(\{ \pi \mid \forall i \in \mathbb{N}_0: M,\pi[i] \models
            \phi \}) \rhd r$
    \end{enumerate}

    $M$ is a model, if for all formulae $\xi$, and all states $t \in S$, $\xi \in L(t)$ iff $M,t \models \xi$.
\end{definition}

\begin{lemma}
    \label{lem:FG-eqs}
    \begin{align}
        \G{=1}\G{=1} \psi & \equiv 
            \G{=1} \psi \label{eq:GG} \\
        \G{=1}{\F{\rhd r}{\psi}} & \equiv_{fin}
            \G{=1}{\F{=1}{\psi}} \label{eq:Gf} \\
        \F{=1}{\F{\rhd r}{\psi}} & \equiv 
            \F{\rhd r}{\psi} \label{eq:Ff} \\
        \F{=1}{\G{=1}{\F{=1}{\psi}}} & \equiv
            \G{=1}{\F{=1}{\psi}} \label{eq:FGF} \\
        \G{=1}{\F{=1}{\G{=1}{\psi}}} & \equiv
            \F{=1}{\G{=1}{\psi}} \label{eq:GFG}
    \end{align}
\end{lemma}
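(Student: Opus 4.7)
My plan is to prove the five equivalences in stages, bootstrapping from two elementary observations. First I would record: (F1) $\G{=1}\phi \Rightarrow \phi$, since the universal quantification over $i \in \mathbb{N}_0$ in the semantics of $\G$ for almost all paths forces $\phi$ at position $i=0$, which is the current state itself; and (F2) if $M,s \models \G{=1}\phi$, then $M,u \models \phi$ for every $u \in \mathit{post}^*(s)$ reachable with positive probability, by a short Markov-property argument: any positive-probability violation of $\phi$ at $u$ would propagate back to a positive-probability violation of $\G\phi$ at $s$.

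With (F1) and (F2) in hand, equivalence (1) is immediate: $\Rightarrow$ is (F1) applied to the outer $\G{=1}$, while $\Leftarrow$ uses (F2) to observe that every positive-probability reachable state of $s$ itself satisfies $\G{=1}\psi$, giving $\G{=1}\G{=1}\psi$ at $s$. I would handle (5) analogously: $\Rightarrow$ is (F1), and for $\Leftarrow$ I argue that every $u$ reached from $s$ with positive probability must satisfy $\F{=1}\G{=1}\psi$ (either $u$ is downstream of a $\G{=1}\psi$-state and the claim is trivial, or every prefix to $u$ avoids such states and the residual measure from $u$ must still hit them almost surely), yielding $\G{=1}\F{=1}\G{=1}\psi$ at $s$. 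For (4), the $\Leftarrow$ direction is trivial (the current state witnesses the outermost $\F{=1}$); for $\Rightarrow$ I would run the same positive-probability argument to show every reachable $u$ satisfies $\F{=1}\G{=1}\F{=1}\psi$, then combine (F1) with monotonicity of $\F{=1}$ to get $u \models \F{=1}\F{=1}\psi$, and finally invoke (3) to conclude $u \models \F{=1}\psi$, hence $\G{=1}\F{=1}\psi$ at $s$.

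For equivalence (3), the $\Leftarrow$ direction is trivial since $s$ itself witnesses the outer $\F{=1}$. For $\Rightarrow$ I would let $T$ be the first hitting time of a state satisfying $\F{\rhd r}\psi$, which is finite almost surely by assumption, and apply the strong Markov property to decompose
\[
\pr_{M(s)}(\exists i.\ \pi[i] \models \psi) \;=\; \sum_t \pr_{M(s)}(\pi[T]=t) \cdot \pr_{M(t)}(\exists i.\ \pi[i] \models \psi)
\]
over the support of $\pi[T]$. Each conditional factor satisfies $\rhd r$ by the choice of $T$, and the weighted sum inherits the relation: when $\rhd$ is $\geq$ this is immediate, while when $\rhd$ is $>$ strict inequality survives because every contributing conditional exceeds $r$ by a strictly positive amount and the weights are non-negative with unit total.

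The hard part will be equivalence (2), the zero--one law, where finiteness is genuinely essential. The $\Leftarrow$ direction is immediate from $\F{=1}\psi \Rightarrow \F{\rhd r}\psi$. For $\Rightarrow$ I would invoke Lemma~\ref{lem:bsccs}: in a finite $M$ a BSCC is reached almost surely and every state of a reached BSCC is visited infinitely often almost surely. By (F2) every state $t$ reachable from $s$ with positive probability satisfies $\F{\rhd r}\psi$; in particular this holds throughout every reachable BSCC $C$. For any $t \in C$, the (non-trivial) constraint $\F{\rhd r}\psi$ forces some $u \in C$ with $M,u \models \psi$ to be reached from $t$ with positive probability; recurrence in $C$ then makes $u$ visited infinitely often almost surely from every state of $C$, so $\F{=1}\psi$ holds throughout $C$. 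Transient states reachable from $s$ hit some BSCC almost surely and then hit $\psi$ almost surely, so $\F{=1}\psi$ holds there too, yielding $\G{=1}\F{=1}\psi$ at $s$. The infinite case genuinely fails because the inner $\F{\rhd r}\psi$ can be satisfied with ever-decreasing probabilities along an infinite chain, as in Figure~\ref{fig:infinite-model}, with no BSCC to anchor the recurrence.
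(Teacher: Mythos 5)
Your proof is correct and follows essentially the same route as the paper's: the same two base observations ($\G{=1}\phi\Rightarrow\phi$, $\phi\Rightarrow\F{=1}\phi$, and propagation of $\G{=1}$ to all positive-probability reachable states), the same BSCC/recurrence argument via Lemma~\ref{lem:bsccs} for the finite-model zero--one law \eqref{eq:Gf}, the same first-hitting decomposition for \eqref{eq:Ff}, and the same case split on whether a $\G{=1}\psi$-state has already been passed for \eqref{eq:FGF} and \eqref{eq:GFG}. The remaining differences (the order in which the five claims are derived, and arguing in the standard semantics rather than the paper's labelled-chain formulation) are cosmetic.
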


\begin{proof}
    Let $M$ be a model. First, observe that $\G{=1}{\psi} \in L(s_0)$
    implies that $\psi \in L(s_0)$, whereas $\psi \in L(s_0)$ implies
    that $\F{=1}{\psi} \in L(s_0)$.

    \paragraph*{Equality \eqref{eq:GG}}
    Assume that $\G{=1}{\G{=1}{\psi}} \in L(s_0)$. Then, due to the above
    observation, $\G{=1}{\psi} \in L(s_0)$. Now assume that $\G{=1}{\psi} \in
    L(s_0)$. Then, $\G{=1}{\psi} \in L(s)$, for all $s \in post^*(s_0)$.
    Therefore, $\pr(\{\pi \mid \pi \models \GG{\G{=1}{\psi}}\}) = 1$, which
    implies $\G{=1}{\G{=1}{\psi}} \in L(s_0)$.

    \paragraph*{Equality \eqref{eq:Gf}}
    Assume that $\G{=1}\F{\rhd r}\psi \in L(s_0)$. Then, for all BSCCs
    $T$, and all states $t \in T$, $\G{=1}\F{\rhd r}\psi \in L(t)$,
    and therefore $\F{\rhd r}\psi \in L(t)$. Hence, there must be a
    state $t' \in T$, with $\psi \in L(t')$. Since $T$ is a BSCC, $t'$
    is reached almost surely from every state in $T$. Therefore,
    $\G{=1}\F{=1}\psi \in L(t)$. As we are considering finite models
    only, every run ends up in a BSCC almost surely and thus
    $\G{=1}\F{=1}\psi \in L(s_0)$.  The reverse implication is
    obvious.

    \paragraph*{Equality \eqref{eq:Ff}}
    Assume $\F{=1}{\F{\rhd r}{\psi}} \in L(s_0)$. Then,
    there is a set $T \subseteq post^*(s_0)$, such that for all $t \in
    T$, $\F{\rhd r}{\psi} \in L(t)$ and $\pr(\{ \pi \mid \exists i. \pi[i] \in T
    \}) = 1$. We can compute
    the probability to reach $\psi$ as follows

    $$
    \pr(\{ \pi \mid \pi \models \FF{\psi} \}) \rhd
        \sum_{t \in T}{P^*(s_0,t)} \cdot r = r
    $$

    This means that $\F{\rhd r}{\psi} \in L(s_0)$. The converse
    implication follows immediately from the fact that $\xi \in
    L(s_0)$ implies $\F{=1}{\xi} \in L(s_0)$.

    \paragraph*{Equality \eqref{eq:FGF}}
    Assume $\F{=1}{\G{=1}{\F{=1}{\psi}}} \in L(s_0)$.
    Then, there is a set $T \subseteq post^*(s)$, where for all $t \in
    T$, $\G{=1}{\F{=1}{\psi}} \in L(t)$ and $\pr(\{ \pi \in
    Cyl(s_0) \mid \exists i. \pi[i] \in T \}) = 1$. Thus, for all $t
    \in T$, $\F{=1}{\psi} \in L(t)$. Therefore, $\F{=1}{\F{=1}{\psi}}
    \in L(s_0)$.  Equality \eqref{eq:Ff} yields $\F{=1}{\psi} \in
    L(s_0)$. This argument can be applied to all states in
    $post^*(s_0) \cap pre^*(T)$. Therefore, for all $s \in
    post^*(s_0)$, $\F{=1}{\psi} \in L(s)$ and then
    $\G{=1}{\F{=1}{\psi}} \in L(s_0)$.  The converse implication is
    again due to $\xi \Rightarrow \F{=1}{\xi}$.

    \paragraph*{Equality \eqref{eq:GFG}}
    Assume $\F{=1}{\G{=1}{\psi}} \in L(s_0)$. Then, for
    every state $s \in post^*(s_0)$, either
    $\F{=1}{\G{=1}{\psi}} \in L(s)$ or $\G{=1}{\psi}
    \in L(s)$. In the latter case, however, it also holds that
    $\F{=1}{\G{=1}{\psi}} \in L(s_0)$. Therefore, for every
    state $s \in post^*(s_0)$, $\F{=1}{\G{=1}{\psi}} \in
    L(s)$, and thus $\G{=1}{\F{=1}{\G{=1}{\psi}}} \in
    L(s_0)$.  The converse implication follows from $\G{=1}{\psi}
    \Rightarrow \psi$.\QED

\end{proof}

\begin{lemma}[Distributivity]
    \label{lem:distr}
    \begin{align}
        \F{=1}{(\bigwedge_i {\F{\rhd r_i}{\psi_i}})} & \equiv
            \bigwedge_i {\F{\rhd r_i}{\psi_i}} \label{eq:FCf} \\
        \G{=1}{(\bigwedge_i {\psi_i})} & \equiv
            \bigwedge_i {\G{=1}{\psi_i}} \label{eq:GC} \\
        \F{=1}{\G{=1}{(\bigwedge_i {\psi_i})}} & \equiv
            \bigwedge_i {\F{=1}{\G{=1}{\psi_i}}} \label{eq:FGC} \\
        \G{=1}{\F{=1}(\psi \land \F{\rhd r}{\xi})} &
            \equiv_{fin} \G{=1}(\F{=1}{\psi} \land
            \F{=1}{\xi}) \label{eq:GFCf} \\
        \G{=1}\F{=1}(\psi \land \G{=1}{\xi}) &
            \equiv \G{=1}(\F{=1}{\psi} \land
            \F{=1}{\G{=1}{\xi}}) \label{eq:GFCG}
    \end{align}
    
\end{lemma}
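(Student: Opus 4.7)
The plan is to prove each of the five equivalences separately, using the already-established zero-one laws of Lemma~\ref{lem:FG-eqs} together with two standard facts: $M,s\models\xi$ implies $M,s\models\F{=1}\xi$, and the \emph{persistence of $\G{=1}$}---if $M,s\models\G{=1}\psi$ then $M,t\models\psi$ (and hence $M,t\models\G{=1}\psi$) for every $t\in\post^*(s)$. Because all conjunctions are finite, a finite intersection of probability-one events has probability one, so no subtle measure-theoretic issue arises. In each equivalence, the direction that extracts an outer $\F{=1}$ or $\G{=1}$ past a conjunction is the easy one; the interesting direction is the one that reassembles the conjunction under the outer modality.

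Equivalence~(\ref{eq:FCf}) is immediate: the $\Leftarrow$ direction uses $\xi\Rightarrow\F{=1}\xi$, while the $\Rightarrow$ direction applies~(\ref{eq:Ff}) to each conjunct after observing that $\F{=1}\F{\rhd r_i}\psi_i$ holds at $s_0$ for every $i$. Equivalence~(\ref{eq:GC}) is purely measure-theoretic, since the set of paths whose every position satisfies $\bigwedge_i\psi_i$ equals the finite intersection of the analogous sets for each $\psi_i$. For equivalence~(\ref{eq:FGC}), the $\Rightarrow$ direction rewrites the inner $\G{=1}$ using~(\ref{eq:GC}) and then extracts conjuncts. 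The reassembly direction is the first serious step: letting $T_i := \{t\in\post^*(s_0)\mid M,t\models\G{=1}\psi_i\}$, each $T_i$ is reached with probability one, so almost every path $\pi$ visits every $T_i$; taking $k$ to be the maximum of the first-visit indices, persistence of $\G{=1}$ at each $\pi[i_j]\in T_j$ forces $\psi_j$ to hold at every state in $\post^*(\pi[k])$ for every $j$, so $\pi[k]\models\G{=1}(\bigwedge_i\psi_i)$, which yields $\F{=1}\G{=1}(\bigwedge_i\psi_i)$ at $s_0$.

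Equivalence~(\ref{eq:GFCf}) is the only statement requiring finiteness. In the $\Rightarrow$ direction, evaluating at each $s\in\post^*(s_0)$, $\F{=1}(\psi\land\F{\rhd r}\xi)$ yields both $\F{=1}\psi$ at $s$ and $\F{=1}\F{\rhd r}\xi\equiv\F{\rhd r}\xi$ via~(\ref{eq:Ff}); thus $\G{=1}\F{\rhd r}\xi$ holds at $s_0$, and the finite-model law~(\ref{eq:Gf}) upgrades it to $\G{=1}\F{=1}\xi$, which together with $\G{=1}\F{=1}\psi$ gives the claim via~(\ref{eq:GC}). The converse does not need finiteness: any state $t$ satisfying $\psi$ in $\post^*(s_0)$ still satisfies $\F{=1}\xi$ and hence $\F{\rhd r}\xi$ (since $r\leq 1$), so reaching $\psi$ almost surely means reaching $\psi\land\F{\rhd r}\xi$ almost surely. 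For equivalence~(\ref{eq:GFCG}), the $\Rightarrow$ direction is analogous to~(\ref{eq:FGC}); for $\Leftarrow$, at each $s\in\post^*(s_0)$ one almost surely reaches $t$ with $\G{=1}\xi$, from such $t$ one almost surely reaches $t'$ satisfying $\psi$ (since $\F{=1}\psi$ still holds at $t\in\post^*(s_0)$), and by persistence $t'$ inherits $\G{=1}\xi$, giving $\F{=1}(\psi\land\G{=1}\xi)$.

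The main obstacle I anticipate is the $\Leftarrow$ direction of~(\ref{eq:FGC}): independent almost-sure reachability of several sets $T_i$ must be converted into almost-sure reachability of a single state satisfying \emph{all} the $\G{=1}\psi_i$ simultaneously, and this hinges entirely on $\G{=1}$-witnesses being absorbing under $\post^*$. The same persistence argument drives the reassembly in~(\ref{eq:GFCG}), but it cannot be mimicked for~(\ref{eq:GFCf}), where only $\F$-witnesses are available; this is precisely why the finite-model detour through~(\ref{eq:Gf}) is unavoidable and where the $\equiv_{\mathit{fin}}$ restriction enters the statement.
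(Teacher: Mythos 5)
Your proof is correct and follows essentially the same route as the paper: the easy directions via $\xi\Rightarrow\F{=1}\xi$ and the distribution of $\F{=1}$ over conjunction, the reassembly directions via persistence of $\G{=1}$-witnesses under $\post^*$ (your pathwise max-of-hitting-times argument for~(\ref{eq:FGC}) is the same absorption idea as the paper's intersection of the sets $T_i$, stated a bit more carefully), and the detour through~(\ref{eq:Gf}) as the sole use of finiteness in~(\ref{eq:GFCf}). No gaps.
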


\begin{proof}
    Let $M$ be a model. In general, 
    
    \begin{equation}
        \label{imp:FC}
        \F{=1}{\bigwedge_i{\psi_i}} \Rightarrow
        \bigwedge_i{\F{=1}{\psi_i}}
    \end{equation}

    \paragraph*{Equality \eqref{eq:FCf}}
    From the implication \eqref{imp:FC} follows

    $$
    \F{=1}{\bigwedge_i{\F{\rhd r_i}{\psi_i}}} \Rightarrow
    \bigwedge_i{\F{=1}{\F{\rhd r_i}{\psi_i}}}
    \overset{\eqref{eq:Ff}}{\equiv} \bigwedge_i{\F{\rhd r_i}{\psi_i}}
    $$
    
    The converse implication is clear.

    \paragraph*{Equality \eqref{eq:GC}}
    Assume $\bigwedge_i{\G{=1}{\psi_i}} \in L(s_0)$. Then, for all
    $s \in post^*(s_0)$, and all $i$, $\psi_i \in L(s)$. This implies
    that $\bigwedge_i{\psi_i} \in L(s)$ and therefore
    $\G{=1}{\bigwedge_i{\psi_i}} \in L(s_0)$.

    If, on the other hand, $\G{=1}{\bigwedge_i{\psi_i}} \in
    L(s_0)$, then for every $s \in post^*(s_0)$, $\bigwedge_i{\psi_i}
    \in L(s)$, and hence, for all $i$, $\psi_i \in L(s)$. Thus
    $\bigwedge_i{\G{=1}{\psi_i}} \in L(s_0)$.

    \paragraph*{Equality \eqref{eq:FGC}}
    From equality \eqref{eq:GC} follows that
    $\F{=1}{\G{=1}{\bigwedge_i{\psi_i}}} \equiv
    \F{=1}{\bigwedge_i{\G{=1}{\psi_i}}}$, and from the
    implication \eqref{imp:FC},
    $\F{=1}{\bigwedge_i{\G{=1}{\psi_i}}} \Rightarrow
    \bigwedge_i{\F{=1}{\G{=1}{\psi_i}}}$.

    Now assume that $\bigwedge_i{\F{=1}{\G{=1}{\psi_i}}} \in
    L(s_0)$. Then, for all $i$, there is a set $T_i \subseteq
    post^*(s_0)$, where for all $t \in T_i$, $\G{=1}{\psi_i} \in L(t)$
    and $\pr(\{ \pi \mid \exists j. \pi[j] \in T_i \}) =
    1$. Let $T := \bigcap_i{T_i}$. Then, for all $t \in T$, and all
    $i$, $\G{=1}{\psi_i} \in L(t)$. What is left to show is that
    $\pr(\{ \pi \mid \exists j. \pi[j] \in T \}) = 1$.
    Since for all $s \in post^*(s_0)$, either $\F{=1}{\G{=1}{\psi_i}}
    \in L(s)$ or $\G{=1}{\psi_i} \in L(s)$, the same holds for every
    $T' \subseteq pre^*(T)$, and in particular for every $T_i$. Thus,
    $T$ is reached almost surely.

    \paragraph*{Equality \eqref{eq:GFCf}}
    Assume $\G{=1}(\F{=1}{\psi} \land \F{=1}{\xi}) \in
    L(s_0)$. Then, for all states $s \in post^*(s_0)$,
    $\F{=1}{\psi} \in L(s)$ and $\F{=1}{\xi} \in L(s)$. Let $s'$ be
    such that $\psi \in L(s)$. Then, $\F{=1}{\xi} \in L(s')$ must
    also hold, and therefore $\psi \land \F{=1}{\xi} \in L(s')$.
    Since this is true for all states that satisfy $\psi$, and those
    are reached almost surely from every state, $\G{=1}(\F{=1}(\psi
    \land \F{=1}{\xi})) \in L(s_0)$.

    For the converse implication, we can apply our proven equalities
    to obtain

    \begin{align*}
        \G{=1}(\F{=1}(\psi \land \F{\rhd r}{\xi}))
            &\overset{\eqref{imp:FC}}{\Rightarrow}
        \G{=1}(\F{=1}{\psi} \land \F{=1}{\F{\rhd r}{
            \xi}})\\
            &\overset{\eqref{eq:Ff}}{\equiv}
        \G{=1}(\F{=1}{\psi} \land \F{\rhd r}{\xi})\\
            &\overset{\eqref{eq:GC}}{\equiv}
        \G{=1}(\F{=1}{\psi}) \land \G{=1}(\F{\rhd r}{
            \xi})\\
            &\overset{\eqref{eq:Gf}}{\equiv_{fin}}
        \G{=1}(\F{=1}{\psi}) \land \G{=1}(\F{=1}{
            \xi})\\
            &\overset{\eqref{eq:GC}}{\equiv}
        \G{=1}(\F{=1}{\psi} \land \F{=1}{\xi})
    \end{align*}

    \paragraph*{Equality \eqref{eq:GFCG}}
    Assume $\G{=1}(\F{=1}{\psi} \land
    \F{=1}{\G{=1}{\xi}}) \in L(s_0)$. Then there is a
    set $T \subseteq post^*(s_0)$, such that for all $t \in T$,
    $\G{=1}{\xi} \in L(t)$, and $\pr(\{ \pi \mid
    \exists i. \pi[i] \in T \}) = 1$. Since all successors of $s_0$
    satisfy $\F{=1}{\psi}$, in particular this must be true for
    all $t \in T$. Hence, there must be $T' \subseteq T$, with all
    states satisfying $\psi$ and $\pr(\{ \pi  \mid
    \exists i. \pi[i] \in T' \}) = 1$. Therefore, $\F{=1}(\psi \land
    \G{=1}{\xi}) \in L(s_0)$. This argument can be applied to
    every successor of $s_0$, and we therefore conclude that
    $\G{=1}{\F{=1}(\psi \land \G{=1}{\xi})} \in
    L(s_0)$.  The converse implication follows immediately from the
    implication \eqref{imp:FC}.\QED
\end{proof}

Now we can provide an alternative proof for Theorem~\ref{thm:G1(Fq,G1)-normal}. Recall the theorem statement.

\begin{reftheorem}{thm:G1(Fq,G1)-normal}
    Let $\phi$ be a conjunctive \FGfrag{q}{1}-formula. Then, the following equality holds

    $$
    \G{=1}(\phi) \equiv_{fin}
        \G{=1}(\bigwedge_{l \in A}{l} \land
        \F{=1}{\G{=1}(\bigwedge_{l \in B}{l})} \land
        \bigwedge_{i \in I}\F{=1}({\bigwedge_{l \in C_i}{l}}))
    $$

    For appropriate $A, B, C_i \subset \literals$.
\end{reftheorem}

\begin{proof}
    We apply induction over $\phi$.

    \paragraph*{Case $\phi = l$}
    Then $A := \{l\}$ and the claim holds.

    \paragraph*{Case $\phi \equiv \psi \land \xi$}
    
    \begin{align*}
        &\G{=1}(\psi \land \xi)\\
        &\overset{\eqref{eq:GC}}{\equiv}
            \G{=1}{\psi} \land \G{=1}{\xi}\\
        &\overset{I.H}{\equiv_{fin}}
            \G{=1} (\bigwedge_{l \in A_{\psi}}{l} \land
            \F{=1} \G{=1} (\bigwedge_{l \in B_{\psi}}{l})
            \land \bigwedge_{i \in I_{\psi}}{\F{=1} (\bigwedge_{l
            \in C_{\psi,i}}{l}}))\\
        &\land
            \G{=1} (\bigwedge_{l \in A_{\xi}}{l} \land \F{=1} \G{=1}
            (\bigwedge_{l \in B_{\xi}}{l}) \land \bigwedge_{i \in
            I_{\xi}}{\F{=1} (\bigwedge_{l \in C_{\xi,i}}{l}}))\\
        &\overset{\eqref{eq:GC}}{\equiv}
            \G{=1} (\bigwedge_{l \in A_{\psi} \cup A_{\xi}}{l} \land
            \F{=1} \G{=1} (\bigwedge_{l \in B_{\psi}}{l}) \land \F{=1}
            \G{=1} (\bigwedge_{l \in
            B_\xi}{l})\\
        &\land
            \bigwedge_{i \in I_{\psi}}{\F{=1} (\bigwedge_{l \in
            C_{\psi,i}}{l})} \land \bigwedge_{i \in I_{\xi}}{\F{=1}
            (\bigwedge_{l \in C_{\xi,i}}{l})})\\
        &\overset{\eqref{eq:FGC}}{\equiv}
            \G{=1} (\bigwedge_{l \in A_{\psi} \cup A_{\xi}}{l} \land
            \F{=1} \G{=1} (\bigwedge_{l \in B_{\psi} \cup
            B_{\xi}}{l})\\
        &\land
            \bigwedge_{i \in I_{\psi}}{\F{=1} (\bigwedge_{l \in
            C_{\psi,i}}{l}}) \land \bigwedge_{i \in
            I_{\xi}}{\F{=1} (\bigwedge_{l \in C_{\xi,i}}{l}}))
    \end{align*}

    \paragraph*{Case $\phi = \G{=1}{\psi}$}
    Then from equality \eqref{eq:GG} it follows that

    $$\G{=1}{\G{=1}{\psi}} \equiv \G{=1}{\psi}$$
    
    and thus the claim holds by induction hypothesis.

    \paragraph*{Case $\phi = \F{\rhd r}{\psi}$}
    From equality \eqref{eq:Gf} it follows
    
    $$\G{=1}{\F{\rhd r}{\psi}} \equiv_{fin} \G{=1}{\F{=1}{\psi}}$$
    
    This case will be covered next.

    \paragraph*{Case $\phi = \F{=1}{\psi}$}
    Now we will show that

    $$
    \G{=1}{\F{=1}{\psi}} \equiv
    \G{=1}(\F{=1}\G{=1}(\bigwedge_{l \in B}{l}) \land \bigwedge_{i
        \in I}{\F{=1}(\bigwedge_{l \in C_i}{l})})
    $$

    For this, we will consider several subcases, which results in
    another induction. In order to distinguish between the hypotheses,
    we will refer to the inductions as the \emph{inner} and
    \emph{outer} induction, respectively.

    \textit{Subcase $\psi = l$}
    Setting $C := \{l\}$ yields the claim.

    \textit{Subcase $\psi = \F{\rhd r}{\xi}$}
    Applying Lemma \ref{lem:FG-eqs}, we get

    \begin{align*}
        \G{=1}{\F{=1}{\F{\rhd r}{\xi}}}
        &\overset{\eqref{eq:Ff}}{\equiv}
        \G{=1}{\F{\rhd r}{\xi}} \\
        &\overset{\eqref{eq:Gf}}{\equiv_{fin}}
        \G{=1}{\F{=1}{\xi}}
    \end{align*}

    and the claim holds by inner induction hypothesis.

    \textit{Subcase $\psi = \G{=1}{\xi}$}
    Applying the outer induction hypothesis and Lemmas
    \ref{lem:FG-eqs} and \ref{lem:distr} yields

    \begin{align*}
        &\G{=1}{\F{=1}{\G{=1}{\xi}}}\\
        &\overset{o.I.H}{\equiv_{fin}}
            \G{=1} \F{=1} \G{=1} (\bigwedge_{l \in A}{l}
            \land \F{=1} \G{=1} (\bigwedge_{l \in B}{l})
            \land \bigwedge_{i \in I}{\F{=1} (\bigwedge_{l \in
            C_i}{l}})) \\
        &\overset{\eqref{eq:FGC}}{\equiv}
            \G{=1} (\F{=1} \G{=1} (\bigwedge_{l \in
            A}{l})
            \land \F{=1} \G{=1} \F{=1} \G{=1} (\bigwedge_{l \in B}{l})
            \\
        &\land
            \bigwedge_{i \in I}{\F{=1} \G{=1} \F{=1}
            (\bigwedge_{l \in C_i}{l})}) \\
        &\overset{\substack{\eqref{eq:FGF} \\
        \eqref{eq:GFG}}}{\equiv}
            \G{=1} (\F{=1} \G{=1} (\bigwedge_{l \in
            A}{l})
            \land \F{=1} \G{=1} (\bigwedge_{l \in B}{l})
            \land \bigwedge_{i \in I} {\G{=1} \F{=1} (\bigwedge_{l \in
            C_i}{l})}) \\
        &\overset{\eqref{eq:FGC}}{\equiv}
            \G{=1} (\F{=1} \G{=1} (\bigwedge_{l \in A \cup B}{l})
            \land \bigwedge_{i \in I}{\G{=1} \F{=1} (\bigwedge_{l \in
            C_i}{l}})) \\
        &\overset{\eqref{eq:GC}}{\equiv}
            \G{=1} (\F{=1} \G{=1} (\bigwedge_{l \in A \cup B}{l}))
            \land
            \G{=1} \G{=1} (\bigwedge_{i \in I}{\F{=1} (\bigwedge_{l
            \in C_i}{l})}) \\
        &\overset{\eqref{eq:GG}}{\equiv}
            \G{=1} (\F{=1} \G{=1} (\bigwedge_{l \in A \cup B}{l}))
            \land
            \G{=1} (\bigwedge_{i \in I}{\F{=1} (\bigwedge_{l
            \in C_i}{l})}) \\
        &\overset{\eqref{eq:GC}}{\equiv}
            \G{=1} (\F{=1} \G{=1} (\bigwedge_{l \in A \cup B}{l})
            \land \bigwedge_{i \in I}{\F{=1} (\bigwedge_{l \in
            C_i}{l}})) \\
    \end{align*}

    \textit{$\psi \equiv \bigwedge_i \xi_i$}
    In this case we need to show that although $\F{=1}$ does not
    distribute over conjunctions in general, we still can get a
    formula of the desired form. We can split the conjunction into
    subformulae like this:

    \begin{equation}
        \bigwedge_i {\xi_i} \equiv \bigwedge_{l \in C}{l} \land
        \bigwedge_i {\F{\rhd r_i} \zeta_i} \land \bigwedge_i {\G{=1}
        \vartheta_i}
    \label{eq:C}
    \end{equation}

    Then we can apply the induction hypotheses and the lemmas
    \ref{lem:FG-eqs} and \ref{lem:distr} to obtain:

    \begingroup
    \allowdisplaybreaks
    \begin{align*}
        &\G{=1} \F{=1} (\bigwedge_i {\xi_i}) \\
        &\overset{\eqref{eq:C}}{\equiv}
            \G{=1} \F{=1} (\bigwedge_{l \in C}{l} \land \bigwedge_i
            {\G{=1} \vartheta_i} \land \bigwedge_i {\F{\rhd r}
            \zeta_i}) \\
        &\overset{\eqref{eq:GFCf}}{\equiv}
            \G{=1} (\F{=1} (\bigwedge_{l \in C}{l} \land \bigwedge_i
            {\G{=1} \vartheta_i}) \land \bigwedge_i {\F{=1} \zeta_i})
            \\
        &\overset{\eqref{eq:GC}}{\equiv}
        \G{=1} (\F{=1}(\bigwedge_{l \in C}{l} \land \G{=1} \bigwedge_i
            {\vartheta_i}) \land \bigwedge_i {\F{=1} \zeta_i}) \\
        &\overset{\eqref{eq:GFCG}}{\equiv}
            \G{=1} (\F{=1} (\bigwedge_{l \in C}{l}) \land \F{=1}
            \G{=1} (\bigwedge_i {\vartheta_i}) \land \bigwedge_i
            {\F{=1} \zeta_i}) \\ 
        &\overset{o.I.H}{\equiv_{fin}}
            \G{=1} (\F{=1} (\bigwedge_{l \in C}{l}) \land \bigwedge_i
            {\F{=1} \zeta_i} \\
        &\land
            \F{=1} \G{=1} (\bigwedge_{l \in A_{\vartheta}}{l} \land
            \F{=1} \G{=1} (\bigwedge_{l \in B_{\vartheta}}{l}) \land
            \bigwedge_{i \in I_{\vartheta}}{\F{=1} (\bigwedge_{l \in
            C_{\vartheta, i}}{l})})) \\ 
        &\overset{\eqref{eq:FGC}}{\equiv}
            \G{=1} (\F{=1} (\bigwedge_{l \in C}{l}) \land \bigwedge_i
            {\F{=1} \zeta_i} \land \F{=1} \G{=1} (\bigwedge_{l \in
            A_{\vartheta}}{l}) \\
        &\land
            \F{=1} \G{=1} \F{=1} \G{=1} (\bigwedge_{l \in
            B_{\vartheta}}{l}) \land \bigwedge_{i \in
            I_{\vartheta}}{\F{=1} \G{=1} \F{=1} (\bigwedge_{l \in
            C_{\vartheta, i}}{l})}) \\ 
        & \overset{\substack{\eqref{eq:FGF} \\
        \eqref{eq:GFG}}}{\equiv}
            \G{=1} (\F{=1} (\bigwedge_{l \in C}{l}) \land \bigwedge_i
            {\F{=1} \zeta_i} \land \F{=1} \G{=1} (\bigwedge_{l \in
            A_{\vartheta}}{l}) \\
        &\land
            \F{=1} \G{=1} (\bigwedge_{l \in B_{\vartheta}}{l}) \land
            \bigwedge_{i \in I_{\vartheta}}{\G{=1} \F{=1}
            (\bigwedge_{l \in C_{\vartheta, i}}{l})}) \\ 
        &\overset{\eqref{eq:FGC}}{\equiv}
            \G{=1} (\F{=1} (\bigwedge_{l \in C}{l}) \land \bigwedge_i
            {\F{=1} \zeta_i} \land \F{=1} \G{=1} (\bigwedge_{l \in
            A_{\vartheta} \cup B_{\vartheta}}{l}) \\
        &\land
            \bigwedge_{i \in I_{\vartheta}}{\G{=1} \F{=1}
            (\bigwedge_{l \in C_{\vartheta, i}}{l})}) \\ 
        &\overset{\substack{\eqref{eq:GC} \\ \eqref{eq:GG}}}{\equiv}
            \G{=1} (\bigwedge_i {\F{=1} \zeta_i} \land \F{=1} \G{=1}
            (\bigwedge_{l \in A_{\vartheta} \cup B_{\vartheta}}{l})
            \land \F{=1} (\bigwedge_{l \in C}{l}) \land \bigwedge_{i
            \in I_{\vartheta}}{\F{=1} (\bigwedge_{l \in C_{\vartheta,
            i}}{l})}) \\ 
        &\overset{\substack{\eqref{eq:GC} \\ i.I.H}}{\equiv}
            \G{=1} (\bigwedge_i {(\F{=1} \G{=1} (\bigwedge_{l \in
            B_{\zeta_i}}{l}) \land \bigwedge_{j \in
            I_{\zeta_i}}{\F{=1} (\bigwedge_{l \in C_{\zeta_i,
            j}}{l})})} \\
        &\land \F{=1} \G{=1}
            (\bigwedge_{l \in A_{\vartheta} \cup B_{\vartheta}}{l})
            \land \F{=1} (\bigwedge_{l \in C}{l}) \land \bigwedge_{i
            \in I_{\vartheta}}{\F{=1} (\bigwedge_{l \in C_{\vartheta,
            i}}{l})}) \\ 
        &\overset{\eqref{eq:FGC}}{\equiv}
            \G{=1} (\F{=1} \G{=1} ((\bigwedge_i {\bigwedge_{l \in
            B_{\zeta_i}}{l}}) \land (\bigwedge_{l \in A_{\vartheta}
            \cup B_{\vartheta}}{l})) \land  \\
        &\land
            \bigwedge_i {\bigwedge_{j \in I_{\zeta_i}}{\F{=1}
            (\bigwedge_{l \in C_{\zeta_i, j}}{l})}} \land \F{=1}
            (\bigwedge_{l \in C}{l}) \land \bigwedge_{i \in
            I_{\vartheta}}{\F{=1} (\bigwedge_{l \in C_{\vartheta,
            i}}{l})})
    \end{align*}
    \endgroup
    \QED
\end{proof}

\end{document}